\newcommand{\PP}{\mathbb{P}}
\newcommand{\EE}{\mathbb{E}}
\newcommand{\E}{\text{E}}
\newtheorem{theorem}{Theorem}
\newtheorem{lemma}{Lemma}
\def \R{\mathbb{R}}
\def \E{\mathbb{E}}
\def \1{\mathds{1}}
 \def \1{{\bf 1}}
\newcommand{\m}{\mathbb}
\newcommand{\ALG}{\mathrm{ALG}}
\newcommand{\RPI}{{\normalfont\text{$k$-RPI}}}
\newtheorem{definition}{Definition}
\newtheorem{example}{Example}
\newtheorem{proposition}{Proposition}
\newtheorem{claim}{Claim}
\begin{document}

\title[Residual Prophet Inequalities]{Residual Prophet Inequalities } 


\author[1]{\fnm{Jose} \sur{Correa}}\email{correa@uchile.cl}

\author[2,3]{\fnm{Sebastian} \sur{Perez-Salazar}}\email{sperez@rice.edu}

\author[4]{\fnm{Dana} \sur{Pizarro}}\email{d.pizarro@tbs-education.fr}

\author[5]{\fnm{Bruno} \sur{Ziliotto}}\email{bruno.ziliotto@cnrs.fr}

\affil[1]{{Department of Industrial Engineering}, \orgname{Universidad de Chile}, \orgaddress{ \country{Chile}}}

\affil[2]{\orgdiv{Department of Computational Applied Mathematics and Operations Research}, \orgname{Rice University}, \orgaddress{ \country{USA}}}

\affil[3]{\orgdiv{Ken Kennedy Institute}, \orgname{Rice University}, \orgaddress{ \country{USA}}}

\affil[4]{\orgdiv{Department of Information, Operations and Decision Sciences}, \orgname{TBS Business School}, \orgaddress{ \country{France}}}

\affil[5]{\orgdiv{Toulouse School of Economics}, \orgname{Université Toulouse Capitole, Institut de Mathématiques de Toulouse, CNRS UMR 5219}, \orgaddress{ \country{France}}}
\abstract{We introduce a variant of the classic prophet inequality, called \emph{residual prophet inequality} ($\RPI$). In the $\RPI$
    problem, we consider a finite sequence of $n$ nonnegative independent random values with known distributions, and a known integer $0\leq k\leq n-1$. Before the gambler observes the sequence, the top $k$ values are removed, whereas the remaining $n-k$ values are streamed sequentially to the gambler. For example, one can assume that the top $k$ values have already been allocated to a higher-priority agent. Upon observing a value, the gambler must decide irrevocably whether to accept or reject it, without the possibility of revisiting past values.
    We study two variants of $\RPI$, according to whether the gambler learns online of the identity of the variable that he sees (FI model) or not (NI model).
    Our main result is a randomized algorithm in the FI model with \emph{competitive ratio} of at least $1/(k+2)$, which we show is tight.
    Our algorithm is data-driven and requires access only to the $k+1$ largest values of a single sample from the $n$ input distributions. 
    In the NI model, we provide a similar algorithm that guarantees a competitive ratio of $1/(2k+2)$.  
We further analyze independent and identically distributed instances when $k=1$. We build a single-threshold algorithm with a competitive ratio of at least 0.4901, and show that no single-threshold strategy can get a competitive ratio greater than 0.5464. }

\keywords{Prophet inequalities, Competitive ratio, Online algorithms}



\maketitle

\section{Introduction}
    The prophet inequality is a classical model in optimal stopping theory \citep{krengel1977semiamarts,hill1982comparisons,samuel1984comparison}. In its simplest form, a finite sequence of $n$ independent and nonnegative random variables $X_1,\ldots,X_n$ is observed sequentially by a gambler. Upon observing the $i$-th value $X_i$, the gambler has to irrevocably accept the value and stop the process or reject the value and observe the next value in the sequence, if any. The gambler's goal is to devise an online algorithm that maximizes the expected accepted value. The quality of an algorithm is measured by means of the \emph{competitive ratio} which is the fraction between the expected value obtained by the algorithm and the expected optimal offline value $\E(\max_i X_i)$, the so-called prophet value. The competitive ratio thus measures the loss experienced by a gambler, who inspects the values sequentially, with respect to a prophet who knows the entire sequence of values upfront. Surprisingly, \cite{samuel1984comparison}
showed that a simple single-threshold rule guarantees a competitive ratio of at least $1/2$ and this is tight. Prophet inequalities have received renewed attention due to their applicability in posted price mechanisms and auction theory~\citep{chawla2010multi,correa2019pricing,hajiaghayi2007automated} and have become a cornerstone modeling tool for online algorithms in Bayesian scenarios and resource allocation~\citep{gallego2022constructive,goyal2023online,huang2020online}.

In this work we introduce the \emph{residual prophet inequality} (\RPI) problem: For a fixed integer $0\leq k\leq n-1$, the $k$ 
variables corresponding to the top $k$ realizations in the sequence $X_1,\ldots,X_n$ \emph{are removed before} the gambler observes the sequence. The gambler's goal is to maximize the expected accepted value among the remaining $n-k$ variables. 

The {\RPI} problem can be regarded as a \emph{robust version} of the classical prophet inequality problem (case $k=0$), where high values are impossible to obtain due to exogenous factors. The {\RPI} problem is very general and naturally relates to problems such as the postdoc problem~\citep{vanderbei2012postdoc,rose1982problem}, which have applications in hiring problems~\citep{abels2023prophet,arsenis2022individual,disser2020hiring}. Specifically, one could imagine a gambler attempting to hire an employee in a highly competitive market where the top candidates are hired by leading companies, leaving the gambler to select the best applicant from the remaining pool (see also,~\cite{perez2024robust}). Another related application is in advertising. Several platforms (e.g., YouTube, Spotify, Canva, Pandora) offer both a free version supported by ads and a premium version without ads. Essentially, users paying the premium opt out from observing ads, leaving the platform to focus on advertising the high-value users among the remaining free users.

An interesting aspect of {\RPI} concerns the information structure. Note that since some variables have been removed and the gambler will only observe the remaining ones, two different information models can be considered, depending on whether the gambler knows the identity of the observed variable at each time or not:

\medskip
{
\begin{description}
    \item[Full-information (FI)] In this version, the gambler observes the $n-k$ variables sequentially and upon observing a value, he also observes the identity (index) of the variable. \\

    \item[No-information (NI)] In this version, the gambler only observes the $n-k$ remaining values after removing the largest $k$ values.
\end{description}
}

\medskip


    


Regardless of the information model (FI or NI), the gambler can only hope to accept a value comparable to the expectation of the largest value of the $n-k$ non-removed values. This is the expectation of the $(k+1)$-th largest value in the original sequence of $n$ values, that is, the expectation of the $(k+1)$ order statistics $\mathbb{E}(X_{(k+1)})$.\footnote{We assume that the order statistic of the variables $X_1,\ldots,X_n$ are ordered as $X_{(1)}\geq \cdots \geq X_{(n)}$. Note that this ordering is the reverse of the convention commonly used in the literature.} 

Therefore, this latter value constitutes the \emph{prophet} benchmark against which we will compare the performance of an online algorithm.
Given an information model, the competitive ratio of an algorithm for $\RPI$ is the ratio between the expected value of the algorithm and $\E(X_{(k+1)})$. Hence, a competitive ratio of $\gamma$ for NI $\RPI$ implies a competitive ratio of $\gamma$ for FI $\RPI$.
Likewise, hard instances for FI $\RPI$ imply hard instances for the less informative model NI.

In contrast to the classic prophet inequality problem, the observed values in {\RPI} are correlated. The following example demonstrates that correlation plays a significant role in {\RPI}, rendering the single-threshold solutions from~\cite{samuel1984comparison} and \cite{kleinberg2012matroid} unsuitable for direct application to $\RPI$.\\

\begin{example}\label{ex:bad_thresholds}
Consider the following instance of $\RPI$ with $k=1$, $n=3$, $X_1=1$ with probability (w.p.) 1 and $X_2$ and $X_3$ both independent and identically distributed (i.i.d.) taking value $1/\varepsilon^2$ w.p.  $\varepsilon<1/2$, and $0$ otherwise.

The quantity $\EE(X_{(2)})$ is given by 
\[
\EE(X_{(2)})= \varepsilon^2 \varepsilon^{-2} + 2 \varepsilon \left(1- \varepsilon \right)= 1+ 2\varepsilon-2 \varepsilon^2.
\]

Let us analyze the performance of the strategy with the single-threshold solutions from from~\cite{samuel1984comparison} and \cite{kleinberg2012matroid},  that is  $\EE(X_{(2)})/2$ and the median of $X_{(2)}$. For the former, a case analysis shows that the gambler gets $0$ when $X_2=X_3=0$, which happens with probability  $(1-\varepsilon)^2$ and gets $1$ with the remaining probability. Thus, the gambler gets in expectation:
\[
\EE(Alg)=1-\left( 1- \varepsilon \right)^2= 2 \varepsilon-\varepsilon^2,
\]
and therefore, we have 
\[
\frac{\EE(Alg)}{\EE(X_{(2)})}= \frac{2 \varepsilon-\varepsilon^2}{1+ 2 \varepsilon-2 \varepsilon^2}<\frac{1}{2}. 
\]
Moreover, $\frac{\EE(Alg)}{\EE(X_{(2)})}= \mathcal{O}(\varepsilon)$ and then the gambler cannot guarantee a constant factor of $\EE(X_{(2)})$ using this fixed threshold. Furthermore, note that the median of $X_{(2)}$ is $0$, so using any threshold between the median and $\EE(X_{(2)})$ will not produce a different result. 
In fact, any strategy that accepts value 1 is ineffective because, once the value 1 has been observed, the expectation of the second variable is $1/((2- \varepsilon) \varepsilon)\gg 1$. Such a positive correlation between the two observed variables is what makes classic strategies fail.
\end{example}

\subsection{Results and technical contributions}

The previous example illustrates that correlation plays a major role for {\RPI}. The examples also show that traditional and well-liked thresholds such as the median or the expectation of $X_{(k+1)}$ can be arbitrarily poor choices. This is contrary to the negative correlated case where we can guarantee a competitive ratio of $1/2$~\citep{RC87,RC91,RC92}, as in the independent case. Our first contribution is a new algorithmic approach that bypasses this hardness. 

\paragraph{Main Result [Lower bound on competitive ratio]} We show that in the full information model of $\RPI$, there exists an algorithm with a competitive ratio of at least $1/(k+2)$. 
Our algorithm first samples one value from each input distribution and randomly selects one of the
$k+1$ largest values in the sample. It then uses this value and the identities of the arrivals to perform the online selection.
The randomization is independent of the input, and we note that our algorithm extends the approach of~\cite{RWW20} for the classic prophet inequality problem. We present the details of our algorithm and its analysis in Section~\ref{sec:main}.

Our algorithmic solution for the FI $\RPI$ is robust in some key aspects. On one hand, it works against any arrival order making it highly applicable in online problems. On the other, by construction, it does not need to know the distributions of each variable, but only requires one sample from each variable. This is particularly important for applications in posted price mechanisms, where consumer valuation distributions are typically unknown, and only a limited amount of past sales data is available. 

For the no-information model of the $\RPI$, we prove that there exists a single-threshold strategy with a competitive ratio of at least $1/(2k+2)$. The idea is similar to that of the FI $\RPI$ model, but the algorithm uses only one of the top $k+1$ sample values—selected uniformly at random—as the threshold for making the online selection.
The lack of information regarding the identities of the removed variables leads to a degradation in the competitive guarantee. Nevertheless, this guarantee can be transferred to the FI $\RPI$ model, showing that it is possible to achieve a constant-factor approximation of $\EE(X_{(k+1)})$ using a single-threshold strategy.

Next, we prove that our main result for the FI model is best possible.

\paragraph{Tightness [Upper bound on competitive ratio]} For any information model of $\RPI$, there is no algorithm that has a competitive ratio larger than $1/(k+2)$. To provide this negative result, we construct a hard instance in the FI $\RPI$ model; which will imply the negative result for the model with less information. Our instance extends the hard instance for the classic prophet inequality problem: it consists of a sequence $X_1, \ldots, X_{2(k+1)}$ of two-point distributions, where each $X_i\in \{0, a_i\}$. The values $a_i$ are positive and increase rapidly with $i$, while the events $X_i = a_i$ occur rarely for $i$ larger than $k+1$. 
We provide the details in Section~\ref{subsec:hard_instance}.

Our hard instance unveils that part of the hardness of the $\RPI$ problem stems from the \emph{order} in which the values are observed by the gambler. For every information model of $\RPI$, if the gambler observes the values in random order (RO), then there is an algorithm with a competitive ratio at least $1/e$. The algorithm is a straightforward application of the secretary problem algorithm~\cite{L61,D63,F89,gilbert1966recognizing}. 
Indeed, one can easily see that, in values are presented in random order, even in the NI $\RPI$, the standard algorithm that scans the first $(n-k)/e$ values without picking any and then takes the first value which surpasses all previously seen, guarantees a competitive ratio of at least $1/e$. To see this, let $X_1,\ldots,X_n$ be the random values and let $X_{(k+1)} \geq \cdots \geq X_{(n)}$ be the $n-k$ values observed by the gambler. From the standard analysis for the secretary problem, we are guaranteed that the gambler accepts $X_{(k+1)}$ with probability at least $1/e$. From this, the result follows. 

Our last result is an exploration of the independent and identically distributed (i.i.d.) $\RPI$ problem where $X_1,\ldots,X_n$ are drawn from the same distribution.

\paragraph{Additional result [i.i.d. $\RPI$, $k=1$]} 
The previous observation implies that for i.i.d. random variables and arbitrary $k$, we can always guarantee a factor $1/e$ using the classic secretary algorithm.  This shows a stark difference between $\RPI$ and its i.i.d. counterpart and indeed even for small $k$ this improves upon our tight factor of $1/(k+2)$ for $\RPI$. Therefore, it is interesting to explore the gap between the i.i.d. and the independent versions of the problem, even in the case $k=1$.
We prove that for both information models of 1-RPI, if the values \(X_1, \ldots, X_n\) are i.i.d., then there exists an algorithm with a competitive ratio \(0.4901\). Our algorithm here is more standard. We propose a single-threshold strategy for NI 1-RPI, which determines the threshold \(\tau\) via \(\Pr(X \geq \tau) = q\), where \(q\) is an input quantile. Our analysis follows a quantile-based approach, expressing both the expected value of the algorithm and the optimal value \(\E (X_{(2)})\) as functions of quantiles. By comparing their ratio, we derive a lower bound that depends solely on \(q\). Optimizing over \(q\) yields the desired result. We also establish that no single-threshold strategy can achieve a competitive ratio greater than \(0.5464\) in any information model. This result shows that the optimal competitive ratio of \(1 - 1/e\) for single-threshold strategies~\citep{correa2021posted,hill1982comparisons}, attained when \(k = 0\), cannot be recovered for \(k \geq 1\). We present the details in Section~\ref{sect:iid}.

\subsection{Related Literature}

The prophet inequality problem, as introduced by \citet{krengel1977semiamarts}, was resolved by using a dynamic program that gave a tight approximation ratio of $1/2$. 
\cite{samuel1984comparison} later proved that a single-threshold strategy yields the same guarantee; this also showed that the order in which the variables are observed is immaterial. The renewed interest in prophet inequalities is due to their relevance to auctions, specifically posted priced mechanisms (PPMs) in online sales~\citep{Alaei2011,Chawla2010,Duetting2020,Hajiaghayi2007,Kleinberg2012}. It was implicitly shown by \cite{Chawla2010} and \cite{Hajiaghayi2007} that every prophet-type inequality implies a corresponding approximation guarantee in a PPM, and the converse is true as well~\citep{correa2019pricing}.

The closest work to ours is likely that of~\cite{RWW20}, where the authors used the principle of deferred decision to prove that a single sample from each distribution is sufficient to achieve a competitive ratio of \(1/2\) for the classic prophet inequality. This technique has also been applied to other optimal stopping problems (see, e.g.,~\cite{correa2022two,nuti2023secretary}).  

In essence, after obtaining one sample from each distribution,~\cite{RWW20} sets the threshold as the maximum of these samples. Although our proof for the general case is also based on this principle, the analysis is much more intricate due to the complexity of the $\RPI$ problem, which necessitates a more sophisticated algorithm. Specifically, for our approach to be effective, it is insufficient to simply use a threshold based on the $j$-th order statistic of the sample set for some fixed $j$. 
Instead, the algorithm first selects $j$ according to a carefully chosen distribution. Moreover, in the FI model, the algorithm must discard certain elements based on their identity, even when their values exceed the $j$-th order statistic.
Thus, in contrast to Rubinstein’s work, where $j$ is deterministically fixed at 1, our approach introduces an additional layer of randomization, and the $j$-th order statistic is not exactly used as a threshold in the FI model.

There has been a growing interest in competitive versions of online selection problems \citep{ezra2021prophet, GPR24, IKM06,KL15, R24}. The closest paper in this stream of literature to ours is the one by \cite{ezra2021prophet}, where the authors consider a generalization of the prophet inequality problem with $k+1$ gamblers. Gambler $j$ observes the sequence after the first $j-1$ gamblers have gone through the sequence, and they study reward guarantees
under single-threshold strategies.  Note that, in our case, we can imagine that there are $k+1$ gamblers but the first $k$ gamblers are all-mighty. These $k$ gamblers are not strategic, hence we do not need a game-theoretic analysis, unlike in the aforementioned papers on competitive prophet inequalities.

\section{Model}

For $0\leq k \leq n-1$, an instance of {\RPI} is given by a sequence $X_1, \dots, X_n$ of nonnegative independent random variables, where $X_i$ has cumulative density function (cdf) $F_i$. \emph{Nature} removes $k$ variables corresponding to the top $k$ realizations,\footnote{If there are several choices due to ties, Nature randomizes the choice of the $k$ variables.} and we denote by $D$ the corresponding set of indices of the remaining variables.
We consider two information models that determines what the gambler observes sequentially. 

In the \emph{full information} (FI) model, the gambler observes online the pairs $ (X_i, i)_{i \in D}$. That is, the gambler observes both the value and the index of the random variable from which the value originates.

In the \emph{no information} (NI) model, the gambler only observes online the $n-k$ values in the sequence $(X_i)_{i\in D}$. In both information models, $D$ is unknown to the gambler upfront. Given an information model (FI or NI), the gambler wants to implement an online algorithm $\ALG$ that observes the online values according to the information model and accepts a value. 
Regardless of the model, and abusing notation, we denote by $\ALG$ the value accepted by the online algorithm. The expected optimal offline solution corresponds to $\EE\left( \max_{i\in D} X_i \right)=\EE(  X_{(k+1)})$. 

For $\gamma>0$, we say that $\ALG$ has a \emph{competitive ratio} $\gamma$ if $\EE(\ALG)\geq \gamma \cdot \EE(X_{(k+1)})$ for any input of $\RPI$. For each $k$, we are interested in finding the largest $\gamma_k$ such that there is an algorithm $\ALG$ with competitive ratio $\gamma_k$ for $\RPI$. Note that for $k=0$, we have $\gamma_0=1/2$~\citep{samuel1984comparison}. We note that an algorithm with a competitive ratio $\gamma$ for the NI model implies an algorithm with competitive ratio $\gamma$ for the FI model.


\section{Lower bound on competitive ratio} \label{sec:main}

In this section, we prove our main result.
We assume that the distributions $F_1,\ldots ,F_n$ are independent but not necessarily identically distributed. \\

\begin{theorem} 
 \label{thm:gen1}
    For the FI model, there is an algorithm for $\RPI$ with competitive ratio at least $1/(k+2)$. \\

    \end{theorem}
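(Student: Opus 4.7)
The plan is to generalize the single-sample deferred-decision technique of Rubinstein--Wang--Weinberg~\cite{RWW20} by introducing two additional layers of randomness tailored to the residual setting. I would first draw one fresh sample $Y_i \sim F_i$ for each $i$, independent of the $X_j$'s, and sort the indices so that $Y_{\sigma(1)} \geq Y_{\sigma(2)} \geq \dots \geq Y_{\sigma(n)}$. Then I would pick a random rank $J$ in $\{1,\dots,k+1\}$ according to a carefully chosen distribution (uniform being the natural first guess), set the threshold $\tau := Y_{\sigma(J)}$, and declare a ``forbidden'' index set $S := \{\sigma(1), \dots, \sigma(J-1)\}$. On the online stream $(X_i,i)_{i\in D}$, the algorithm accepts the first $i$ with $X_i \geq \tau$ and $i \notin S$. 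Note that availability of the forbidden set at decision time is what distinguishes FI from NI.

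The heart of the analysis would be a coupling argument on the $2n$ variables $(X_1,Y_1),\dots,(X_n,Y_n)$. Since $X_i$ and $Y_i$ are i.i.d.\ for each $i$, conditioning on the unordered pairs $\{X_i, Y_i\}$ leaves, for each pair, a uniform independent $\pm 1$ sign indicating which of the two values is labeled $X$ and which $Y$. I would then track the top $k+2$ values in the combined pool of $2n$ variables and classify them according to (i) whether they are an $X$ or a $Y$ and (ii) whether they occupy a ``removed'' slot (i.e.\ an index that produced one of the top $k$ $X$-values). Averaging over $J$ and over the exchangeability randomness, I would argue that with probability at least $1/(k+2)$ the algorithm stops at a value no smaller than the benchmark $X_{(k+1)}$.

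To carry this out, I would distinguish several cases according to where the threshold $\tau$ sits in the joint order statistics. If $J$ happens to be tuned so that $\tau$ lands just below $X_{(k+1)}$, then the algorithm is in good shape provided the forbidden set $S$ does not accidentally suppress the index attaining $X_{(k+1)}$ in the non-removed portion of the stream; this is exactly where using the identities of the arrivals (FI) is essential, since it lets the algorithm ignore the indices whose samples $Y$ were artificially large. A careful accounting of the contribution from each value of $J$, combined with the exchangeability swap $X_i \leftrightarrow Y_i$ on the relevant pairs, should yield $\EE(\ALG) \geq \frac{1}{k+2}\EE(X_{(k+1)})$.

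I expect the main obstacle to be twofold. First, the exact distribution over $J$ must be chosen to balance the conditional contributions in the case analysis; a plain uniform rule together with a crude $1/2$ factor from exchangeability only delivers $1/(2(k+1)) = 1/(2k+2)$, which matches the NI bound rather than the sharper FI bound. The extra factor of roughly $2$ has to come from the fact that in the FI model the forbidden set $S$ is observable, so a ``spurious'' high sample at a removed index cannot mislead the algorithm. Making this quantitative, via a clean case analysis on the placement of $X_{(1)},\dots,X_{(k+1)}$ and $Y_{\sigma(1)},\dots,Y_{\sigma(k+1)}$ among the joint order statistics of the $2n$ variables, is where I expect the technical work to concentrate.
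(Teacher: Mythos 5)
Your proposal captures the high-level spirit of the paper's proof—deferred decision with a fresh sample from each $F_i$, a random rank $J\in\{1,\dots,k+1\}$ setting the threshold, and (in the FI model) an identity-based filter to discard ``spurious'' arrivals. You also correctly anticipate that a plain uniform $J$ plus a crude exchangeability factor delivers only $1/(2k+2)$, and that the improvement to $1/(k+2)$ must exploit the identity information. But there are concrete gaps, and the algorithm you describe is actually not the paper's.

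The paper's forbidden set is \emph{not} $\{\sigma(1),\dots,\sigma(J-1)\}$ (all indices with a sample strictly above the threshold); it is the singleton $\{\sigma(J)\}$, the index from which the threshold sample itself came—and only when $J\leq k$. When $J=k+1$ there is no forbidding at all (the algorithm is plain $MSA_{k+1}$). These are genuinely different algorithms: for $J=1$ your rule forbids nothing while the paper's forbids $\sigma(1)$; for $J=k+1$ your rule forbids the top $k$ sample indices while the paper's forbids none. The rationale in the paper for forbidding $\sigma(J)$ is to prevent the algorithm from accepting the ``pair'' of the threshold (which, in the joint order statistics of the $2n$ values, might sit between the threshold and $X_{(k+1)}$ and block the desired pick); your rationale (don't trust indices with inflated samples) is plausible but is not what the analysis actually needs, and you give no argument that your variant achieves $1/(k+2)$.

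Second, you leave the distribution over $J$ unspecified, calling it an obstacle. In the paper it is pinned down precisely: $\Pr(J=i)=1/(k+2)$ for $i\in\{1,\dots,k\}$ and $\Pr(J=k+1)=2/(k+2)$. This choice is not incidental—it is exactly what makes the coefficient-balancing work. The proof proceeds by establishing two propositions: a lower bound on $\EE(MSA_{k+1})$ in terms of weights $\delta_l/2$ on the ``boundary'' ranks $l\in\{k+1,\dots,2k+1\}$ plus full weight $1/2$ on ranks $l\geq 2k+2$, and a lower bound on $\sum_{i=1}^{k}\EE(\overline{MSA}_i)$ with complementary weights $1-\delta_l$ on ranks $l\leq 2k+1$. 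The $\delta_l$ are defined through a case analysis on whether $w_l$ is ``blocked'' and/or ``ill-paired'' (whether certain pairs of samples from the same distribution both land among $w_1,\dots,w_{2k+1}$), and the coefficients miraculously sum to $1$ under the specific $(1,\dots,1,2)/(k+2)$ mixture. None of this structure—neither the blocked/ill-paired taxonomy, nor the two propositions, nor the verification that the mixture equalizes coefficients—appears in your plan; it is replaced by ``a careful accounting \dots should yield.'' That accounting is the entire technical content of the theorem, and your description of the forbidden set suggests you would not arrive at the same case analysis.
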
 

\begin{theorem} 
 \label{thm:gen1'}
    For the NI model, there is a single-threshold algorithm for $\RPI$ with competitive ratio at least $1/(2k+2)$. \\

    \end{theorem}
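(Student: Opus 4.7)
The algorithm I propose is the natural single-threshold counterpart of the FI algorithm: sample $Y_1,\ldots,Y_n$ independently with $Y_i\sim F_i$, order them as $Y_{(1)}\geq\cdots\geq Y_{(n)}$, draw $J$ uniformly at random from $\{1,\ldots,k+1\}$ (independent of everything else), set the threshold $\tau=Y_{(J)}$, and accept the first observed value $X_i$ in the stream with $X_i\geq \tau$.

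The first step is a baseline lower bound on $\E[\ALG]$. Since the maximum of the observable values is exactly $X_{(k+1)}$, the algorithm accepts some value if and only if $X_{(k+1)}\geq \tau$, and whenever it accepts, the accepted value is at least $\tau$. Hence
\begin{equation*}
\E[\ALG]\geq \E\bigl[\tau\cdot \1[X_{(k+1)}\geq \tau]\bigr] = \frac{1}{k+1}\sum_{j=1}^{k+1}\E\bigl[Y_{(j)}\cdot \1[Y_{(j)}\leq X_{(k+1)}]\bigr],
\end{equation*}
where the equality follows from the uniform randomization over $J$ and the independence of $J$ from $(X,Y)$. More sharply, a Samuel--Cahn--style decomposition writes the accepted value as $\tau+(X_{i^*}-\tau)$ with $X_{i^*}\geq \tau$, so one also has a nonnegative overflow term that can be harnessed to strengthen the bound when needed.

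The core technical step is to show the sum above is at least $\tfrac{1}{2}\E[X_{(k+1)}]$, which yields the desired ratio $1/(2(k+1))$. The factor $1/(k+1)$ is already visible from the uniform randomization; the factor $1/2$ should come from the exchangeability between the originals and the samples: the joint law of $(X_1,Y_1,\ldots,X_n,Y_n)$ is invariant under swapping $X_i\leftrightarrow Y_i$ for each $i$, so
\begin{equation*}
\sum_{j=1}^{k+1}\E\bigl[Y_{(j)}\,\1[Y_{(j)}\leq X_{(k+1)}]\bigr] \;=\; \sum_{j=1}^{k+1}\E\bigl[X_{(j)}\,\1[X_{(j)}\leq Y_{(k+1)}]\bigr].
\end{equation*}
The plan is to sum these two equal expressions and perform a case analysis on whether $X_{(k+1)}\geq Y_{(k+1)}$ or not (exactly one of the two holds in every outcome): on each such event a careful accounting of the interleaving of the top $k+1$ order statistics of $X$ and of $Y$ — equivalently, of the position of $X_{(k+1)}$ among the combined top $2(k+1)$ values — allows the combined contribution to be bounded below by $X_{(k+1)}$ on average.

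The main obstacle is precisely this last step: the naive pointwise inequality only yields $\min(X_{(k+1)},Y_{(k+1)})$, which can be strictly smaller than $X_{(k+1)}$ in expectation (as simple Bernoulli examples show). Recovering the full factor $1/2$ therefore requires combining the threshold-baseline contribution with the Samuel--Cahn overflow $(X_{i^*}-\tau)\1[\text{accept}]$, exploiting the independence of the $X_i$'s. This is the natural generalization to the residual setting of the single-sample prophet inequality argument of Rubinstein--Wang--Weinberg (which corresponds to the case $k=0$), now with $k+1$ candidate thresholds and the uniform randomization of $J$ supplying the $1/(k+1)$ factor.
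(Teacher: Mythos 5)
You identify the right algorithm (uniform randomization over the top $k+1$ sample order statistics as thresholds, which is exactly the paper's $MSA_{RAND}$) and the right high-level structure: the $1/(k+1)$ factor from the uniform choice of $J$, and a putative $1/2$ factor from sample-vs-true exchangeability. However, the core technical estimate
\[
\sum_{j=1}^{k+1}\E\bigl[Y_{(j)}\,\1[Y_{(j)}\leq X_{(k+1)}]\bigr]\;\stackrel{?}{\geq}\;\tfrac12\,\E[X_{(k+1)}]
\]
is never proved, and you yourself flag this as ``the main obstacle.'' The sketch you offer — sum the two symmetric expressions and split on $X_{(k+1)}\gtrless Y_{(k+1)}$ — is correctly diagnosed as yielding only $\min(X_{(k+1)},Y_{(k+1)})$ pointwise, whose expectation is \emph{strictly less} than $\E[X_{(k+1)}]$ (since $X_{(k+1)}$ is positively correlated with the event $X_{(k+1)}\geq Y_{(k+1)}$). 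You then wave at a ``Samuel--Cahn overflow'' fix and a ``natural generalization of Rubinstein--Wang--Weinberg,'' but this is precisely where the proof is nontrivial: after conditioning on the removal of the top $k$ values, the observed variables are \emph{positively correlated}, and the paper's own Example~\ref{ex:bad_thresholds} shows that overflow-style arguments that work in the independent $k=0$ case break down here. In particular the naive threshold baseline $\E[\tau\,\1[\text{accept}]]$ is provably not strong enough: in the paper's Step~1 analysis of $MSA_i$, the threshold is $w_{k+i+1}$ but the credited value is the strictly larger $w_{k+i}$, and in Step~2 the overflow lower bound $\E(MSA_{k+1}\mid\tau_{k+1}=w_l)\geq w_{l-1}$ is needed; neither is recoverable from your coarse dichotomy.

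What is actually required — and what the paper supplies through Lemmas~\ref{lem:bound:msak+1:lem1}, \ref{lem:bound:msak+1:lem2}, \ref{lem:prophet} and \ref{lem:ll+1} — is a position-by-position analysis conditioned on the exact rank $l$ of $X_{(k+1)}$ in the interleaved order $w_1\geq\cdots\geq w_{2n}$ of all $2n$ values, tracking the pair structure (which $w$'s come from the same distribution). Specifically, the paper shows that for $l=k+i$ with $1\leq i\leq k$, the algorithm $MSA_i$ picks $w_{k+i}$ with probability at least $1/2$ given $X_{(k+1)}=w_{k+i}$, and separately lower-bounds $\E(MSA_{k+1})$ over positions $l\geq 2k+1$; summing over $i$ and over $l$ recovers the factor $1/2$. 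Your dichotomy on the sign of $X_{(k+1)}-Y_{(k+1)}$ is far too coarse to see this cancellation, so the gap is not a minor omission but the substance of the theorem.
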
 


To prove both Theorem~\ref{thm:gen1} and Theorem~\ref{thm:gen1'}, we employ a randomized strategy. In the case of Theorem~\ref{thm:gen1'}, the strategy is, in fact, a randomized threshold strategy. We highlight here that, as a corollary of Theorem~\ref{thm:gen1'}, we obtain that in the FI model, there exists a threshold strategy with a competitive ratio of at least 
$\frac{1}{2(k+1)}$.


To understand the rationale behind the construction of our randomized strategies to prove Theorem~\ref{thm:gen1} and Theorem~\ref{thm:gen1'}, let us recall the result obtained by~\cite{RWW20} in the classic prophet inequality setting. By drawing one sample from each distribution and taking the maximum of them as a threshold, the gambler can guarantee a competitive ratio of $1/2$. A natural adaptation of that algorithm to our setting is to consider as a threshold the \textit{$(k+1)$-th} maximum of the samples. We denote by $MSA_{k+1}$ such a strategy. 
 
Unfortunately, such a strategy does not guarantee any constant competitive ratio. Indeed, consider again the instance in Example~\ref{ex:bad_thresholds} with $k=1$.

The expected value of the algorithm $MSA_2$ is:
\begin{eqnarray*}
   \EE(MSA_2)&=&\varepsilon^{2} \EE(MSA_2 | \tau=\varepsilon^{-2})+ (1-\varepsilon^{2}) \EE(MSA_2|\tau \leq 1) \\
   &=& \varepsilon^{2} \cdot \varepsilon^{2} \cdot \varepsilon^{-2}+(1-\varepsilon^{2}) \cdot
   [1-(1-\varepsilon)^2]\cdot 1
   \\
   &=& 2 \varepsilon-2 \varepsilon^{3}+\varepsilon^{4}.
\end{eqnarray*}
Given that $\EE(X_{(2)})\rightarrow 1$ when $\varepsilon \rightarrow 0$, we obtain that $\EE(MSA_2)/\EE(X_{(2)}) \rightarrow 0$ as $\varepsilon$ tends to zero.

To tackle this problem and establish the competitive ratio stated in Theorem~\ref{thm:gen1'}, we draw one sample $s_i$ from each distribution $F_i$ and consider the following $k+1$ algorithms. \\

\begin{definition}
Given $i \in \{1, \dots, k+1\}$, $MSA_i$ is the strategy proceeding as follows:

\begin{enumerate}
    \item Draw one independent sample $s_j \sim F_j$ for each $j = 1, \dots, n$.
    \item Let $\tau$ be the $i$-th largest value among the samples.
    \item Select the first value $x_t$ such that $x_t$ is higher than $\tau$. \\
\end{enumerate}
\end{definition}

On the other hand, to prove Theorem~\ref{thm:gen1}, we make use of the algorithm $MSA_{k+1}$ defined above, along with the following $k$ algorithms.\\

\begin{definition}
The strategy $\overline{\text{MSA}}_i$, for $i \in \{1, \dots, k\}$, proceeds as follows:
\begin{enumerate}
    \item Draw one independent sample $s_j \sim F_j$ for each $j = 1, \dots, n$.
    \item Let $\tau$ be the $i$-th largest sample value, and let $j^*$ be the index of the distribution from which that sample came.
    \item Select the first value $x_t$ such that:
    \begin{itemize}
        \item $x_t$ is higher than $\tau$, and
        \item $x_t$ does not come from distribution $F_{j^*}$.
    \end{itemize}
\end{enumerate}
\end{definition}

For both algorithms and in the case where there are equalities between samples or between the threshold and the observed value, we break ties at random.  Note that the algorithms $\overline{MSA}_{i}$, for $i \in \{1, \dots, k\}$, must determine whether the arriving value originates from the same distribution as the sample used to define the threshold, and therefore, the knowledge of the identity of each variable is necessary for the online selection. A complete analysis of these algorithms is provided in Sections~\ref{sec:thm1:2} and~\ref{sec:thm1:1'}. 

By the principle of deferred decision and following the formalism in~\cite{RWW20}, instead of considering one sample for each distribution and then looking at the real values in an online fashion, we can draw two samples from each distribution $F_i$, namely $y_i$ and $z_i$, and then flip a fair coin to decide which is equal to $s_i$ and which is equal to $x_i$. This procedure correctly generates $s_1, \dots, s_n$ and $x_1, \dots, x_n$ as independent draws of $F_1, \dots F_n$. From now on, we will denote by $S$ the set of samples $\{s_1, \dots, s_n \}$ and $X$ the set of true values $\{ x_1, \dots, x_n\}$.

To analyze the performance of the algorithms, we assume that for each $i$, $y_i>z_i$ and we order all these samples in decreasing order, relabeling them as $w_1, \dots, w_{2n}$, so that $w_1 \geq w_2 \geq w_3 \geq \dots \geq w_{2n}$\footnote{If some values are identical, Nature randomizes their order within the sequence.}. We say that $(w_l,w_{l'})$ is a pair, or that $w_l$ is paired with $w_{l'}$, if they originate from the same distribution.

Moreover, for each $j \in  \{1, \dots, k+1\}$ we define $\xi_j$ as the corresponding position of the $j$-th value $z$ in the sequence of $w's$ values. For example, if the first elements of the $w$ sequence are given by 
\[
y_3 \ \ y_5 \ \ y_1 \ \ z_5 \ \ y_8 \ \ z_8 \ \ z_3 \ \ \dots, 
\]
then $\xi_1=4$ and $\xi_2=6$. Note that $\xi_j$ can also be seen as the position at which the $j$-th pair $(y,z)$ from the same distribution appears. In the subsequent analysis, we fix specific realizations of the pairs $(y_i,z_i)$, which in turn determine the $\xi_j$ and the $w_i$. 

\subsection{Proof of Theorem \ref{thm:gen1}} \label{sec:thm1:2}

To show Theorem \ref{thm:gen1}, we consider the $k+1$ algorithms $\overline{MSA}_1, \dots, \overline{MSA}_k$, $MSA_{k+1}$ defined in Section~\ref{sec:main}, and use them to define the randomized strategy $\overline{MSA}_{RAND}$ as follows:
\begin{itemize}
    \item[(1)]Before the game starts, select a random number $I$ in 
 $\left\{1,\dots,k+1\right\}$, such that for all $i \in \left\{1,\dots,k\right\}$, $I=i$ with probability $1/(k+2)$, and $I=k+1$ with probability $2/(k+2)$. 
 \item[(2)] Play $\overline{MSA}_I$, if $I \in \{1, \dots, k\}$, and $MSA_{k+1}$, if $I=k+1$.
\end{itemize}

 We prove Theorem~\ref{thm:gen1} by showing that the strategy $\overline{MSA}_{RAND}$ has a competitive ratio $\frac{1}{k+2}$. Before proceeding to the proof of Theorem~\ref{thm:gen1}, we need to introduce some definitions and two technical lemmas, which we prove later.
 
\begin{definition}
Let $ l \in \{1, \dots, 2k+1\} $. We say that $w_l$ is \emph{blocked} if there exist $ r, r' \in \{l+1, \dots, 2k+1\}$ such that $w_{r'} = y_j$ and $w_r = z_j$ for some $ j $. We denote by $m_l$ the smallest $r$ that satisfies this property. 
\end{definition}

For example, if $k=3$ and the first $2k+1=7$ elements of the $w$ sequence are given by
\[
y_3 \ \ y_5 \ \ y_1 \ \ z_5 \ \ y_8 \ \ z_8 \ \ z_1 \ \ \dots, 
\]
$w_2$ is blocked, since the pairs $(y_1,z_1)$ and $(y_8,z_8)$ appear between the $3$-rd and $7$-th positions. Moreover, in this case $m_2=6$.

The pair $(y_j,z_j)$ ``blocks'' $w_l$, in the sense that no matter whether $z_j=w_r$ is in $X$ or $S$, no threshold below $w_r$ can guarantee selecting the value $w_l$. \\

\begin{definition}
Let $l \in \left\{1,\dots,2k+1\right\}$ and $p$ such that $w_p$ is paired with $w_l$. We say that $w_l$ is \textit{ill-paired} if $p \in \left\{l+1,\dots,2k+1\right\}$. \\ 
\end{definition} 

That is, we say that a value $w_l$ is ill-paired if it is paired with a value greater than or equal to $w_{2k+1}$. For instance, considering the same sequence as before, $w_2$ is ill-paired since $z_5$ appears before $w_7$. \\

\begin{definition}
    For each $l \in \{1, \dots, 2k+1\}$, we define the parameter $\delta_l$ as follows:

\[ \delta_l=  \left\{
\begin{array}{ll}
      2^{-2k+l-1} & \text{ if } w_l \text{ is not blocked and not ill-paired} \\
     2^{-2k+l} & \text{ if } w_l \text{ is not blocked and ill-paired} \\
      0 & \text{ otherwise.}\\
\end{array} 
\right. \] \\

\end{definition}

\begin{proposition} \label{lem:bound:msak+1}
        If the gambler plays according to $MSA_{k+1}$, his expected reward is at least

 \begin{eqnarray}\label{ineq:MSA_k}  \nonumber
      \EE(MSA_{k+1}) &\ge& 
 \frac{1}{2} \sum_{l=k+1}^{2k+1} \PP(X_{(k+1)}=w_l) w_l \delta_l
 + \frac{1}{2} \sum_{l=2(k+1)}^{\xi_{k+1}} \PP(X_{(k+1)}=w_l) w_l
  \end{eqnarray}

\end{proposition}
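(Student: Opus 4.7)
The approach is to decompose $\EE(MSA_{k+1})$ by conditioning on the value of $X_{(k+1)}$, namely
\[
\EE(MSA_{k+1}) \;\ge\; \sum_l w_l \cdot \PP(MSA_{k+1} \ge w_l \mid X_{(k+1)}=w_l)\,\PP(X_{(k+1)}=w_l),
\]
using nonnegativity of the reward together with $\EE[R\mid A]\ge w_l\,\PP(R\ge w_l\mid A)$. The task then reduces to exhibiting, for each $l$ in the two index ranges, a sufficient event $\mathcal{G}_l$ on the deferred coin flips assigning each pair $(y_i,z_i)$ to $(X,S)$ such that $\mathcal{G}_l\cap\{X_{(k+1)}=w_l\}$ forces the algorithm's pick to be at least $w_l$, and to lower-bound $\PP(\mathcal{G}_l\mid X_{(k+1)}=w_l)$ by the claimed factor.

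For $l\in\{k+1,\ldots,2k+1\}$, conditioning on $X_{(k+1)}=w_l$ forces exactly $l-1-k\le k$ of the positions in $\{1,\ldots,l-1\}$ into $S$, so $\tau$ lies at some position $\ge l+1$ and in particular $\tau<w_l$. I would take $\mathcal{G}_l$ to be the event ``positions $l+1,\ldots,2k+1$ are all in $S$ and position $2k+2$ is in $S$''; then $\tau=w_{2k+2}$, every $X$-element above the threshold sits in $\{w_1,\ldots,w_l\}$, and so the algorithm's pick is automatically at least $w_l$ irrespective of the arrival order. The estimate $\PP(\mathcal{G}_l\mid X_{(k+1)}=w_l)\ge \delta_l/2$ then follows from a pair-by-pair analysis: if $w_l$ is blocked, a pair lies entirely in $\{l+1,\ldots,2k+1\}$ and cannot have both members in $S$, so $\mathcal{G}_l$ is impossible and $\delta_l=0$; otherwise the $2k+1-l$ positions in $\{l+1,\ldots,2k+1\}$ contribute independent Bernoulli$(1/2)$ factors under the conditional measure (and one such factor is effectively free when $w_l$ is ill-paired, because $w_l\in X$ already forces its partner in that range into $S$), and the coin flip placing position $2k+2$ into $S$ contributes the remaining outer factor $1/2$.

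For $l\in\{2k+2,\ldots,\xi_{k+1}\}$, conditioning on $X_{(k+1)}=w_l$ forces at least $k+1$ $S$-elements among positions $1,\ldots,l-1$, so $\tau$ sits at some position $J\le l-1$ with $\tau>w_l$; the algorithm never picks $w_l$ itself, but whenever it picks anything, it picks an $X$-element at a position strictly less than $J$, hence of value strictly greater than $w_l$. The only failure mode is the algorithm picking nothing, which happens precisely when $J=k+1$, i.e.\ when all of $w_1,\ldots,w_{k+1}$ lie in $S$. It therefore suffices to show $\PP(\text{at least one of } w_1,\ldots,w_{k+1}\in X\mid X_{(k+1)}=w_l)\ge 1/2$, and the constraint $l\le \xi_{k+1}$ caps the number of pairs contained entirely in $\{1,\ldots,l-1\}$ at $k$, leaving enough freedom in the half-pair coin flips that at least half of the assignments compatible with the conditioning place an $X$ among positions $1,\ldots,k+1$; I would make this precise through a swap-involution symmetry argument on the coin-flip configurations realizing the conditioning.

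I expect the main obstacle to be the $\delta_l/2$ estimate for the first sum: the conditioning $\{X_{(k+1)}=w_l\}$ couples coin flips across pairs that straddle position $l$, and the blocked/ill-paired dichotomy must be carried through exactly so that the count of genuinely free $1/2$-factors matches $\delta_l$. A careful application of the deferred-decision principle from~\cite{RWW20}, separating those coin flips pinned down by the conditioning from those that remain free within $\mathcal{G}_l$, should make this count rigorous.
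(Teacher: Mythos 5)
Your handling of the first sum (indices $l\in\{k+1,\dots,2k+1\}$) matches the paper's approach: condition on $X_{(k+1)}=w_l$, take the event $\{\tau_{k+1}=w_{2k+2}\}$, show the algorithm then picks $w_l$ exactly, and lower bound the conditional probability of this event by $\delta_l/2$ via the blocked/ill-paired case analysis. That part is sound.

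Your treatment of the second sum ($l\ge 2k+2$) contains a fatal error. Conditioning on $X_{(k+1)}=w_l$ with $l\ge 2k+2$ forces, as you say, $\tau_{k+1}=w_J$ with $J\le l-1$, so $\tau>w_l$. You then claim the algorithm, if it picks anything, picks an $X$-element at a position strictly less than $J$. But all $X$-elements at positions $1,\dots,l-1$ are exactly the top $k$ realizations $X_{(1)},\dots,X_{(k)}$ (there are exactly $k$ of them under this conditioning), and these are precisely the elements \emph{removed} by Nature in $\RPI$. The gambler never observes them. Every observed value is at most $X_{(k+1)}=w_l<\tau$, so the algorithm selects nothing with probability one, not ``only when $J=k+1$''. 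Hence $\PP(MSA_{k+1}\ge w_l\mid X_{(k+1)}=w_l)=0$ for every $l\ge 2k+2$, and your proposed swap-involution argument cannot produce the needed $1/2$; the decomposition $\EE(MSA_{k+1})\ge\sum_l w_l\,\PP(MSA_{k+1}\ge w_l\mid X_{(k+1)}=w_l)\,\PP(X_{(k+1)}=w_l)$ simply loses the entire second sum.

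The paper circumvents this by conditioning on the \emph{threshold} rather than the prophet: it estimates $\EE(MSA_{k+1}\,\mathbf{1}_{\tau_{k+1}\le w_{2k+3}})$ by noting that on $\{\tau_{k+1}=w_l\}$ (with $l\ge 2k+3$) at least $l-1-2k\ge 1$ non-removed $X$-elements exceed $\tau$, so the algorithm collects at least $w_{l-1}$. It then uses the coin-flip symmetry $\PP(\tau_{k+1}=w_l)=\PP(X_{(k+1)}=w_l)$ and a separate index-shift comparison (Lemma~\ref{lem:ll+1}, built on the explicit formula for $\PP(X_{(k+1)}=w_l)$ in Lemma~\ref{lem:prophet}) to convert the resulting $\sum_l w_{l-1}\,\PP(X_{(k+1)}=w_l)$ into $\tfrac12\sum_l w_l\,\PP(X_{(k+1)}=w_l)$. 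The contribution to the second sum thus does not come from sample paths where $X_{(k+1)}$ is at position $l$; it comes from disjoint sample paths where the threshold is low and the prophet value is high. Your proposal is missing this change of conditioning entirely, and without it the second half of the bound does not hold.
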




Note that $\EE(X_{(k+1)})$ is equal to $\sum_{l=k+1}^{\xi_{k+1}} w_l \PP(X_{(k+1)}=w_l)$. Consequently, when $k=0$, Proposition~\ref{lem:bound:msak+1} recovers the result from~\cite{RWW20} which states that $MSA_1$ gives a $1/2$ competitive ratio ($\delta_1=1/2$ since $w_1$ is neither blocked nor ill-paired).  The challenge when $k \geq 1$ arises from the fact that, for $k+1\leq l \leq 2k+1$, the coefficient accompanying the term $\PP(X_{(k+1)}=w_l) w_l$ may be smaller than $1/(k+2)$. In other words, for $k+1 \leq l \leq 2k+1$ the coefficient may be ``too small'', while for $2k+2 \leq l \leq \xi_{k+1}$, it is ``larger than necessary'' (equal to $1/2$). This imbalance motivates the introduction of a randomization over the $MSA_{k+1}$ and $\overline{MSA}_i$ algorithms: By blending $MSA_{k+1}$ with $\overline{MSA}_i$ for $i \in \left\{1,\dots,k\right\}$, we redistribute these coefficients more evenly. 
To analyze such a randomization, we need first a lower bound on the performance of $\overline{MSA}_i, i \in \left\{1,\dots,k\right\}$. \\

\begin{proposition}
    \label{lem:bound:msai}
The sum of the expected reward of the gambler playing according to $\overline{MSA}_i$ for $i \leq k$ is at least 
    \begin{eqnarray*}
\sum_{i=1}^{k} \m{E}(\overline{MSA}_i) \geq  \sum_{l=1}^{2k+1} \m{P}(X_{(k+1)}=w_l) w_l (1- \delta_l).
\end{eqnarray*}

\end{proposition}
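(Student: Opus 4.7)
My plan is to analyze $\overline{MSA}_1, \dots, \overline{MSA}_k$ in parallel using the deferred-decision coupling already in place: I condition on the realization of the pairs $(y_j,z_j)$, fixing the ordered sequence $w_1, \dots, w_{2n}$ and the pairing structure, and take expectations only over the $n$ independent fair coin flips assigning each pair to the sample/value sets. Under this conditioning, $\PP(X_{(k+1)}=w_l)$ and $\delta_l$ become deterministic combinatorial quantities depending on the pair structure.

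Since $\overline{MSA}_i$ accepts the first $x_t$ exceeding $\tau_i$ and not originating from the distribution of $\tau_i$, its reward dominates the smallest eligible value in $X$. Let $\ell_i$ denote the position in the $w$-sequence of that smallest eligible value ($+\infty$ if none exists); $\ell_i$ is arrival-order-independent, and $\EE(\overline{MSA}_i)\ge \EE[w_{\ell_i}]$. Summing yields
\[
\sum_{i=1}^{k} \EE(\overline{MSA}_i) \;\ge\; \sum_{l} w_l\,\sum_{i=1}^{k} \PP(\ell_i = l),
\]
so the lemma reduces to showing, for every $l\in\{k+1,\dots,2k+1\}$,
\[
\sum_{i=1}^{k} \PP(\ell_i = l) \;\ge\; \PP(X_{(k+1)}=w_l)\,(1-\delta_l).
\]

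For the pointwise inequality I would condition on $\{X_{(k+1)}=w_l\}$, which fixes $w_l\in X$ and exactly $k$ of $w_1,\dots,w_{l-1}$ in $X$. Only the coin flips on pairs touching positions $\{w_{l+1},\dots,w_{2k+1}\}$ affect whether $\ell_i=l$ for $l$ in the target range. The admissible indices $i\in\{l-k,\dots,k\}$ give thresholds $\tau_i=w_{R_i}$ with $R_i>l$, where $R_i$ is determined by where the $S$-samples fall in $\{w_{l+1},\dots\}$, and I would couple $i$ bijectively with the rank of an $S$-sample in that range: in the absence of obstructions, the $i$-th such sample plays the role of $\tau_i$ and triggers $\{\ell_i=l\}$. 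The obstructing patterns, with total worst-case probability equal to $\delta_l$, are precisely those identified by the definition of $\delta_l$: \emph{blocking} forces the $X$-partner of an interior pair into the exclusion range $(l,R_i)$ for every admissible $i$, breaking eligibility-alignment uniformly (so $\delta_l=0$ and the bound is vacuous); \emph{ill-pairing} makes the pair of $w_l$ a rogue $X$-element whenever it lands in $X$, halving the good sub-patterns and giving $\delta_l=2^{-(2k-l)}$; otherwise a single residual coin-flip pattern fails, giving $\delta_l=2^{-(2k-l+1)}$.

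The main obstacle is verifying that the obstruction probabilities in each case equal $\delta_l$ exactly, and that the $i \leftrightarrow (\text{rank of $S$-sample})$ coupling genuinely covers all non-obstructing patterns. The blocked case requires arguing that an interior pair generates a rogue $X$-element \emph{simultaneously} for every admissible $i$: this follows because the $X$-partner of the blocking pair cannot be paired with $\tau_i$, its pair being the $S$-partner of the blocking pair itself. The ill-paired case requires showing that only sub-patterns placing the pair of $w_l$ in $S$ realize $\ell_i=l$, since in $X$ it is a rogue whose pair is $w_l$ rather than $\tau_i$. I expect the overall bookkeeping to parallel the proof of Proposition~\ref{lem:bound:msak+1}, with the crucial new feature being that the identity-exclusion rule of $\overline{MSA}_i$ lets the pair of $\tau_i$ itself sit in the exclusion range without harming eligibility-alignment — the very mechanism that supplies the $1-\delta_l$ complement to the $\delta_l$ coefficient of $MSA_{k+1}$ in Proposition~\ref{lem:bound:msak+1}.
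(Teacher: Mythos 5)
Your high-level setup mirrors the paper: after the deferred-decision coupling, you lower-bound $\EE(\overline{MSA}_i)$ by the smallest eligible value, condition on $\{X_{(k+1)}=w_l\}$, and reduce to showing $\sum_{i=1}^k\PP(\ell_i=l\mid X_{(k+1)}=w_l)\geq 1-\delta_l$. Once you condition, $\{\ell_i=l\}$ coincides with $\{\overline{MSA}_i=w_l\}$, so this is exactly the reduction behind Lemma~\ref{ineq:MSA} and the case-by-case verification in the paper's proof.

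However, your treatment of the blocked case has the mechanism backwards, and this is not a cosmetic slip: it is the crux of the argument. You write that blocking ``breaks eligibility-alignment uniformly (so $\delta_l=0$ and the bound is vacuous),'' and later that ``the $X$-partner of the blocking pair cannot be paired with $\tau_i$.'' Both claims are wrong. Since $\delta_l=0$ when $w_l$ is blocked, the target becomes $\sum_{i}\PP(\overline{MSA}_i=w_l\mid X_{(k+1)}=w_l)\geq 1$, which is the \emph{strongest} case of the inequality, not a vacuous one. Moreover, the $X$-partner of the blocking pair \emph{can} be paired with $\tau_i$, and this is precisely what saves the argument: when $i=m_l-k-2$, the event $\{\tau_i=w_{m_l}\}$ forces the would-be rogue $w_{m'}$ (which lies between $\tau_i$ and $w_l$) to originate from the \emph{same} distribution as $\tau_i$, so the identity-exclusion rule discards it and $\overline{MSA}_i$ still selects $w_l$. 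This supplies an extra event (hence the factor of $2$ at $i=m_l-k-2$ in Lemma~\ref{ineq:MSA}(b)) that exactly closes the geometric sum to $1$. Your closing paragraph states this key mechanism correctly, which directly contradicts your earlier assertion and suggests the blocked case was not actually thought through.

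A secondary concern is the phrase ``couple $i$ bijectively with the rank of an $S$-sample.'' The events $\{\tau_i=w_{k+i+1}\}$ across different $i$ are nested, not disjoint, so a bijective coupling is the wrong picture; the paper instead lower-bounds each $\PP(\tau_i=w_{k+i+1}\mid X_{(k+1)}=w_l)$ via a chain-rule argument (each conditional factor $\geq 1/2$) and sums the resulting geometric series. Your proposal never carries out this computation, which is where the four-way split (not blocked/blocked $\times$ not ill-paired/ill-paired, with the further $m_l\lessgtr p$ subdivision) and the exact value $1-\delta_l$ actually get established.
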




The coefficients accompanying the $\PP(X_{(k+1)}=w_l) w_l$ in the above inequality are higher than those in the expression of Proposition \ref{lem:bound:msak+1} for $k+1 \leq l \leq 2k$, while they are equal to 0 for $l>2k+1$. This supports the idea that combining algorithms enables a redistribution of coefficients. The surprising fact is that there exists a way to combine the $\overline{MSA}_i, i \in \left\{1,\dots,k+1\right\}$ in a way that all the coefficients are simultaneously higher than $1/(k+2)$, yielding the competitive factor of $1/(k+2)$. We prove this below.

\begin{proof}[Proof of Theorem \ref{thm:gen1}]
Let us consider the strategy for the gambler $\overline{MSA}_{RAND}$ consisting on playing according to $\overline{MSA}_i$ with probability $1/(k+2)$, for $i \in \{1, \dots, k\}$, and to $MSA_{k+1}$ with probability $2/(k+2)$. 

Then, $(k+2) \EE(\overline{MSA}_{RAND})= \sum_{i=1}^{k} \EE(\overline{MSA}_i)+ 2 \EE(MSA_{k+1}), $ and 
by using Proposition~\ref{lem:bound:msak+1} and Proposition~\ref{lem:bound:msai}, we obtain
\begin{eqnarray*}\label{ineq:prop_1} \nonumber
  (k+2) \EE(\overline{MSA}_{RAND})&\geq& \sum_{l=1}^{2k+1}  \PP(X_{(k+1)}=w_l) w_l (1-\delta_l) \\
   &+&  2 \sum_{l=k+1}^{2k+1}  \PP(X_{(k+1)}=w_l) w_l \frac{\delta_l}{2} + 2 \sum_{l=2(k+1)}^{\xi_{k+1}}  \PP(X_{(k+1)}=w_l) w_l \frac{1}{2} \\ \nonumber
   &=& \sum_{l=k+1}^{\xi_{k+1}}  \PP(X_{(k+1)}=w_l) w_l = \EE(X_{(k+1)}),
\end{eqnarray*}
where the equality holds because $\PP(X_{(k+1)}=w_l)=0$ for $l<k+1$. 
This concludes on the proof of Theorem~\ref{thm:gen1}.
\end{proof}


\subsubsection{Proof of Proposition \ref{lem:bound:msak+1}}
  
The proof of Proposition \ref{lem:bound:msak+1} is divided into two intermediary results, which we state now. \\

\begin{lemma} \label{lem:bound:msak+1:lem1}
\begin{equation*}
\EE(MSA_{k+1} 1_{\tau_{k+1}=w_{2k+2}}) \ge 
 \sum_{l=k+1}^{2k+1} \PP(X_{(k+1)}=w_l) w_l \frac{\delta_l}{2}. \\
 \end{equation*}
\end{lemma}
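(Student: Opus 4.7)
The plan is to condition on the realization of the pairs $(y_i,z_i)$, so that the $w$-sequence, the pairing structure, and the arrival order of each $w_r$ are fixed; the only remaining randomness consists of $n$ independent fair coin flips, one per pair, that label its two values as \emph{sample} or \emph{true value}. The key first observation is that on the event $\{\tau_{k+1}=w_{2k+2}\}$ there are exactly $k+1$ true values exceeding $\tau_{k+1}$, namely $X\cap\{w_1,\ldots,w_{2k+1}\}$, and they must coincide with the top $k+1$ elements of $X$ (since every other element of $X$ lies $\le w_{2k+2}=\tau_{k+1}$). As Nature removes the top $k$ of $X$, exactly one observable true value remains above $\tau_{k+1}$, and it equals $X_{(k+1)}$. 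Therefore the algorithm is \emph{forced} to select $X_{(k+1)}$, independently of arrival order, and
$$\EE\bigl(MSA_{k+1}\mathbf{1}_{\tau_{k+1}=w_{2k+2}}\bigr)=\sum_{l\ge k+1}w_l\,\PP\bigl(X_{(k+1)}=w_l,\,\tau_{k+1}=w_{2k+2}\bigr).$$

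Translating to pair-flip constraints, the joint event $\{X_{(k+1)}=w_l\}\cap\{\tau_{k+1}=w_{2k+2}\}$ is equivalent to requiring (i)~$w_l\in X$, (ii)~exactly $k$ of $w_1,\ldots,w_{l-1}$ lie in $X$, and (iii)~$w_{l+1},\ldots,w_{2k+2}$ all lie in $S$. My plan is to establish the per-$l$ bound $\PP(\tau_{k+1}=w_{2k+2}\mid X_{(k+1)}=w_l)\ge \delta_l/2$ for $l\in\{k+1,\ldots,2k+1\}$; multiplying by $w_l\,\PP(X_{(k+1)}=w_l)$ and summing then yields the lemma. If $w_l$ is blocked, two positions in $\{l+1,\ldots,2k+1\}$ share a pair and hence cannot both lie in $S$, so the conditional probability is $0=\delta_l/2$. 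Otherwise the positions $w_{l+1},\ldots,w_{2k+2}$ lie in distinct pairs, so (iii) decouples pair-by-pair. In the not-ill-paired case, I would argue that the $(2k+2-l)$ coin flips governing these positions are effectively independent of the conditioning (after the bookkeeping for crossings described below), giving probability $2^{-(2k+2-l)}=\delta_l/2$. In the ill-paired case, the coin flip enforcing $w_l\in X$ already places $w_l$'s partner (one of $w_{l+1},\ldots,w_{2k+1}$) into $S$, saving one flip and yielding $2^{-(2k+1-l)}=\delta_l/2$.

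The main technical obstacle is handling \emph{crossing pairs} with one endpoint in the ``left'' block $\{1,\ldots,l-1\}$ (involved in (ii)) and the other in the ``interior'' block $\{l+1,\ldots,2k+2\}$ (involved in (iii)), because the same coin flip appears in both events. I would resolve this by exploiting the \emph{favorable} nature of the coupling: for such a crossing pair $(w_i,w_j)$ with $i<l<j$, the equivalence $\{w_j\in S\}\Leftrightarrow\{w_i\in X\}$ means that requiring $w_j\in S$ for (iii) simultaneously contributes $w_i$ to the count of $k$ values needed in (ii). Concretely, I would condition on the flips of pairs lying fully inside $\{1,\ldots,l-1\}$ and of pairs with both endpoints outside $\{1,\ldots,2k+2\}$, and then count the remaining configurations; the combinatorial accounting should show that the conditional probability of (iii) given (i)--(ii) still achieves $\delta_l/2$. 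A further edge case arises when $w_{2k+2}$ is paired with some $w_j\in\{w_{l+1},\ldots,w_{2k+1}\}$, making the joint probability vanish for that $l$ even though $\delta_l>0$; here I would recover the bound at the level of the sum by invoking the monotonicity $w_l\ge w_{l+1}\ge\cdots$ together with an Abel-summation argument, which shifts the missing mass to the $l=2k+1$ term, where it is compensated by corresponding surplus.
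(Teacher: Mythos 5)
Your overall approach coincides with the paper's: note that on $\{\tau_{k+1}=w_{2k+2}\}$ the algorithm is forced to pick $X_{(k+1)}$, decompose the left-hand side over the events $\{X_{(k+1)}=w_l\}$, and establish a per-$l$ lower bound $\PP(\tau_{k+1}=w_{2k+2}\mid X_{(k+1)}=w_l)\ge\delta_l/2$. The ``effective independence'' you invoke in the unblocked case is only a sketch --- the paper makes it rigorous via the chain rule, showing each conditional factor is $\ge 1/2$ and in fact strictly larger when a crossing pair is present --- but the idea is the right one, and the blocked and ill-paired bookkeeping match what the paper does.

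The genuine gap is the ``further edge case'' you flag ($w_{2k+2}$ paired with some $w_j$, $j\in\{l+1,\ldots,2k+1\}$), and your proposed Abel-summation repair does not work. In that configuration $\PP(\tau_{k+1}=w_{2k+2}\mid X_{(k+1)}=w_l)=0$ while $\delta_l>0$, and the monotonicity $w_l\ge w_{l+1}\ge\cdots$ runs the wrong way: the deficit sits at an index $l$ with a large weight $w_l$, whereas any surplus lives at $l=2k+1$ where $w_{2k+1}$ is the smallest weight in the range. Concretely, take $k=1$, $n=4$, pairing $(w_1,w_7),(w_2,w_8),(w_3,w_4),(w_5,w_6)$. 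For $l=2$ the event $\{\tau_2=w_4,\,X_{(2)}=w_2\}$ forces both $w_3,w_4\in S$, which is impossible, so that $l$ contributes $0$ to the left-hand side; yet $w_2$ is neither blocked nor ill-paired (the window $\{l+1,\ldots,2k+1\}=\{3\}$ is a singleton and $w_2$'s partner is $w_8$), so $\delta_2/2=1/4$ and the right-hand side has a term $\PP(X_{(2)}=w_2)w_2\delta_2/2=w_2/16$. The only surplus available, at $l=3$, equals $\PP(X_{(2)}=w_3)w_3(1-\delta_3/2)=w_3/8$ and cannot cover $w_2/16$ once $w_2>2w_3$, so no cross-$l$ redistribution can rescue the bound. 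The correct resolution is that this configuration should be classified as blocking $w_l$: a pair lying entirely in $\{l+1,\ldots,2k+2\}$ (not just $\{l+1,\ldots,2k+1\}$) already rules out $\tau_{k+1}=w_{2k+2}$ when $X_{(k+1)}=w_l$, so $\delta_l$ should be set to $0$. With that convention the per-$l$ bound holds uniformly and the lemma follows by summation, with no compensation argument required.
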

\begin{lemma} \label{lem:bound:msak+1:lem2}
Assume that $2k+2 \neq \xi_{k+1}$. Then
\begin{equation*}
\EE(MSA_{k+1} 1_{\tau_{k+1} \leq w_{2k+3}}) \ge 
 \frac{1}{2} \sum_{l=2(k+1)}^{\xi_{k+1}} \PP(X_{(k+1)}=w_l) w_l.
 \end{equation*}
\end{lemma}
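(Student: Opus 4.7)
The plan is to first characterize the event $A:=\{\tau_{k+1}\leq w_{2k+3}\}$, then use it to lower bound the value of $MSA_{k+1}$, and finally compare the probability of $A$ with that of $B:=\{X_{(k+1)}\in\{w_{2k+2},\ldots,w_{\xi_{k+1}}\}\}$ via a binomial-symmetry argument. By definition of $\tau_{k+1}$, the event $A$ is equivalent to $|S\cap\{w_1,\ldots,w_{2k+2}\}|\leq k$, which forces $|X\cap\{w_1,\ldots,w_{2k+2}\}|\geq k+2$. Consequently the top $k$ values of $X$ all lie within $\{w_1,\ldots,w_{2k+2}\}$, so at least two observable values of $X$ remain in that range, each of magnitude at least $w_{2k+2}$ and strictly exceeding $\tau_{k+1}$.

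The central step is to prove $\EE[MSA_{k+1}\mathbf{1}_A]\geq w_{2k+2}\,\PP(A)$. When $\tau_{k+1}=w_{2k+3}$ this is immediate: the only values exceeding $\tau_{k+1}$ belong to $\{w_1,\ldots,w_{2k+2}\}$, so the algorithm necessarily selects a value $\geq w_{2k+2}$. The delicate case is $\tau_{k+1}<w_{2k+3}$, where smaller observable values in $X\cap\{w_{2k+3},\ldots\}$ also exceed $\tau_{k+1}$ and might arrive earlier in index order, so the pointwise bound can fail. The plan for this case is a coin-flip coupling: for each configuration $\omega\in A$ in which $MSA_{k+1}$ selects a value $z_j<w_{2k+2}$, the partner configuration $\omega'$ obtained by flipping the coin of pair $j$ remains in $A$ (since $y_j\in\{w_1,\ldots,w_{2k+2}\}$ is moved into $X$ and $z_j$ into $S$) and the algorithm then selects $y_j\geq w_{2k+2}$. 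Averaging across the pair $(\omega,\omega')$ yields the desired bound. Verifying that the coupling is well-defined, $2$-to-$1$, and that the partner still meets the ``first in arrival order'' condition is the main obstacle of the proof.

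Next, condition on the $w$-sequence and its pairing structure; let $a$ denote the number of complete $(y,z)$-pairs fully contained in $\{w_1,\ldots,w_{2k+2}\}$ and set $h:=2k+2-2a$. The hypothesis $\xi_{k+1}\neq 2k+2$ guarantees $h\geq 2$. Writing $|S\cap\{w_1,\ldots,w_{2k+2}\}|=a+B_S$ with $B_S\sim\mathrm{Bin}(h,1/2)$, the symmetry of the binomial distribution yields
\[
\PP(A\mid w)=\tfrac{1}{2}\Bigl(1-\binom{h}{h/2}\Big/2^h\Bigr)\geq \tfrac{1}{4},
\]
while an analogous median computation for $|X\cap\{w_1,\ldots,w_{2k+1}\}|$, whose free coordinates form a binomial with odd index, yields $\PP(B\mid w)=\tfrac{1}{2}$. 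Hence $\PP(A)\geq \tfrac{1}{2}\PP(B)$, and combining with $w_l\leq w_{2k+2}$ for every $l\geq 2k+2$ gives
\[
\EE[MSA_{k+1}\mathbf{1}_A]\geq w_{2k+2}\,\PP(A)\geq \tfrac{1}{2}w_{2k+2}\,\PP(B)\geq \tfrac{1}{2}\sum_{l=2k+2}^{\xi_{k+1}}w_l\,\PP(X_{(k+1)}=w_l),
\]
which is the inequality of Lemma~\ref{lem:bound:msak+1:lem2}.
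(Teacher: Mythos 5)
Your central intermediate claim, that $\EE[MSA_{k+1}\mathbf{1}_A]\geq w_{2k+2}\,\PP(A)$ where $A=\{\tau_{k+1}\leq w_{2k+3}\}$, is false in general. Take $k=0$, $n=3$, pairs $(w_1,w_6)$, $(w_2,w_5)$, $(w_3,w_4)$ (so $\xi_1=4\neq 2$), and let the arrival order be $X_3,X_2,X_1$ where variable $X_i$ carries the pair with $y_i=w_i$. On $A$ one must have $X_1=w_1$ and $X_2=w_2$; with probability $1/2$ (given $A$) $X_3=w_3$, so $\tau_1=w_4$ and the algorithm accepts $X_3=w_3$, and with probability $1/2$ $X_3=w_4$, so $\tau_1=w_3$ and the algorithm rejects $X_3$ then accepts $X_2=w_2$. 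Thus $\EE[MSA_1\mathbf{1}_A]=\frac{1}{8}(w_2+w_3)$, which is strictly smaller than $w_2\,\PP(A)=\frac{1}{4}w_2$ whenever $w_3<w_2$. The coupling you sketch does not rescue the bound: the selected low value need not be a $z_j$ (above it is $w_3=y_3$); even when it is a $z_j$, its partner $y_j$ can itself lie strictly below $w_{2k+2}$; and even when $y_j\ge w_{2k+2}$, flipping the pair changes both the removed set and the arrival-time winner, so the algorithm in $\omega'$ generally selects some other, possibly smaller, observable value rather than $y_j$. You flag this verification as ``the main obstacle,'' but it is not a routine check -- it is precisely where the argument breaks.

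The binomial-symmetry computations of $\PP(A\mid w)$ and $\PP(B\mid w)$ at the end are correct in themselves, but they multiply a bound that is not available. The paper instead conditions on the exact value of $\tau_{k+1}$, uses the weaker but correct pointwise estimate $\EE(MSA_{k+1}\mid\tau_{k+1}=w_l)\geq w_{l-1}$ for $l\geq 2k+3$ together with the equidistribution of $\tau_{k+1}$ and $X_{(k+1)}$, and thereby obtains the shifted sum $\sum_{l=2k+2}^{\xi_{k+1}-1} w_l\,\PP(X_{(k+1)}=w_{l+1})$. The index mismatch between $\PP(X_{(k+1)}=w_{l+1})$ and $\PP(X_{(k+1)}=w_l)$ is then resolved via the explicit formula of Lemma~\ref{lem:prophet} and the ratio comparison of Lemma~\ref{lem:ll+1}, which show that $\PP(X_{(k+1)}=w_{l+1})$ is at least $\frac{1}{2}\PP(X_{(k+1)}=w_l)$ (or equal, at pairing positions). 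That per-index bookkeeping is what your uniform $w_{2k+2}$ lower bound tries to bypass and cannot.
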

\begin{proof}[Proof of Proposition \ref{lem:bound:msak+1} admitting Lemmas \ref{lem:bound:msak+1:lem1} and \ref{lem:bound:msak+1:lem2}]
In the case where $2k+2 \neq \xi_{k+1}$,
summing the two inequalities proves Proposition \ref{lem:bound:msak+1}. Assume that $2k+2=\xi_{k+1}$. This means that the elements of $\left\{w_1,\dots,w_{2k+2}\right\}$ form $k+1$ pairs. Hence, if $w_{2k+2} \in S$, which happens with probability $1/2$, there are exactly $k+1$ elements larger than $w_{2k+2}$ that are in $X$. In that case, $MSA_{k+1}$ picks $X_{(k+1)}$. It follows that 
\begin{equation*}
\m{E}(MSA_{k+1}) \geq \frac{1}{2} \m{E}(X_{(k+1)}).
\end{equation*}
In particular, Proposition \ref{lem:bound:msak+1} holds. 
\end{proof}

\subsubsection*{Proof of Lemma \ref{lem:bound:msak+1:lem1}}

Lemma \ref{lem:bound:msak+1:lem1} is a consequence of the following lemma. \\

\begin{lemma} \label{ineq:MSA:2}
Let $l \in \{1, \dots, 2k+1\}$ such that $w_l$ is not blocked.

\begin{enumerate}[label=\alph*)]
    \item 
    \label{ineq:MSA:2a}
    If $w_l$ is not ill-paired, it holds that
    \[
    \PP( \{MSA_{k+1}=w_l \} \cap \{ \tau_{k+1}=w_{2k+2} \}) | X_{(k+1)}=w_l) \geq 2^{-2k-2+l} .
    \]
  \item 
  \label{ineq:MSA:2b}
  If $w_l$ is ill-paired, then
     \[    
   \PP(\{ MSA_{k+1}=w_l \} \cap \{ \tau_{k+1}=w_{2k+2} \} | X_{(k+1)}=w_l) \geq  
2^{-2k-1+l}.
\]
\end{enumerate}
\end{lemma}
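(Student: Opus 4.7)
The plan is to reduce the joint event to a coin-flip requirement on a carefully identified set of pairs, and then to bound its probability using the pair-structure hypothesis.

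First, I will show that on the event $\{X_{(k+1)}=w_l\}\cap\{\tau_{k+1}=w_{2k+2}\}$, the algorithm $MSA_{k+1}$ necessarily picks $w_l$, so the LHS reduces to $\PP(\tau_{k+1}=w_{2k+2}\mid X_{(k+1)}=w_l)$. Counting $X$-elements among the top $2k+2$ positions yields that, under the conditioning $X_{(k+1)}=w_l$, the event $\tau_{k+1}=w_{2k+2}$ is equivalent to $\{w_{l+1},\ldots,w_{2k+2}\}\subseteq S$; in particular $X_{(k+2)}\le w_{2k+3}<\tau_{k+1}$, so $w_l$ is the unique value in the gambler's stream exceeding the threshold, and the algorithm selects it. Hence it suffices to lower bound $\PP(\{w_{l+1},\ldots,w_{2k+2}\}\subseteq S\mid X_{(k+1)}=w_l)$.

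Next, I invoke the deferred-decision device of~\cite{RWW20}: each pair $(y_i,z_i)$ carries an independent fair coin deciding which element lands in $S$. The requirement $w_m\in S$ is thus one specific outcome of the coin associated with $w_m$'s pair. The not-blocked hypothesis precisely rules out the obstruction in which two elements of $\{w_{l+1},\ldots,w_{2k+1}\}$ form a pair, which would force one of them into $X$ and destroy the target; with this excluded the needed coin flips can be simultaneously attained. Counting the distinct coin flips that must be prescribed, each contributing a factor of $1/2$, gives the desired lower bound. For part (a), if $w_l$ is not ill-paired its partner lies outside $\{w_{l+1},\ldots,w_{2k+1}\}$, and each of the $2k+2-l$ target positions contributes an independent constraint, yielding $2^{-(2k+2-l)}=2^{-2k-2+l}$. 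For part (b), if $w_l$ is ill-paired its partner lies inside $\{w_{l+1},\ldots,w_{2k+1}\}$; since the conditioning forces $w_l\in X$, this partner is automatically in $S$, eliminating one constraint and leaving $2k+1-l$ free coin flips, hence the improved bound $2^{-2k-1+l}$.

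The main obstacle will be the careful bookkeeping of pairs that cross from $\{w_1,\ldots,w_{l-1}\}$ into $\{w_{l+1},\ldots,w_{2k+2}\}$: demanding the latter endpoint in $S$ forces its partner in $\{w_1,\ldots,w_{l-1}\}$ into $X$, which must be reconciled with the ``exactly $k$ elements of $\{w_1,\ldots,w_{l-1}\}$ in $X$'' constraint coming from $X_{(k+1)}=w_l$. I would handle this by exhibiting, for each coin-flip outcome compatible with $\{X_{(k+1)}=w_l\}$, a family of outcomes that are also compatible with the target and carry multiplicative weight $2^{-(2k+2-l)}$ (respectively $2^{-(2k+1-l)}$ in the ill-paired case). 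This bookkeeping is precisely where the residual-prophet analysis departs from the classical $k=0$ argument of~\cite{RWW20}, in which only a single pair has to be tracked.
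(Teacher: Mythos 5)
Your proposal follows the same architecture as the paper's proof: reduce the joint event to $\{w_{l+1},\dots,w_{2k+2}\}\subseteq S$ given $X_{(k+1)}=w_l$, observe that the not-blocked hypothesis excludes a pair lying entirely inside the range, and gain the extra factor of two in case (b) because $w_l$'s partner $w_p$ is automatically forced into $S$ by the conditioning. Those structural steps match.

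The crux, however, is left as a placeholder. The conditioning $X_{(k+1)}=w_l$ is \emph{not} independent of the coins governing $\{w_{l+1},\dots,w_{2k+2}\}$: whenever one of these elements is paired with some $w_a$ with $a<l$, forcing it into $S$ pushes $w_a$ into $X$, which interacts with the requirement that exactly $k$ of $\{w_1,\dots,w_{l-1}\}$ lie in $X$. You correctly flag this as ``the main obstacle,'' but the statement that each prescribed flip ``contribut[es] a factor of $1/2$'' and the sketch of ``exhibiting \ldots a family of outcomes \ldots carrying multiplicative weight $2^{-(2k+2-l)}$'' assert what must be proved rather than proving it; a priori, the dependency could make a conditional factor smaller than $1/2$, which would sink the bound. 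The paper closes this gap with a chain-rule decomposition: it writes $\PP\bigl(\bigcap_{j=l+1}^{2k+2}\{w_j\in S\}\mid X_{(k+1)}=w_l\bigr)$ as a product of conditional factors and shows each is at least $1/2$. The non-obvious part, absent from your sketch, is that when $w_{l'}$ is paired with some $w_a$ with $a\leq l-1$, the conditional probability of $w_{l'}\in S$ given the preceding events and $X_{(k+1)}=w_l$ is in fact strictly greater than $1/2$: conditional on $X_{(k+1)}=w_l$, among the unpaired members of $\{w_1,\dots,w_{l'-1}\}$ there are strictly more $X$-elements than $S$-elements (this uses $l'\leq 2k+2$), so the required assignment $w_a\in X$ is the favoured one. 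Without this majority-counting argument, the ``count the coin flips'' reduction is not justified, so I regard this as a genuine gap rather than routine bookkeeping.
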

\begin{proof}[Proof of Lemma~\ref{ineq:MSA:2}]
    \begin{enumerate}
    \item[$a)$]
We claim that when $X_{(k+1)}=w_l$ and $\tau_{k+1}=w_{2k+2}$, then $MSA_{k+1}$ picks $w_l$. Indeeed, when $X_{(k+1)}=w_l$, there are exactly $l-1-k$ elements in $\left\{w_1,\dots,w_{l-1}\right\}$ that are in $S$. If, in addition, $\tau_{k+1}=w_{2k+2}$, then there should be exactly $k-(l-1-k)=2k+1-l$ elements of $\left\{w_{l+1},\dots,w_{2k+1}\right\}$ that are in $S$, meaning that they should all be in $S$. Under these circumstances, $w_l$ is the only element in $X$ that is above $\tau_{2k+2}$ and that is not among the $k$ best values in $X$, and is thus selected by $MSA_{k+1}$. 
We deduce that  
\begin{eqnarray*}
      \m{P}(\left\{MSA_{k+1}=w_l\right\} \cap \left\{\tau_{k+1}=w_{2k+2}\right\} |X_{(k+1)}=w_l) \\ 
      =
       \m{P}(\tau_{k+1}=w_{2k+2}|X_{(k+1)}=w_l).
  \end{eqnarray*}
  
  Therefore, it is enough to prove $ \m{P}(\tau_{k+1}=w_{2k+2}|X_{(k+1)}=w_l) \geq 2^{-2k-2+l} . $
  Given $X_{(k+1)}=w_l$, in order for $\tau_{k+1}=w_{2k+2}$ to hold, it is necessary and sufficient that all the elements in $\left\{w_{l+1},\dots,w_{2k+2}\right\}$ belong to $S$. We claim that this event occurs with probability greater than $2^{-2k+l-2}$. 
  To show that, we use the chain rule for conditional probability:
  \begin{multline*}
       \PP(\left\{w_{l+1},\dots,w_{2k+2}\right\} \subset S | X_{k+1}=w_l ) =\PP \left( \left. \bigcap_{j=l+1}^{2k+2} \{w_j \in S\} \right  \vert X_{(k+1)}=w_l \right) \\
  =\prod_{l'=l+1}^{2k+2} \PP \left( \left. \{w_{l'} \in S\}  \right  \vert \bigcap_{j=l+1}^{l'-1} \{w_j \in S\},  X_{(k+1)}=w_l \right).
\end{multline*}
In order to establish the desired result, it is sufficient to verify that each factor in the expression above is lower bounded by $1/2$. The proof is therefore divided into two steps:\\

 \noindent{\bf Step 1:} $ \m{P}(w_{l+1} \in S|X_{(k+1)}=w_l) \geq 1/2.$ \\

If $w_{l+1}$ is paired with an element smaller than $w_{l+1}$, then the events $\left\{w_{l+1} \in S \right\}$ and $\left\{X_{(k+1)}=w_l \right\}$ are independent, and therefore
\[ \m{P}(w_{l+1} \in S|X_{(k+1)}=w_l) = 1/2. \]

  Consider now the case where $w_{l+1}$ is paired with some $w_a \geq w_{l+1}$. Since $w_l$ is not ill-paired, we have $a \neq l$, and the probability that $w_{l+1}$ lies in $S$ is equal to the probability that $w_a$ is one of the not-paired elements of $\{w_1, \dots, w_l \}$ in $X$. Note that the event $\{X_{(k+1)}=w_l \}$ occurs if and only if $w_l \in X$ and there are exactly $k$ elements in X that are larger than $w_l$. Therefore, if $l \in \{ \xi_j, \dots, \xi_{j+1}-1\}$, then among the $l-2j$ not-paired values in $\{w_1, \dots, w_l \}$,  $k+1-j$ belong to $X$, while $l-k-1-j$ are in S. It follows that the probability of $w_a$ being among those elements in $X$ is higher than $1/2$, since $k+1-j> l-k-1-j$ due to $l \leq 2k+1$. We thus conclude that 
   \begin{eqnarray*}
  \m{P}(w_{l+1} \in S|X_{k+1}=w_l) > \frac{1}{2}. \\
  \end{eqnarray*}
  
  
 \noindent{\bf Step 2:} For each $l' \in \left\{l+2,\dots,2k+2\right\}$,  
\begin{equation*}
\m{P} \left( w_{l'} \in S|\bigcap_{j=l+1}^{l'-1} \{w_j \in S\}, X_{(k+1)}=w_l \right) \geq \frac{1}{2}.
\end{equation*}

Let $w_a$ such that $w_{l'}$ is paired with $w_a$, and assume that ${l'} \in \{ \xi_j+1, \dots, \xi_{j+1} \}$. That is, there are $j$ pairs that arrived before $w_{l'}
$. 
Following the same argument than in Step 1, if $w_a<w_{l'}$, we have 
\begin{equation*}
\m{P} \left( w_{l'} \in S| \bigcap_{j=l+1}^{l'-1} \{w_j \in S\}, X_{(k+1)}=w_l \right) = \frac{1}{2}.
\end{equation*}

On the other hand, $w_a$ cannot belong to $\{w_{l+1}, \dots, w_{l'-1}\}$ because $w_l$ is not blocked. 

Finally, let us assume $w_a \leq w_l$. In this case, among the $l'-1-2j$ not-paired values in $\{w_1, \dots, w_{l'-1} \}$,  $k+1-j$ belong to $X$, while $l'-2-j-k$ are in S. Then, it follows that the probability of $w_a$ being among those elements in $X$ is higher than $1/2$, since $k+1-j> l'-2-j-k$ due to $l' \leq 2k+1$. We thus conclude that

\begin{equation*}
\m{P} \left( w_{l'} \in S|  \bigcap_{j=l+1}^{l'-1} \{w_j \in S\},X_{(k+1)}=w_l \right) > \frac{1}{2}.
\end{equation*}

Combining Step 1 and Step 2 yields the result. \\

\item[$b)$]
As in the proof of Case \ref{ineq:MSA:2a}, we have 
     \begin{eqnarray*}
      \m{P}(\{MSA_{k+1}=w_l \} \cap \left\{\tau_{k+1}=w_{2k+2}\right\} |X_{(k+1)}=w_l) 
      = 
       \m{P}(\tau_{k+1}=w_{2k+2}|X_{(k+1)}=w_l),
  \end{eqnarray*}

  and to obtain the result it is enough to show that 
  \begin{equation} \label{ineq:MSAk:inter}
 \m{P}(\tau_{k+1}=w_{2k+2}|X_{(k+1)}=w_l) \geq 2^{-2k-1+l}.
  \end{equation}
  
   Given that $X_{(k+1)}=w_l$, we know that $w_p$ is in $S$, since $(w_p,w_l)$ is a pair. Then, in order to get $\tau_{k+1}=w_{2k+2}$, it is necessary and sufficient that all the elements in $\left\{w_{l+1},\dots,w_{2k+2}\right\} \setminus \left\{p\right\}$ belong to $S$. This happens with probability at least $2^{-2k-1+l}$, by the same argument as in the proof of Case \ref{ineq:MSA:2a}. This proves \eqref{ineq:MSAk:inter}, and the result follows. 
    \end{enumerate}
\end{proof}

\begin{proof}[Proof of Lemma \ref{lem:bound:msak+1:lem1}]

    We want to prove
\begin{equation*}
\EE(MSA_{k+1} 1_{\tau_{k+1}=w_{2k+2}}) \ge 
 \sum_{l=k+1}^{2k+1} \PP(X_{(k+1)}=w_l) w_l \frac{\delta_l}{2}. 
 \end{equation*}

 To this end, note that 
\[
\EE(MSA_{k+1} 1_{\tau_{k+1}=w_{2k+2}})= \sum_{l=k+1}^{2k+1} w_l       \m{P}(\{MSA_{k+1}=w_l \} \cap \left\{\tau_{k+1}=w_{2k+2}\right\} |X_{(k+1)}=w_l) \PP(X_{(k+1)}=w_l).
\]


By Lemma~\ref{ineq:MSA:2} and by the definition of $\delta_l$, 
we have that for each $l \in \{k+1, \dots, 2k+1\}$, 
\[
\m{P}(\{MSA_{k+1}=w_l \} \cap \left\{\tau_{k+1}=w_{2k+2}\right\} |X_{(k+1)}=w_l) \geq \frac{\delta_l}{2}, 
\]
and the result follows. 

\end{proof}

\subsubsection*{Proof of Lemma \ref{lem:bound:msak+1:lem2}}

First, we decompose the left-hand-side term in Lemma \ref{lem:bound:msak+1:lem1} as follows: 
\begin{eqnarray*}
 \EE(MSA_{k+1}1_{\tau_{k+1} \leq w_{2k+3}})&=& \sum_{l=2k+3}^{\xi_{k+1}} \EE(MSA_{k+1}| \tau_{k+1}=w_l) \PP(\tau_{k+1}=w_l) \\
\end{eqnarray*}
Since each $w_l$, for $l \geq 1$,
is equally likely to be in $S$ or in $X$, the law of $\tau_{k+1}$ is identical to the law of $X_{(k+1)}$. We deduce that for all $l \geq 1$, $\PP(\tau_{k+1}=w_l)=\PP(X_{(k+1)}=w_l)$. Secondly, when $\tau_{k+1}=w_l$, there are $l-k-1 \geq k+1$ elements above $w_l$ that are in $X$. Hence, $MSA_{k+1}$ will pick one of them, and we deduce that 
$\EE(MSA_{k+1}| \tau_{k+1}=w_l) \geq w_{l-1}$. These two observations give
\begin{eqnarray} 
 \EE(MSA_{k+1}1_{\tau_{k+1} \leq w_{2k+3}}) 
 &\geq&
 \sum_{l=2k+3}^{\xi_{k+1}} w_{l-1} \PP(X_{k+1}=w_l) \nonumber
 \\
&=& \sum_{l=2(k+1)}^{\xi_{k+1}-1} w_l \PP(X_{(k+1)}=w_{l+1}) \label{eq:proof:msak+1}
\end{eqnarray}
One of the main differences between the above inequality and the one we want to prove in Lemma \ref{lem:bound:msak+1:lem2} is that the term inside the sum is $\PP(X_{(k+1)}=w_{l+1})$ instead of $\PP(X_{(k+1)}=w_{l})$. In the sequel, we relate these two quantities. First, we compute $\PP(X_{(k+1)}=w_{l})$. \\

\begin{lemma}\label{lem:prophet}  The probability distribution of $X_{(k+1)}$ 
is given by
        \[    
  \PP(X_{(k+1)}=w_l) =  \begin{cases}
  \displaystyle\frac{\binom{l-1-2j}{k-j}}{2^{l-2j}}  &\quad\text{if 
       } l \in \{\xi_j+1, \dots, \xi_{j+1}-1 \}, \text{ for }  j \in \{0, \dots, k\} \\
   \noalign{\vskip9pt}
 \displaystyle\frac{\binom{\xi_j-2j}{k+1-j}}{2^{\xi_j-2j+1}}  &\quad\text{if 
       } l= \xi_j, \text{ for }  j \in \{1, \dots, k+1\}. \\
     \end{cases}
\]
\end{lemma}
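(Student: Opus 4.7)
The plan is to condition on the deterministic ordering $w_1 \geq \cdots \geq w_{2n}$ and on the pair structure encoded by $\xi_1,\dots,\xi_{k+1}$, so that the only remaining randomness consists of $n$ independent fair coin flips, one per distribution, each deciding which of $y_i,z_i$ is the sample (in $S$) and which is the true value (in $X$). I would then write the event $\{X_{(k+1)}=w_l\}$ as the intersection $\{w_l\in X\} \cap \{|X\cap\{w_1,\dots,w_{l-1}\}|=k\}$ and compute its probability by tracking which coin flips influence the cardinality $|X\cap\{w_1,\dots,w_{l-1}\}|$.

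The central observation I would exploit is a classification of pairs relative to the cutoff $l$: a pair $(y_i,z_i)$ is \emph{fully internal} if both its elements have positions in $\{1,\dots,l-1\}$, and \emph{straddling} if exactly one of them does. A fully internal pair contributes exactly one element to $X\cap\{w_1,\dots,w_{l-1}\}$ regardless of its coin flip, since exactly one of its two members lies in $X$; a straddling pair whose internal element is a $y$ contributes $0$ or $1$ to that intersection, independently with probability $1/2$. Consequently, $|X\cap\{w_1,\dots,w_{l-1}\}|$ equals a deterministic constant (the number of fully internal pairs) plus an independent $\mathrm{Bin}(M,1/2)$ random variable, where $M$ is the number of straddling $y$'s in $\{1,\dots,l-1\}$, and the claim reduces to a single binomial point-mass computation.

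It then remains to identify, in each of the two cases of the statement, the relevant counts. If $l\in\{\xi_j+1,\dots,\xi_{j+1}-1\}$, then $w_l$ is a $y$ whose $z$-partner lies at a position $>l$; the $z$'s strictly inside $\{1,\dots,l-1\}$ are precisely $\xi_1,\dots,\xi_j$, each paired with a $y$ also in that range, so I expect $j$ fully internal pairs and $M=l-1-2j$ straddling $y$'s. Combining the factor $1/2$ for $\{w_l\in X\}$ with the probability that $\mathrm{Bin}(M,1/2)$ equals $k-j$ should yield the first formula. If $l=\xi_j$, then $w_l$ is a $z$ whose partner $y^*$ lies at a position $<l$; conditioning on $w_l\in X$ forces $y^*\in S$, so the pair $(y^*,w_l)$ contributes $0$ to the count in $\{1,\dots,l-1\}$ deterministically. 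Separating this pair off leaves $j-1$ fully internal pairs and $M=l-2j$ straddling $y$'s, and requiring $k+1-j$ of them to lie in $X$ produces the second formula.

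The main bookkeeping subtlety I anticipate is in the second case: $y^*$ is a $y$ at a position $<l$, but the randomness of its coin flip has already been absorbed by conditioning on $w_l\in X$, so it must not be double-counted either as fully internal or as a free straddling $y$. Keeping $y^*$ separately as a ``spent'' pair contributing $0$ is what makes the residual binomial count on the $l-2j$ genuinely straddling $y$'s correct and yields the exponent $\xi_j-2j+1$ in the denominator.
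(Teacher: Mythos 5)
Your proposal is correct and takes essentially the same approach as the paper: condition on the pair structure, write $\{X_{(k+1)}=w_l\}$ as $\{w_l\in X\}$ intersected with the event that exactly $k$ of $w_1,\dots,w_{l-1}$ lie in $X$, and observe that fully internal pairs contribute deterministically while straddling $y$'s contribute an independent $\mathrm{Bin}(\cdot,1/2)$ count. Your treatment of the two cases, including the subtlety about separating off the partner $y^*$ of $w_{\xi_j}$ once $w_{\xi_j}\in X$ is conditioned on, matches the paper's argument.
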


\begin{proof}[Proof of Lemma~\ref{lem:prophet}]

We divide the proof into two cases, depending on whether $l= {\xi_j}$ for some $j \in \{0, \dots, k+1\}$ or not.  \\

 \noindent{\bf Case 1:} Suppose that $l \in \{\xi_j+1, \dots, \xi_{j+1}-1\}$ for some $j \in \{0, \dots, k\}$. Note that $ X_{(k+1)}=w_l$ if and only if $w_l \in X$ and there are exactly $k$ values in $X$ that are larger than $w_l$. 
 
 Since $l \in \{\xi_j+1, \dots, \xi_{j+1}-1\}$, we have, conditioned on $w_l \in X$, that there are $j+1$ values in $X$ and $j$ in $S$ with probability 1. Therefore, the probability that exactly $k$ values in $X$ are among the $l-1$ largest values is given by 
   \[
  \binom{l-2j-1}{k-j} \frac{1}{2^{l-2j-1}}.
  \]

 On the other hand, $\PP(w_l \in X)=1/2$, and thus we conclude that in this case, 
    \begin{equation}  \label{eqn:prophet:prob1}
       \PP(X_{(k+1)}=w_l)=  \binom{l-2j-1}{k-j} \frac{1}{2^{l-2j}}.    
    \end{equation} 
    \\

\noindent{\bf Case 2:} Suppose that $l= \xi_j$ for some $j \in \{1, \dots, k+1\}$. The analysis in this case is similar to that of Case 1. However, the probability of having exactly $k$ values in $X$ greater than $w_l$ is now given by

  \[
  \binom{l-2j}{k-(j-1)} \frac{1}{2^{l-2j}}.
  \]
  
 In effect, conditioned on $w_l \in X$, there are $j-1$ values in $X$ greater than $w_l$ and $j$ values greater than $w_l$ in $S$, with probability one. Thus, we need to compute the probability that exactly $k-(j-1)$ additional values in $X$ come from the $l-2j$ remaining elements. This probability is given by the expression above.

Therefore, in this case, 

\begin{equation}\label{eqn:prophet:prob2}
     \PP(X_{(k+1)}=w_l)=   \binom{l-2j}{k-(j-1)} \frac{1}{2^{l-2j+1}}
\end{equation}
Combining \eqref{eqn:prophet:prob1} and \eqref{eqn:prophet:prob2}, we obtain the desired result.
\end{proof}

We now use the previous lemma to lower bound $\PP(X_{(k+1)}=w_{l+1})$ in terms of $\PP(X_{(k+1)}=w_l)$. As suggested by the expression in Lemma \ref{lem:prophet}, we will need to distinguish between the cases where $l$ and $l+1$ are some $\xi_j$ or not. \\

\begin{lemma} \label{lem:ll+1}
\begin{enumerate}[label=\alph*)]
    \item \label{lem:ll+1a}
    Let $j \in \left\{0,\dots,k\right\}$ and $l \in \{\xi_j+1, \dots, \xi_{j+1}-1 \}$. 
     \[    
   \PP(X_{(k+1)}=w_{l+1}) \geq  \begin{cases}
  \displaystyle \frac{1}{2} \PP(X_{(k+1)}=w_{l})  &\quad\text{if 
       } l+1 \in \{\xi_j+1, \dots, \xi_{j+1}-1 \} \\
   \noalign{\vskip9pt}
  \PP(X_{(k+1)}=w_{l})  &\quad\text{if 
       } l+1= \xi_{j+1}. \\
     \end{cases}
\]
    \item \label{lem:ll+1b}
    Assume that $2k+2=\xi_{j}$, for some $j \in \left\{1,\dots,k\right\}$. 
    Then
    \[    
   \PP(X_{(k+1)}=w_{2k+3}) \geq  \begin{cases}
  \displaystyle \frac{1}{2} \PP(X_{(k+1)}=w_{2k+2})  &\quad\text{if 
       } 2k+3 \neq \xi_{j+1} \\
   \noalign{\vskip9pt}
  \PP(X_{(k+1)}=w_{2k+2})  &\quad\text{if 
       } 2k+3= \xi_{j+1}. \\
     \end{cases}
\]
     \end{enumerate}
     \end{lemma}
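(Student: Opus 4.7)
The plan is a direct case analysis by substituting the explicit formulas from Lemma~\ref{lem:prophet} into the ratio $\PP(X_{(k+1)}=w_{l+1})/\PP(X_{(k+1)}=w_{l})$ and simplifying with two elementary binomial identities: Pascal's rule and the symmetry $\binom{n}{r}=\binom{n}{n-r}$. There are four sub-cases in total, corresponding to the two branches of part~\ref{lem:ll+1a} and the two branches of part~\ref{lem:ll+1b}; each reduces to a one-line computation once the right formula is chosen at each position.

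For part~\ref{lem:ll+1a}, the denominator always uses the non-$\xi$ formula $\PP(X_{(k+1)}=w_{l}) = \binom{l-1-2j}{k-j}/2^{l-2j}$. If $l+1 \in \{\xi_j+1,\dots,\xi_{j+1}-1\}$, the non-$\xi$ formula applies again at $l+1$, so the ratio is $\binom{l-2j}{k-j}/(2\binom{l-1-2j}{k-j})$, which is at least $1/2$ by Pascal's identity $\binom{l-2j}{k-j} = \binom{l-1-2j}{k-j} + \binom{l-1-2j}{k-j-1}$. If instead $l+1=\xi_{j+1}$, then the $\xi$-formula applied at index $j+1$ gives $\PP(X_{(k+1)}=w_{l+1}) = \binom{\xi_{j+1}-2(j+1)}{k-j}/2^{\xi_{j+1}-2j-1} = \binom{l-1-2j}{k-j}/2^{l-2j}$, which is exactly $\PP(X_{(k+1)}=w_{l})$; the ratio is $1$.

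For part~\ref{lem:ll+1b}, the denominator now uses the $\xi$-formula at index $j$: $\PP(X_{(k+1)}=w_{2k+2}) = \binom{2k+2-2j}{k+1-j}/2^{2k+3-2j}$. If $2k+3\neq\xi_{j+1}$, the non-$\xi$ formula gives $\PP(X_{(k+1)}=w_{2k+3}) = \binom{2k+2-2j}{k-j}/2^{2k+3-2j}$, so the ratio reduces to $\binom{2k+2-2j}{k-j}/\binom{2k+2-2j}{k+1-j} = (k+1-j)/(k+2-j)$, which is at least $1/2$ precisely because $j\leq k$. If $2k+3=\xi_{j+1}$, then the $\xi$-formula at index $j+1$ gives $\PP(X_{(k+1)}=w_{2k+3})=\binom{2k+1-2j}{k-j}/2^{2k+2-2j}$, and the needed identity $\binom{2k+2-2j}{k+1-j} = 2\binom{2k+1-2j}{k-j}$ follows from Pascal's rule combined with the symmetry $\binom{2k+1-2j}{k-j} = \binom{2k+1-2j}{k+1-j}$; the ratio is again exactly $1$.

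The main obstacle is not any single estimate but the bookkeeping: one must track which branch of Lemma~\ref{lem:prophet} applies to each of $l$ and $l+1$ in each sub-case, and carefully handle the shift in the exponent of $2$ that occurs when crossing a $\xi_{j+1}$. Once the formulas are lined up correctly, every inequality is Pascal, symmetry, or a trivial arithmetic comparison involving $j$ and $k$.
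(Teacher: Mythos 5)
Your proof is correct and takes essentially the same route as the paper: substitute the explicit formulas of Lemma~\ref{lem:prophet} into the ratio in each of the four sub-cases, then simplify using Pascal's rule and binomial symmetry. The only cosmetic difference is that the paper writes each sub-case as a direct chain of equalities/inequalities rather than isolating the ratio, but the computations and the identities invoked are identical.
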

     \begin{proof}[Proof of Lemma~\ref{lem:ll+1}]
     \begin{enumerate}[label=\alph*)]
     \item[$a)$]
     Assume $l+1 \in \{\xi_j+1, \dots, \xi_{j+1}-1 \}$. We have
     \begin{eqnarray*}
      \PP(X_{(k+1)}=w_{l+1}) &=& \frac{\binom{l-2j}{k-j}}{2^{l+1-2j}}
      \\
      &\geq& \frac{1}{2} \cdot \frac{\binom{l-1-2j}{k-j}}{2^{l-2j}}
      \\
      &=& \frac{1}{2} \cdot \PP(X_{(k+1)}=w_{l})
     \end{eqnarray*}
   Assume that $l+1=\xi_{j+1}$. We have
      \begin{eqnarray*}
      \PP(X_{(k+1)}=w_{l+1}) &=& \frac{\binom{\xi_{j+1}-2j-2}{k-j}}{2^{\xi_{j+1}-2j-1}}
      \\
      &=& \PP(X_{(k+1)}=w_{l})  
     \end{eqnarray*}
     \item[$b)$]
     Assume that $2k+3 \neq \xi_{j+1}$. 
     We have
     \begin{eqnarray*}
     \PP(X_{(k+1)}=w_{2k+3})&=& \frac{\binom{2k+2-2 j}{k-j}}{2^{2k+3-2j}} 
     \\
     &=& \left(\frac{k+1-j}{k+2-j} \right) \frac{\binom{2k+2-2 j}{k+1-j}}{2^{2k+3-2j}} 
     \\
     &\geq & \frac{1}{2} \PP(X_{(k+1)}=w_{2k+2})
     \end{eqnarray*}
     Assume that $2k+3 = \xi_{j+1}$.
     We have
     \begin{eqnarray*}
     \PP(X_{(k+1)}=w_{2k+3})&=& \frac{\binom{2k+1-2 j}{k-j}}{2^{2k+2-2j}} 
     \\
     &=&  \frac{\frac{1}{2} \binom{2k+2-2 j}{k+1-j}}{2^{2k+2-2j}}
     \\
     &=& \PP(X_{(k+1)}=w_{2k+2})
     \end{eqnarray*}
     \end{enumerate}
     \end{proof}
     We are now ready to prove Lemma \ref{lem:bound:msak+1:lem2}. 
     \begin{proof}[Proof of Lemma~\ref{lem:bound:msak+1:lem2}]
     By inequality \eqref{eq:proof:msak+1}, it is enough to prove that 
     \begin{equation*}
     \sum_{l=2(k+1)}^{\xi_{k+1}-1} w_l \PP(X_{(k+1)}=w_{l+1}) 
     \geq \frac{1}{2} \sum_{l=2k+2}^{\xi_{k+1}} w_{l} \PP(X_{(k+1)}=w_{l}).
     \end{equation*}
     \noindent{\bf Case 1.} $2k+3=\xi_{j}$, for some $j \in \left\{1,\dots,k\right\}$.
     
     By Lemma \ref{lem:ll+1} \ref{lem:ll+1b}, we have
     \begin{equation} \label{eqn:ll+1}
      \PP(X_{(k+1)}=w_{2k+3}) \geq  \frac{1}{2} \PP(X_{(k+1)}=w_{2k+2})+\frac{1}{2} \PP(X_{(k+1)}=w_{2k+3})
     \end{equation}
     We deduce that
     \begin{eqnarray*}
     \sum_{l=2(k+1)}^{\xi_{k+1}-1} w_l \PP(X_{(k+1)}=w_{l+1}) &=& w_{2k+2} \PP(X_{(k+1)}=w_{2k+3})+\sum_{l=2k+3}^{\xi_{k+1}-1} w_l \PP(X_{(k+1)}=w_{l+1}) 
     \\
     &\geq & \frac{1}{2} w_{2k+2} \PP(X_{(k+1)}=w_{2k+2})+\frac{1}{2} w_{2k+3} \PP(X_{(k+1)}=w_{2k+3})
     \\
     &+&
     \sum_{l=2k+3}^{\xi_{k+1}-1} w_{l+1} \PP(X_{(k+1)}=w_{l+1}) 
     \\
     &\geq& \frac{1}{2} \sum_{l=2k+2}^{\xi_{k+1}} w_{l} \PP(X_{(k+1)}=w_{l}),
     \end{eqnarray*}
     where in the second-to-last inequality, we used \eqref{eqn:ll+1} and the fact that $w_{2k+2} \geq w_{2k+3}$ and $w_l \geq w_{l+1}$. \\
     
          \noindent{\bf Case 2.} $2k+3 \in \left\{\xi_{j}+1,\dots,\xi_{j+1}-1 \right\}$ for some $j \in \left\{0,\dots,k\right\}$.

The sum $\sum_{l=2k+2}^{\xi_{k+1}-1} w_l \m{P}(X_{(k+1)}=w_{l+1})$ can be decomposed as 
\[
\sum_{l=2k+2}^{\xi_{j+1}-2} w_l \m{P}(X_{(k+1)}=w_{l+1})+  w_{\xi_{j+1}-1} \PP(X_{(k+1)}=w_{\xi_{j+1}})
    +\sum_{l=\xi_{j+1}}^{\xi_{k+1}-1} w_{l} \PP(X_{(k+1)}=w_{l+1}).
\]

By using  Lemma \ref{lem:ll+1} and the fact that $w_l \geq w_{l+1}$, we can lower bound the expression above by 
\[
\frac{1}{2} \sum_{l=2k+2}^{\xi_{j+1}-2} w_l \m{P}(X_{(k+1)}=w_{l})+ w_{\xi_{j+1}-1} \PP(X_{(k+1)}=w_{\xi_{j+1}-1})
    +\sum_{l=\xi_{j+1}}^{\xi_{k+1}-1} w_{l+1} \PP(X_{(k+1)}=w_{l+1}),
\]
which is at least
\[
\frac{1}{2}  \sum_{l=2k+2}^{\xi_{k+1}} w_l \m{P}(X_{(k+1)}=w_{l}),
\]
as we wanted to see. 

\end{proof}

\subsubsection{Proof of Proposition \ref{lem:bound:msai}}

Before proving  Proposition \ref{lem:bound:msai}, we introduce one technical lemma that gives a lower bound for the probability of $\overline{MSA}_i$ picking a value $w_l$ conditional on $w_l$ being the $(k+1)$-largest value in the set $X$. \\ 
\begin{lemma} \label{ineq:MSA}
Let $l \in \{1, \dots, 2k+1\}$.

\begin{enumerate}
    \item[a)] If $w_l$ is not blocked and not ill-paired, for all $i \in \left\{l-k, \dots, k \right\}$ it holds 
    \[
    \PP(\overline{MSA}_i=w_l | X_{(k+1)}=w_l) \geq 2^{-k-i+l-1} .
    \]
    \item[b)] If $w_l$ is blocked, and that either $w_l$ is not ill-paired, or it is ill-paired and $m_l<p$, we have

     \[    
   \PP(\overline{MSA}_i=w_l | X_{(k+1)}=w_l) \geq  \begin{cases}
  \displaystyle 2^{-k-i+l-1}   &\quad\text{if 
       } i \in \{l-k, \dots, m_l-k-3\}, \\
   \noalign{\vskip9pt}
 \displaystyle2^{-k-i+l}   &\quad\text{if 
       } i = m_l-k-2. \\
     \end{cases}
\]
  \item[c)] Assume $w_l$ is not blocked and ill-paired. Then, 
     \[    
   \PP(\overline{MSA}_i=w_l | X_{(k+1)}=w_l) \geq  \begin{cases}
  \displaystyle 2^{-k-i+l-1}   &\quad\text{if 
       } i \in \{l-k, \dots, p-k-2\}, \\
   \noalign{\vskip9pt}
 \displaystyle2^{-k-i+l}   &\quad\text{if 
       } i \in \{p-k, \dots, k\}. \\
     \end{cases}
\]
\item[d)] Assume $w_l$ is blocked and ill-paired, and that $m_l>p$. Then,
     \[    
   \PP(\overline{MSA}_i=w_l | X_{(k+1)}=w_l) \geq  \begin{cases}
  \displaystyle 2^{-k-i+l-1}   &\quad\text{if 
       } i \in \{l-k, \dots, p-k-2\}, \\
   \noalign{\vskip9pt}
 \displaystyle2^{-k-i+l}   &\quad\text{if 
       } i \in \{p-k, \dots,m_l-k-3\},\\
       \noalign{\vskip9pt}
 \displaystyle2^{-k-i+l+1}   &\quad\text{if 
       } i = m_l-k-2.
     \end{cases}
\]
\end{enumerate}
\end{lemma}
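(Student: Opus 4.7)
The plan is to prove Lemma~\ref{ineq:MSA} by exhibiting, in each case, an explicit event $\mathcal{E}$ on the $S$/$X$ coin flips (the flips that decide, for each pair $(y_i,z_i)$, which element goes to the sample set $S$ and which to the value set $X$) such that $\mathcal{E}$ is consistent with $X_{(k+1)}=w_l$, $\mathcal{E}$ forces $\overline{MSA}_i=w_l$, and $\PP(\mathcal{E}\mid X_{(k+1)}=w_l)$ is at least the claimed bound. Case~(a) is the prototype; cases~(b)--(d) will follow the same template with one extra observation that saves a factor of two.

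For case~(a), the natural choice is $\tau=w_r$ with $r=i+k+1$, together with the event $\mathcal{E}$ that all of $w_{l+1},\dots,w_r$ lie in $S$. Under the conditioning $X_{(k+1)}=w_l$ there are already $l-1-k$ samples in $\{w_1,\dots,w_{l-1}\}$, so $\mathcal{E}$ makes $w_r$ the $i$-th largest sample. The observable values above $\tau$ are $\{w_l\}\cup\{w_j\in X : l<j<r\}$, because Nature has removed the $k$ top values of $X$ and these lie in $\{w_1,\dots,w_{l-1}\}$. Under $\mathcal{E}$ the second set is empty, and the ``not ill-paired'' hypothesis places $w_l$'s partner at some position $p>2k+1\geq r$, so $w_l$ is not from $F_{j^*}$ and is therefore selected by the algorithm.

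To estimate $\PP(\mathcal{E}\mid X_{(k+1)}=w_l)$, I use that ``not blocked'' together with ``not ill-paired'' rules out any pair having both endpoints in $\{w_l,\dots,w_{2k+1}\}$. Hence each $w_j\in\{w_{l+1},\dots,w_r\}$ is paired with some $w_a$ lying either in $\{w_1,\dots,w_{l-1}\}$ or in $\{w_{r+1},\dots\}\setminus\{w_l,\dots,w_{2k+1}\}$. In the second case, $\{w_j\in S\}$ is independent of the conditioning and has probability $1/2$. In the first case, $w_j\in S$ iff $w_a\in X$, and a direct combinatorial count (of the same flavour as Lemma~\ref{lem:prophet}) shows that the conditioning $\{k\text{ of }w_1,\dots,w_{l-1}\text{ in }X\}$ makes $\PP(w_a\in X\mid X_{(k+1)}=w_l)\geq 1/2$. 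Because the relevant pair-flips are disjoint and the conditioning factorises across them, I get $\PP(\mathcal{E}\mid X_{(k+1)}=w_l)\geq(1/2)^{r-l}=2^{-k-i+l-1}$, which is exactly the bound of case~(a).

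For cases~(b), (c), (d), I will reuse this template but modify $\mathcal{E}$ so that one coin flip is deterministically forced by the blocking/ill-pairing structure, thereby saving a factor of two. When $w_l$ is ill-paired with partner $w_p$, $l<p\leq 2k+1$, the conditioning $w_l\in X$ automatically yields $w_p\in S$; as soon as $p$ falls in $\{l+1,\dots,r\}$, this removes one free flip and gives the factor $2^{-k-i+l}$. When $w_l$ is blocked, choosing $r=m_l-1$ and exploiting the blocking pair $(w_{r'},w_{m_l})$ similarly spares one coin flip, because fixing $w_{r'}\in S$ also fixes $w_{m_l}\in X$, and $w_{m_l}$ lies below $\tau$ and therefore plays no role in the selection above $\tau$. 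Case~(d), with $m_l>p$, combines both savings. The main technical obstacle will be the bookkeeping in the conditional probability calculation: I need to verify in each subcase that the forced coin flips remain compatible with the conditioning, and that the remaining free pair-flips truly factor independently and each contribute at least $1/2$ under $X_{(k+1)}=w_l$. I plan to establish this by explicitly partitioning the pairs into disjoint groups (internal vs.\ cross-range) and handling each group with the combinatorial identity underlying Lemma~\ref{lem:prophet}.
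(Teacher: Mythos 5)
Your case (a) and case (c) templates are sound and essentially match the paper: for (a) you pick $r=k+i+1$, force $\{w_{l+1},\dots,w_r\}\subset S$, and show each flip costs at most a factor $1/2$ conditionally on $X_{(k+1)}=w_l$; for (c) the ill-pairing constraint $w_l\in X\Rightarrow w_p\in S$ removes one required flip and buys the extra factor of two. (One note: the coin flips do not genuinely factorise under the conditioning; the paper instead runs a chain-rule argument, bounding each conditional factor by $1/2$, which is what you would need to do as well.)

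The argument for cases (b) and (d) has a genuine gap. You claim that, taking $r=m_l-1$, ``fixing $w_{r'}\in S$ also fixes $w_{m_l}\in X$'' spares a coin flip. But $w_{m_l}$ lies at position $m_l > r$, so its fate was never part of the event $\mathcal{E}=\{w_{l+1},\dots,w_{r}\}\subset S$; determining it for free does not reduce the number of required flips, and the flip of the pair $(w_{r'},w_{m_l})$ still contributes exactly $1/2$. What actually buys the extra factor of two in the paper is different in kind: for $i=m_l-k-2$ one must account for a \emph{second, disjoint} event $\{\tau_i=w_{m_l}\}$, under which the only $X$-element strictly between the threshold and $w_l$ is the partner $w_{m'}=w_{r'}$ of the threshold $w_{m_l}$; because $\overline{MSA}_i$ by definition skips values from the threshold's own distribution, the algorithm still selects $w_l$ in this event. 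Summing the contributions of $\{\tau_i=w_{k+i+1}\}$ and $\{\tau_i=w_{m_l}\}$, each at least $2^{-k-i+l-1}$ (resp.\ $2^{-k-i+l}$ in case (d), using the ill-pairing saving in both), yields the stated bounds $2^{-k-i+l}$ and $2^{-k-i+l+1}$. Without this second event, your single-event construction can only give $2^{-k-i+l-1}$ in (b) and $2^{-k-i+l}$ in (d), falling short by a factor of two. This is not a bookkeeping detail you can postpone: the discard rule of $\overline{MSA}_i$ is the central ingredient of cases (b) and (d), and it is absent from your plan.
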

\begin{proof}[Proof of Lemma~\ref{ineq:MSA}]
Let $l \in \{1, \dots, 2k+1\}$. 
\begin{enumerate}
\item[$a$)] 
The proof is very similar to the one of Lemma \ref{ineq:MSA:2} \ref{ineq:MSA:2a}, up to replacing $k+1$ by $i$. For sake of completeness, we draw the main lines. 
Take $i \in \{l-k, \dots, k\}$. We want to analyze
 $\m{P}(\overline{MSA}_i=w_l| X_{(k+1)}=w_l) $.
 First, note that since $l$ is not ill-paired, $w_l$ is not paired with $w_{k+i+1}$. Then, 
 \begin{eqnarray*}
  \m{P}(\overline{MSA}_i=w_l| X_{(k+1)}=w_l) & \geq &
      \m{P}(\left\{\overline{MSA}_i=w_l\right\} \cap \left\{\tau_i=w_{k+i+1}\right\} |X_{(k+1)}=w_l) 
      \\
      &=& 
       \m{P}(\tau_i=w_{k+i+1}|X_{(k+1)}=w_l),
  \end{eqnarray*}
  where the equality stems from the fact that, when $X_{(k+1)}=w_l$ and $\tau_{i}=w_{k+i+1}$, all the elements in $\left\{w_{l+1},\dots,w_{k+i}\right\}$ must be in $S$, and then $\overline{MSA}_i$ picks $w_l$, because it is not paired with the threshold $\tau_i$. 
  
  Therefore, it is enough to prove that $\m{P}(\tau_i=w_{k+i+1}|X_{(k+1)}=w_l) \geq 2^{-k-i+l-1}$.  Given $X_{(k+1)}=w_l$, in order for $\tau_i=w_{k+i+1}$ to hold, it is necessary and sufficient that all the elements in $\left\{w_{l+1},\dots,w_{k+i+1}\right\}$ belong to $S$. This event occurs with probability greater than $2^{-k-i+l-1}$, by a similar computation as in the proof of Lemma \ref{ineq:MSA:2} \ref{ineq:MSA:2a}. \\

\item[$b)$] Take 
       $ i \in \{l-k, \dots, m_l-k-3\}$. In this case, $ k+i+1<m_l$, hence the pair that blocks $w_l$ is smaller than $w_{k+i+1}$. Moreover, since either $w_l$ is not ill-paired or $m_l<p$, $w_l$ is not paired with $w_{k+i+1}$. We can therefore replicate the same computations as in $a)$, and thus  obtain the claimed inequality.  
       
       If $i=m_l-k-2$, we can replicate the same computations as in $a)$ too, which yields: 
       \begin{eqnarray*}
      \m{P}(\overline{MSA}_i=w_l \cap \left\{\tau_i=w_{k+i+1} \right\} |X_{(k+1)}=w_l) 
      &\geq & 2^{-k-i+l-1}. 
  \end{eqnarray*}
To obtain the desired lower bound, we will consider in addition the case where $\tau_i=w_{m_l}$. 
    Indeed, whenever $X_{(k+1)}=w_l$ and $\tau_i=w_{m_l}$, the only element in 
  $X$ that is below $w_l$ and above the threshold $\tau_i$ is $w_{m_l}$'s pair, namely $w_{m'}$. By definition of $\overline{MSA}_i$, such an element is not selected, and therefore $\overline{MSA}_i$ selects $w_l$. We deduce that
   \begin{eqnarray} \nonumber
      \m{P}(\left\{\overline{MSA}_i=w_l\right\} \cap \left\{\tau_i=w_{m_l} \right\} |X_{(k+1)}=w_l) 
      &=& 
       \m{P}(\tau_i=w_{m_l}|X_{(k+1)}=w_l).
  \end{eqnarray}
    Knowing $X_{(k+1)}=w_l$, in order to get $\tau_i=w_{m_l}=w_{k+i+2}$, it is necessary and sufficient that all the elements in $\left\{w_{l+1},\dots,w_{k+i+2}\right\}\setminus \left\{m'\right\}$ belong to $S$, which happens with probability higher than $2^{-k-i+l-1}$, 
    by a similar computation as in the proof of Lemma \ref{ineq:MSA:2} \ref{ineq:MSA:2a}. Then, we have
  \begin{eqnarray*}
      \m{P}(\overline{MSA}_i=w_l \cap \left\{\tau_i=w_{k+i+2} \right\} |X_{(k+1)}=w_l) 
      &\geq & 2^{-k-i+l-1}. 
  \end{eqnarray*}
 
We conclude that  
\begin{eqnarray*}
      \m{P}(\overline{MSA}_i=w_l |X_{k+1}=w_l)  &\geq&  \m{P}(\overline{MSA}_i=w_l \cap \left\{\tau_i=w_{k+i+2} \right\} |X_{(k+1)}=w_l) \\
      &+&   \m{P}(\overline{MSA}_i=w_l \cap \left\{\tau_i=w_{k+i+1} \right\} |X_{(k+1)}=w_l)   \\ 
      &\geq & 2^{-k-i+l}, 
  \end{eqnarray*}
which is the desired result. \\
\item[$c)$] If $i \in \{l-k, \dots, p-k-2\}$, the argument proceeds as in part $a)$.

Take $i \in \left\{p-k,\dots,k\right\}$. In this case, $k+i+1>p$, and then $w_l$ is not paired with $w_{k+i+1}$. We therefore have
     \begin{eqnarray*} 
  \m{P}(\overline{MSA}_i=w_l| X_{(k+1)}=w_l) & \geq &
      \m{P}(\overline{MSA}_i=w_l \cap \left\{\tau_i=w_{k+i+1}\right\} |X_{(k+1)}=w_l) 
      \\
      &=& 
       \m{P}(\tau_i=w_{k+i+1}|X_{(k+1)}=w_l),
  \end{eqnarray*}

  and to obtain the result it is enough to show that $$ \m{P}(\tau_i=w_{k+i+1}|X_{(k+1)}=w_l) \geq 2^{-k-i+l}.$$
  
   Given that $X_{(k+1)}=w_l$, we know that $w_p$ is in $S$, since $(w_p,w_l)$ is a pair. Then, in order to get $\tau_i=w_{k+i+1}$, it is necessary and sufficient that all the elements in $\left\{w_{l+1},\dots,w_{k+i+1}\right\} \setminus \left\{p\right\}$ belong to $S$, which happens with probability at least $2^{-k-i+l}$, by a similar computation as in the proof of Lemma \ref{ineq:MSA:2} \ref{ineq:MSA:2a}. We deduce that 
   \begin{eqnarray}\label{ineq:MSA_i}  \nonumber
      \m{P}(\overline{MSA}_i=w_l| X_{(k+1)}=w_l) &\ge& 
 2^{-k-i+l},
  \end{eqnarray}
which is what we wanted to show. \\ 
  \item[$d)$]
  The first two cases can be proved as in $a)$ and $c)$. Let $i=m_l-k-2$; that is $k+i+2=m_l$. We call $w_{m'}$ the pair of $w_{m_l}$. 
  
  First, we can replicate the computations of Case $c)$, and obtain:
  \begin{eqnarray*}
      \m{P}( \{\overline{MSA}_i=w_l \} \cap \left\{\tau_i=w_{k+i+1} \right\} |X_{(k+1)}=w_l) 
      &\geq & 2^{-k-i+l}. 
  \end{eqnarray*}
As in Case $b)$, in order to obtain the claimed bound of the lemma, we need to consider the event $\left\{\tau_i=w_{k+i+2} \right\}$. We have
   \begin{eqnarray*} \nonumber
      \m{P}(\overline{MSA}_i=w_l| X_{(k+1)}=w_l) & \geq &   \m{P}( \{\overline{MSA}_i=w_l \} \cap \left\{\tau_i=w_{k+i+2} \right\} |X_{(k+1)}=w_l) \\
      &=& 
       \m{P}(\tau_i=w_{k+i+2}|X_{(k+1)}=w_l),
  \end{eqnarray*}
   where the equality stems from the fact that, when $X_{(k+1)}=w_l$ and $\tau_{i}=w_{k+i+2}$, all the elements in $\left\{w_{l+1},\dots,w_{k+i+1}\right\} \setminus \left\{w_{m'}\right\}$ must be in $S$ and then $\overline{MSA}_i$ picks $w_l$, because $w_l$ is not paired with $w_{k+i+2}$.
    Since $w_l$ and $w_p$ are paired, given that $X_{(k+1)}=w_l$, we have that $w_p$ lies in $S$. Moreover, since $w_{k+i+2}$ and $w_{m'}$ are paired, if $w_{k+i+2}$ lies in $S$, then $w_{m'}$ lies in $X$. Hence, in order to get $\tau_i=w_{k+i+2}$, it is necessary and sufficient that all the elements in $\left\{w_{l+1},\dots,w_{k+i+2}\right\}\setminus \left\{w_{m'},w_p\right\}$ belong to $S$, which happens with probability at least $2^{-k-i+l}$, by a similar computation as in the proof of Lemma \ref{ineq:MSA:2} \ref{ineq:MSA:2a}. 
    We deduce that 
  \begin{eqnarray*}
      \m{P}(\{\overline{MSA}_i=w_l \} \cap \left\{\tau_i=w_{k+i+2} \right\} |X_{(k+1)}=w_l) 
      &\geq & 2^{-k-i+l}. 
  \end{eqnarray*}
We conclude that  
\begin{eqnarray*}
      \m{P}(\overline{MSA}_i=w_l |X_{(k+1)}=w_l)  &\geq&  \m{P}(\{\overline{MSA}_i=w_l \} \cap \left\{\tau_i=w_{k+i+2} \right\} |X_{(k+1)}=w_l) \\
      &+&   \m{P}(\{\overline{MSA}_i=w_l \} \cap \left\{\tau_i=w_{k+i+1} \right\} |X_{(k+1)}=w_l)   \\ 
      &\geq & 2^{-k-i+l+1}.
  \end{eqnarray*}
\end{enumerate}
\end{proof}

\begin{proof}[Proof of Proposition~\ref{lem:bound:msai}]
Note that
 \begin{eqnarray*}
\sum_{i=1}^{k} \m{E}(\overline{MSA}_i) & \geq&  \sum_{i=1}^{k} \sum_{l=1}^{2k+1} \m{P}(\overline{MSA}_{i}=w_l | X_{(k+1)}=w_l) w_l \PP(X_{(k+1)}=w_l) \\
&=& \sum_{l=1}^{2k+1}  w_l \PP(X_{(k+1)}=w_l) \sum_{i=1}^{k} \m{P}(\overline{MSA}_{i}=w_l | X_{(k+1)}=w_l) .
\end{eqnarray*}

In the remainder of the proof we show that for each $l \in \{1, \dots, 2k+1\}$, 
\[
\sum_{i=1}^{k} \m{P}(\overline{MSA}_{i}=w_l | X_{(k+1)}=w_l) = 1- \delta_l
, \]
 where we recall that

\[ \delta_l=  \left\{
\begin{array}{ll}
      2^{-2k+l-1} & \text{ if } w_l \text{ is not blocked and not ill-paired} \\
     2^{-2k+l} & \text{ if } w_l \text{ is not blocked and ill-paired} \\
      0 & \text{ otherwise.}\\
\end{array} 
\right. \] \\

\noindent{\bf Case 1.} $w_l$  is not blocked and not ill-paired.

In this case, by Lemma~\ref{ineq:MSA}

\[
\sum_{i=1}^{k} \m{P}(\overline{MSA}_{i}=w_l | X_{(k+1)}=w_l) = \sum_{i=l-k}^{k} 2^{-k-i+l-1}=1 - 2^{-2k + l -1}. 
\]
\\

\noindent{\bf Case 2.} $w_l$  is not blocked and ill-paired.
\[
\sum_{i=1}^{k} \m{P}(\overline{MSA}_{i}=w_l | X_{(k+1)}=w_l) = \sum_{i=l-k}^{p-k-2} 2^{-k-i+l-1}+ \sum_{i=p-k}^{k} 2^{-k-i+l} =1 - 2^{-2k + l}.
\]
\\

\noindent{\bf Case 3.} $w_l$ is blocked, and that either $w_l$ is not ill-paired, or it is ill-paired and $m_l<p$. 

In this case, 
\[
\sum_{i=1}^{k} \m{P}(\overline{MSA}_{i}=w_l | X_{(k+1)}=w_l) = \sum_{i=l-k}^{m_l-k-3} 2^{-k-i+l-1}+  2^{l-m_l+2}=1.
\]
\\

\noindent{\bf Case 4.} $w_l$ is blocked and ill-paired, and $m_l>p$.
In this case, 
\[
\sum_{i=1}^{k} \m{P}(\overline{MSA}_{i}=w_l | X_{(k+1)}=w_l) = \sum_{i=l-k}^{p-k-2} 2^{-k-i+l-1}+ \sum_{i=p-k}^{m_l-k-3} 2^{-k-i+l}+ 2^{-m_l+l+3} = 1.
\]

Putting everything together, we obtain the desired result.
\end{proof}

\subsection{Proof of Theorem \ref{thm:gen1'}} \label{sec:thm1:1'}

In order to prove Theorem~\ref{thm:gen1'}, we use algorithms $MSA_1, \dots, MSA_{k+1}$ defined in Section~\ref{sec:main}.
Then, we define the randomized strategy $MSA_{RAND}$ as follows:
(1) before the game starts, select a number $I$ in 
$\left\{1,\dots,k+1\right\}$ uniformly at random, that is, $I=i$ with probability $1/(k+1)$. (2) Play $MSA_I$. \\

Note that, unlike the randomized algorithm we used to prove Theorem 1, here $MSA_{RAND}$ does not need access to the identity of the arriving variables. In the next proposition, we prove that $MSA_{RAND}$ has a competitive ratio of at least $\frac{1}{2(k+1)}$. This directly implies Theorem~\ref{thm:gen1'}. Indeed, $MSA_{RAND}$ is a randomization over single-threshold algorithms. By linearity of expectation, there exists a single-threshold strategy in the support of $MSA_{RAND}$ that performs as well as $MSA_{RAND}$. 

\begin{proposition} \label{prop:NI}
    The strategy $MSA_{RAND}$ has a competitive ratio $\frac{1}{2(k+1)}$.
\end{proposition}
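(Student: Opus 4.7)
My plan is to reduce Proposition~\ref{prop:NI} to showing $\sum_{i=1}^{k+1}\E(MSA_i)\geq \tfrac12\,\E(X_{(k+1)})$; dividing by $k+1$ then yields the claimed competitive ratio. The strategy is to mimic, for each individual $MSA_i$, the two-regime decomposition used in Proposition~\ref{lem:bound:msak+1}, and then aggregate the contributions across $i\in\{1,\dots,k+1\}$. Concretely, for each $i$ I would split
\[
\E(MSA_i)\;\geq\;\E\bigl(MSA_i\,\mathbf{1}_{\tau_i=w_{k+i+1}}\bigr)+\E\bigl(MSA_i\,\mathbf{1}_{\tau_i\leq w_{k+i+2}}\bigr).
\]

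For the first term, I would argue (paralleling Lemma~\ref{lem:bound:msak+1:lem1}) that on $\{\tau_i=w_{k+i+1}\}\cap\{X_{(k+1)}=w_l\}$—which forces $l\leq k+i$—the positions $w_{l+1},\dots,w_{k+i+1}$ must all lie in $S$. Consequently $w_l$ is the unique visible $X$-value strictly above $\tau_i$ (elements of $X$ above $w_l$ are in the removed top $k$, and the pair of $w_{k+i+1}$, being in $X$, is either also above $w_l$ or strictly below $\tau_i$), so $MSA_i$ necessarily accepts $w_l=X_{(k+1)}$. A chain-rule estimate exactly analogous to Lemma~\ref{ineq:MSA:2} then produces $\PP(\tau_i=w_{k+i+1}\mid X_{(k+1)}=w_l)\geq 2^{-(k+i+1-l)}$, modulo the blocked/ill-paired corrections that mirror the $\delta_l$ bookkeeping of Proposition~\ref{lem:bound:msak+1}.

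For the second term I would use that when $\tau_i=w_j$ with $j\geq k+i+2$, the number of visible $X$-values above $\tau_i$ is $j-i-k\geq 2$, and every value above $\tau_i$ is at least $w_{j-1}$, so $MSA_i$ certainly collects $\geq w_{j-1}$. Combining this with the exchange symmetry $\PP(\tau_i=w_j)=\PP(X_{(i)}=w_j)$ (both $S$ and $X$ are obtained by a uniformly random choice of one element in each pair) gives
\[
\E\bigl(MSA_i\,\mathbf{1}_{\tau_i\leq w_{k+i+2}}\bigr)\;\geq\;\sum_{j\geq k+i+2}w_{j-1}\,\PP(X_{(i)}=w_j).
\]

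Summing the first bound over $i$, the geometric series $\sum_{i=l-k}^{k+1}2^{-(k+i+1-l)}=1-2^{-(2k+2-l)}\geq\tfrac12$ for $l\in\{k+1,\dots,2k+1\}$ yields a contribution of at least $\tfrac12\sum_{l=k+1}^{2k+1}w_l\PP(X_{(k+1)}=w_l)$. Summing the second bound and using $\sum_{i=1}^{k+1}\PP(X_{(i)}=w_j)=\PP(w_j\in\{X_{(1)},\dots,X_{(k+1)}\})\geq\PP(X_{(k+1)}=w_j)$ for $j\geq 2k+3$, followed by the reindexing plus Lemma~\ref{lem:ll+1}-type inequality already exploited in Lemma~\ref{lem:bound:msak+1:lem2}, contributes at least $\tfrac12\sum_{l\geq 2k+2}w_l\PP(X_{(k+1)}=w_l)$. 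Adding both ranges gives $\sum_{i=1}^{k+1}\E(MSA_i)\geq\tfrac12\E(X_{(k+1)})$, as required. The main obstacle I anticipate is the first step: one has to prove a Lemma~\ref{ineq:MSA:2}-analog tailored to each rank $i$ (rather than only $i=k+1$) and carefully handle the blocked/ill-paired configurations so that, after summing over $i$, the aggregate coefficient remains at least $\tfrac12$; the boundary cases around $l\in\{2k+1,2k+2\}$, where the two regimes meet, also need a slightly more delicate argument (analogous to the $2k+2=\xi_{k+1}$ case treated separately in Proposition~\ref{lem:bound:msak+1}) to avoid double-counting or losing probability mass.
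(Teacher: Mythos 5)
Your high-level plan coincides with the paper's: reduce to $\sum_{i=1}^{k+1}\E(MSA_i)\geq\tfrac12\,\E(X_{(k+1)})$, then split each $MSA_i$ into a ``threshold equals $w_{k+i+1}$'' part and a ``threshold at most $w_{k+i+2}$'' part. Where you diverge is in how the two pieces are allocated across $l$. You try to aggregate over \emph{all} $i$ for each $l\in\{k+1,\dots,2k+1\}$ via the geometric series $\sum_{i=l-k}^{k+1}2^{-(k+i+1-l)}$, and you yourself flag the resulting difficulty: to justify each term you need a Lemma~\ref{ineq:MSA:2}-type bound for $MSA_i$ (not $\overline{MSA}_i$), and the blocked case is genuinely problematic because $MSA_i$ has no identity information, so the trick in Lemma~\ref{ineq:MSA}b)--d) of rescuing probability from the event $\tau_i=w_{m_l}$ by discarding $w_{m_l}$'s partner is unavailable. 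When $w_l$ is blocked, the events $\{\tau_i=w_{k+i+1}\}$ for $i\geq m_l-k-1$ have probability zero, so your geometric-series total $1-2^{-(2k+2-l)}$ is not what you actually get.

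What makes the gap avoidable — and what the paper does — is the observation that you never need the full geometric series: for each $l\in\{k+1,\dots,2k\}$, the \emph{single} algorithm $MSA_{l-k}$ already delivers coefficient $\tfrac12$. Indeed, for $i=l-k$ the event $\{\tau_i=w_{k+i+1}\}=\{\tau_i=w_{l+1}\}$, conditioned on $X_{(k+1)}=w_l$, reduces to the single condition $w_{l+1}\in S$; this has probability exactly $1$ if $w_{l+1}$ is $w_l$'s own pair and $\tfrac12$ otherwise, with no blocked/ill-paired bookkeeping whatsoever. On that event $w_l$ is the unique visible $X$-value above $\tau_{l-k}$, so $\PP(MSA_{l-k}=w_l\mid X_{(k+1)}=w_l)\geq\tfrac12$. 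The paper then lets $MSA_{k+1}$ alone cover $l\geq 2k+1$, reusing Lemmas~\ref{lem:bound:msak+1:lem1}--\ref{lem:bound:msak+1:lem2} (noting $\delta_{2k+1}=1$). Your Step~2 aggregation over all $i$, using $\sum_i\PP(X_{(i)}=w_j)\geq\PP(X_{(k+1)}=w_j)$, is correct but is also unnecessary for the same reason — $MSA_{k+1}$ alone suffices. So: right skeleton, but the ``anticipated obstacle'' you flag is a real gap in your route; the clean fix is the one-to-one assignment $l\mapsto MSA_{l-k}$, not a summation over $i$.
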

\begin{proof}
We want to prove that $\EE(MSA_{RAND}) \geq \frac{1}{2(k+1)} \EE(X_{(k+1)}).$ First, note that 
\begin{eqnarray*}
   (k+1) \EE(MSA_{RAND}) &=&  
    \sum_{i=1}^{k+1} \EE(MSA_i) \\
    &=& \sum_{i=1}^{k} \EE(MSA_i | X_{(k+1)}=w_{k+i}) \PP(X_{(k+1)}=w_{k+i}) + \EE(MSA_{k+1}).
\end{eqnarray*}
Now, using that for each $i \in \{1, \dots, k\} $
\[
\EE(MSA_i | X_{(k+1)}=w_{k+i}) \geq w_{k+i} \PP(MSA_i=w_{k+i}|X_{(k+1)}=w_{k+i}),
\]
we have that $(k+1) \EE(MSA_{RAND})$ is at least 
\[
\sum_{i=1}^{k} w_{k+i} \PP(MSA_i=w_{k+i}|X_{(k+1)}=w_{k+i})\PP(X_{(k+1)}=w_{k+i}) + \EE(MSA_{k+1}).
\]
In the remainder of the proof we bound $\PP(MSA_i=w_{k+i}|X_{(k+1)}=w_{k+i})$ for each $i \in \{1, \dots, k\}$, and $\EE(MSA_{k+1})$.\\

\noindent{\bf Step 1.} For each $i \in \{1, \dots, k\}$, $$\PP(MSA_i=w_{k+i}|X_{(k+1)}=w_{k+i}) \geq 1/2.$$

On one hand, 
 \begin{eqnarray*}
  \m{P}(MSA_i=w_{k+i}| X_{(k+1)}=w_{k+i}) & \geq &
      \m{P}(\left\{MSA_i=w_{k+i}\right\} \cap \left\{\tau_i=w_{k+i+1}\right\} |X_{(k+1)}=w_{k+i}) 
      \\
      &=& 
       \m{P}(\tau_i=w_{k+i+1}|X_{(k+1)}=w_{k+i}),
  \end{eqnarray*}
  where the equality stems from the fact that, when $X_{(k+1)}=w_{k+i}$ and $\tau_{i}=w_{k+i+1}$,  $MSA_i$ picks $w_{k+i}$. 

On the other hand, it is easy to see that  $\m{P}(\tau_i=w_{k+i+1}|X_{(k+1)}=w_l) $ is  $1$ if $w_{k+i}$ and $w_{k+i+1}$ are paired, and $1/2$, otherwise. Therefore, we obtain
$\m{P}(\tau_i=w_{k+i+1}|X_{(k+1)}=w_l) \geq 2^{-1}$. The first step is completed.

\noindent{\bf Step 2.} $\EE(MSA_{k+1}) \geq 1/2 \sum_{l=2k+1}^{\xi_{k+1}} w_l \PP(X_{(k+1)}=w_l). $ \\

If $2k+2=\xi_{k+1}$, the elements of $\left\{w_1,\dots,w_{2k+2}\right\}$ form $k+1$ pairs. Hence, if $w_{2k+2} \in S$, which happens with probability $1/2$, there are exactly $k+1$ elements larger than $w_{2k+2}$ that are in $X$. In that case, $MSA_{k+1}$ picks $X_{(k+1)}$. It follows that 
\begin{equation*}
\m{E}(MSA_{k+1}) \geq \frac{1}{2} \m{E}(X_{(k+1)}) \geq\frac{1}{2}\sum_{l=2k+1}^{\xi_{k+1}} w_l \PP(X_{(k+1)}=w_l).
\end{equation*}

If $2k+2 \neq \xi_{k+1}$, 
\[
\EE(MSA_{k+1}) = \EE(MSA_{k+1} 1_{\tau_{k+1}=w_{2k+2}}) + \EE(MSA_{k+1} 1_{\tau_{k+1} \leq w_{2k+3}}).
\]
By Lemma~\ref{lem:bound:msak+1:lem1},
\[
\EE(MSA_{k+1} 1_{\tau_{k+1}=w_{2k+2}}) \ge 
\PP(X_{(k+1)}=w_{2k+1}) w_{2k+1} \frac{\delta_{2k+1}}{2}, 
\] 
where $\delta_{2k+1}$ is equal to 1 since by definition $w_{2k+1}$ cannot be blocked nor ill-paired. On the other hand, by Lemma~\ref{lem:bound:msak+1:lem2}, the second term is lower bounded by \[
 \frac{1}{2} \sum_{l=2(k+1)}^{\xi_{k+1}} \PP(X_{(k+1)}=w_l) w_l.
\]
Putting all together, we obtain Step 2.

Combining Step 1 and Step 2, we conclude that 
\begin{eqnarray*}
   (k+1) \EE(MSA_{RAND})  & \geq & \frac{1}{2} \sum_{l=k+1}^{\xi_{k+1}} w_l \PP(X_{(k+1)}=w_l)= \EE(X_{(k+1)}),
\end{eqnarray*}
and the proof is completed. 
\end{proof}

\section{Upper bound on competitive ratio}\label{subsec:hard_instance} 

In this section, we provide two tightness results. The first one, Theorem~\ref{thm:gen2}, establishes the tightness of the competitive ratio result for FI $\RPI$ presented in Section~\ref{sec:main}, by providing a parameterized hard distribution for FI $\RPI$ showing that no algorithm can have a competitive ratio of $\gamma$, with $\gamma> 1/(k+2)+\beta$ for any $\beta>0$. This leads to a similar result for the lower-information model NI. \\ 
The second result, Proposition~\ref{prop:tight:NI}, shows that the strategy $MSA_{RAND}$ cannot guarantee a competitive ratio better than $1/(2k+2)$ in the NI model. \\

 \begin{theorem}
      \label{thm:gen2}
    For each $\beta>0$, there exists an instance with $2(k+1)$ variables such that no algorithm has a competitive ratio larger than $1/(k+2)+\beta$, regardless of the information model in $\RPI$.

  \end{theorem}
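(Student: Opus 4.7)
The plan is to construct a parameterized hard instance for the FI model on $n=2(k+1)$ variables that extends the classical Samuel-Cahn example, and show that its competitive ratio tends to $1/(k+2)$ as a parameter $\varepsilon\to 0^+$. Since NI is strictly less informative than FI, the bound carries over automatically. Writing $T(m):=m(m+1)/2$, I would take independent variables defined by: $X_i=a_i$ deterministically for $i\in\{1,\ldots,k+1\}$ with $a_i=\varepsilon^{-T(i-1)}$; and $X_{k+1+j}\in\{0,b_j\}$ with $\PP(X_{k+1+j}=b_j)=\varepsilon^j$ for $j\in\{1,\ldots,k+1\}$, where $b_1=\varepsilon^{-T(k+1)}$ and $b_1<b_2<\cdots<b_{k+1}$ are chosen so that any positive second-block value dominates every $a_i$. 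The first block provides ``guaranteed but moderate'' values, the second block ``rare spikes'' that get absorbed by the top-$k$ removal.

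For the prophet value I would condition on the number $m$ of positive variables in the second block. The dominant scenario with exactly $m$ positives (the $m$ smallest-index ones being positive) has probability $\prod_{j=1}^{m}\varepsilon^j=\varepsilon^{T(m)}$, and a direct ranking check shows that in this scenario $X_{(k+1)}=a_{m+1}$ when $m\le k$ and $X_{(k+1)}=b_1$ when $m=k+1$. By calibration $a_{m+1}\varepsilon^{T(m)}=1$ and $b_1\varepsilon^{T(k+1)}=1$, so each of the $k+2$ contributions (one per $m\in\{0,\ldots,k+1\}$) is exactly $1$, while non-dominant scenarios are of strictly higher order in $\varepsilon$. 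This gives $\EE(X_{(k+1)})=(k+2)-o_\varepsilon(1)$.

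The main obstacle is the matching gambler upper bound: I would prove by backward induction on the arrival position that the supremum, over all strategies, of the expected value obtainable from arrival $t$ onwards (conditional on having reached $t$ without accepting) is at most $1+o_\varepsilon(1)$. The calibration is engineered precisely so that at every arrival of a first-block variable $X_i$, the event ``$X_i$ is visible'' is equivalent to $\#\text{pos}\ge i-1$, whose dominant conditional law has $\#\text{pos}=i-1$ exactly. In that regime, accepting $X_i=a_i$ and optimally continuing both yield conditional expected value $\approx a_i$ (the future either delivers $0$ or, with conditional probability $\approx\varepsilon^i$, delivers $a_{i+1}=\varepsilon^{-i}a_i$), so the gambler is asymptotically indifferent. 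A similar calibration handles second-block arrivals. The delicate technical step will be quantifying all sub-dominant contributions (combinatorially many across $2(k+1)$ positions) and verifying they vanish as $\varepsilon\to 0$; once the per-step indifference is established, the induction propagates the bound back to arrival $1$, giving $\EE(\ALG)\le 1+o_\varepsilon(1)$.

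Combining the two bounds, any algorithm $\ALG$ satisfies $\EE(\ALG)/\EE(X_{(k+1)})\le(1+o_\varepsilon(1))/(k+2-o_\varepsilon(1))$. Given $\beta>0$, picking $\varepsilon>0$ sufficiently small makes the right-hand side at most $1/(k+2)+\beta$, proving Theorem \ref{thm:gen2} in the FI model, hence in the NI model as well.
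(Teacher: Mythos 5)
Your construction belongs to the same family as the paper's hard instance: $k+1$ deterministic values increasing rapidly in $\varepsilon$, followed by $k+1$ rare ``spikes'' whose realizations typically get absorbed by the top-$k$ removal, with the calibration forcing $\EE(X_{(k+1)})\to k+2$ and the gambler to at most $1$. The paper parameterizes differently --- it takes the second block to be i.i.d.\ $\{0,\varepsilon^{-(k+1)}\}$ with success probability $\varepsilon$ (so the count of positives is $\mathrm{Binom}(k+1,\varepsilon)$) and chooses $a_i=1/(\varepsilon^{i-1}\binom{k+1}{i-1})$ so that the binomial weights cancel exactly --- whereas you use triangular exponents and distinct probabilities $\varepsilon^j$ so that the dominant configuration with exactly $m$ positives is unique. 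Either parameterization makes the prophet computation straightforward by conditioning on the number of positives, as you do.

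Where the proposals diverge is the gambler upper bound, and here your argument has a genuine problem. You claim as the induction hypothesis that ``the expected value obtainable from arrival $t$ onwards, conditional on having reached $t$ without accepting, is at most $1+o_\varepsilon(1)$,'' but two sentences later you correctly observe that this conditional value is $\approx a_t$, which blows up with $t$. These two statements are inconsistent; the invariant as written is false, and the backward induction would not close as stated. The fix is to prove the conditional bound $\le a_t(1+o_\varepsilon(1))$ (the calibration $a_{t+1}\cdot\varepsilon^{t}\approx a_t$ is precisely what makes this propagate), and then apply it at $t=1$ where $a_1=1$. You also leave the control of the sub-dominant configurations as an acknowledged gap, which is the whole technical content of the step. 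The paper sidesteps the induction entirely by first arguing that once any deterministic variable is seen to have been skipped, all future observations are zero, so the gambler may restrict attention to the $k+2$ explicit strategies ``accept $a_i$'' for $i\le k+1$ or ``take the first positive spike''; each of these is then evaluated directly and shown to tend to $1$. That enumeration-based argument is cleaner and avoids the induction bookkeeping altogether, and I would recommend adopting it (or at least fixing your invariant and filling the sub-dominant estimates) before this proof can be considered complete.
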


\begin{proof}[Proof of Theorem~\ref{thm:gen2}] Let $0<\varepsilon < 1/(k+1)$. Consider the following $2(k+1)$ random variables: $X_i =\frac{1}{\varepsilon^{i-1}\binom{k+1}{i-1}}$ for $i\in \{ 1,\ldots,k+1\}$ 
and for $i\in \{ k+2,\ldots, 2(k+1)  \} $
\[
X_i=\begin{cases}
    0 & \text{w.p. }1-\varepsilon,\\
    1/\varepsilon^{k+1} & \text{w.p. }\varepsilon.
\end{cases}
\]
 
We now prove that for the instance with these $2(k+1)$ random variables, no strategy of the gambler can attain a competitive ratio larger than $1/(k+2) +\mathcal{O}(\varepsilon)$.

Note that, as $\varepsilon <(k+1)^{-1}$, it holds $X_i < X_{i+1}$ for $1 \leq i \leq k$. That is, the deterministic variables arrive in increasing order. Indeed, for $i \in \{1, \dots, k\}$, $X_i > X_{i+1}$ if and only if $
    \varepsilon < \frac{i}{k-i+2} $. As $
    \frac{i}{k-i+2} $ is increasing in $i$, it is enough to have $\varepsilon < (k+1)^{-1}$.

Let us compute $\m{E}(X_{(k+1)})$. To this end, note that the $(k+1)$-th largest variable corresponds to $X_i$ with $i\leq k+1$ if and only if exactly $i-1$ variables take the value $1/\varepsilon^{k+1}$ (because the deterministic variables arrive in increasing order with respect to their value), and it corresponds to a variable $X_i$ with $i \geq k+2$ if and only if all variables $j \geq k+2$ take the value $1/\varepsilon^{k+1}$. 

We define the random variable $Y$ as the number of variables among $X_{k+2}, \dots, X_{2(k+1)}$ that take the value $1/\varepsilon^{k+1}$. Then, conditioning on the value of $Y$, we have: 
\begin{eqnarray*}
\m{E}(X_{(k+1)}) &=& \sum_{j=0}^{k+1} \m{E}(X_{k+1} | Y=j) \m{P}(Y=j) \\
&=& 1 \cdot
(1- \varepsilon)^{k+1} + \sum_{j=1}^{k} \m{E}(X_{k+1} | Y=j) \cdot \m{P}(Y=j) +   \frac{1}{\varepsilon^{k+1}} \cdot \varepsilon^{k+1}\\
&=& (1- \varepsilon)^{k+1} + \sum_{j=1}^{k} \frac{1}{\varepsilon^{j} \binom{k+1}{j}} \cdot \binom{k+1}{j} \cdot \varepsilon^{j} \cdot (1- \varepsilon)^{k+1-j}+ 1  \\
&=&1+\sum_{j=0}^{k}  (1- \varepsilon)^{k+1-j}
\end{eqnarray*}

Let us now compute the optimal guarantee of the gambler. First, observe that the gambler should always accept the value $1/\varepsilon^{k+1}$, as it is the highest possible one. Furthermore, since the deterministic values are strictly increasing, if the gambler sees that a deterministic value has been removed, then all remaining variables—both deterministic and non-deterministic—must have either been removed or are equal to zero. In this case, the gambler receives 0. As a result, the gambler does not gain any useful information from observing past values, allowing us to restrict to strategies of the following form: (1) stop at time $i$, for some  $i \leq k+1$ (2) stop at the first positive value appearing after stage $k+2$. 

Call $ALG_i$ the payoff of a strategy of the form $(1)$. Under such a strategy, 
the gambler picks $X_i$ if and only if there are at least $i-1$ variables taking a value $1/\varepsilon^{k+1}$. That is, if $Y$ is greater than or equal to $i-1$. Then, under this strategy, the gambler obtains in expectation

\begin{eqnarray*}
    \m{E}(ALG_i) &=& \sum_{j=i-1}^{k+1} \m{E}(ALG_i | Y=j) \m{P}(Y=j) \\
    & = &  \frac{1}{\varepsilon^{i-1} \binom{k+1}{i-1}} \sum_{j=i-1}^{k+1} \binom{k+1}{j} \varepsilon^j (1-\varepsilon)^{k+1-j} \\ 
    &=& (1-\varepsilon)^{k-i+2} + \sum_{j=i}^{k+1} \frac{\binom{k+1}{j}}{\binom{k+1}{i-1}} \varepsilon^{j-i+1} (1-\varepsilon)^{k+1-j}.
\end{eqnarray*}

Last, consider strategy (2). This strategy gets a positive payoff if and only if all variables $X_{k+2}, \dots, X_{2(k+2)}$ are positive, which happens with probability $\varepsilon^{k+1}$. When this is the case, it gets payoff $\varepsilon^{-(k+1)}$. Consequently, (2) guarantees $\varepsilon^{k+1} \cdot \varepsilon^{-(k+1)}=1$. 

It follows that the optimal payoff of the gambler goes to 1 as $\varepsilon \rightarrow 0$. Moreover, we have $\m{E}(X_{(k+1)}) \rightarrow k+2$ as $\varepsilon \rightarrow 0$. Consequently, for each $\beta>0$, one can find $\varepsilon>0$ such that no algorithm achieves a competitive ratio larger than $1/(k+2)+\beta$ in the corresponding instance. This proves the theorem. 
\end{proof}


In what follows, we formally establish that the competitive ratio of $1/(2k+2)$ is tight for our proposed algorithm, $MSA_{RAND}$. \\

\begin{proposition} \label{prop:tight:NI}
For each $\beta>0$, there exists an instance with $k+2$ variables where $MSA_{RAND}$ does not achieve a better competitive ratio than $1/(2k+2)+\beta$. 
\end{proposition}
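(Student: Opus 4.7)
The plan is to build a tight instance for $MSA_{RAND}$ by embedding the classical two-variable tight prophet example into a larger instance containing $k$ deterministic ``decoy'' values that guarantee the top-$k$ removals. In this way, $k$ of the $k+1$ algorithms that $MSA_{RAND}$ mixes over are rendered useless, and the remaining one attains only the classical $1/2$ competitive ratio.

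Fix a small $\varepsilon > 0$ and let $M_1 > \cdots > M_k > 1/\varepsilon$ be constants. Define $k+2$ independent variables by $X_i = M_i$ deterministically for $i = 1, \dots, k$; $X_{k+1} = 1$ deterministically; and $X_{k+2} = 1/\varepsilon$ with probability $\varepsilon$, else $X_{k+2} = 0$. Since each $M_i$ strictly dominates every possible realization of $X_{k+1}$ and $X_{k+2}$, the top $k$ values are always $X_1, \dots, X_k$ and the gambler always observes $(X_{k+1}, X_{k+2})$ in that order. In particular, $X_{(k+1)} = \max(X_{k+1}, X_{k+2})$, so $\EE(X_{(k+1)}) = 2 - \varepsilon$.

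Next I would analyze the expected reward of each $MSA_I$. For $I \in \{1, \dots, k\}$, the $I$-th largest sample is deterministically $M_I > 1/\varepsilon$, which strictly exceeds every observed value, so $MSA_I \equiv 0$. For $I = k+1$, the deferred-decision coupling makes $s_{k+2}$ and $X_{k+2}$ independent; the $(k+1)$-th largest sample equals $1$ when $s_{k+2} = 0$ (probability $1 - \varepsilon$) and equals $1/\varepsilon$ when $s_{k+2} = 1/\varepsilon$ (probability $\varepsilon$). In the former event the gambler sees $(1, X_{k+2})$, selects nothing at the first step because $1$ does not strictly exceed $1$, and takes $X_{k+2}$ precisely when $X_{k+2} = 1/\varepsilon$, contributing expected reward $\varepsilon \cdot (1/\varepsilon) = 1$. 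In the latter event the threshold $1/\varepsilon$ blocks every observed value. Therefore $\EE(MSA_{k+1}) = 1 - \varepsilon$.

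Averaging uniformly over the $k+1$ algorithms,
\[
\EE(MSA_{RAND}) = \frac{1}{k+1}\sum_{I=1}^{k+1}\EE(MSA_I) = \frac{1 - \varepsilon}{k+1},
\]
so the competitive ratio of $MSA_{RAND}$ on this instance equals
\[
\frac{1-\varepsilon}{(k+1)(2-\varepsilon)} \;\xrightarrow[\varepsilon\to 0]{}\; \frac{1}{2k+2}.
\]
Given any $\beta > 0$, choosing $\varepsilon$ sufficiently small makes this ratio at most $1/(2k+2) + \beta$, which proves the proposition. The only step that requires any care is verifying that $s_{k+2}$ and $X_{k+2}$ decouple cleanly for $MSA_{k+1}$, but this is immediate from the deferred-decision formalism once the other $k+1$ coordinates are made deterministic; I would highlight this observation as the main design choice driving the construction.
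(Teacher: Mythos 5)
Your proof is correct and establishes the required tightness, but through a construction genuinely different from the paper's. The paper takes $X_1 = 1$ deterministic and $X_2, \dots, X_{k+2}$ i.i.d.\ two-point variables equal to $\varepsilon^{-(k+2)}$ with probability $\varepsilon$; there $\EE(X_{(k+1)}) \approx \varepsilon^{-1}$ is dominated by the rare joint event that all $k+1$ random variables fire, and the argument bounds each $\EE(MSA_i)$ by $O(1)$ with separate treatments for $i=1$ and $i\geq 2$, conditioning on which sample attains the $i$-th largest position. You instead plant $k$ deterministic decoys $M_1 > \cdots > M_k$ that dominate every achievable realization, so the removal set is always $\{X_1,\ldots,X_k\}$ and the thresholds of $MSA_1,\ldots,MSA_k$ are vacuously unreachable; what survives is exactly the classical two-variable tight prophet instance, on which $MSA_{k+1}$ recovers the familiar $1/2$ fraction of the benchmark $2-\varepsilon$. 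This makes the loss factor of $k+1$ from the uniform mixture --- only one of whose $k+1$ components is ever useful --- fully transparent, and isolates the $1/2$ as coming from the base problem rather than from anything specific to the residual setting. One small caveat: the paper's algorithm breaks ties between the threshold and the arriving value by a fair coin flip, whereas you treat the comparison as strict; under the paper's convention $\EE(MSA_{k+1}) = 1 - \varepsilon/2$ rather than $1-\varepsilon$, since the event $\{s_{k+2} = 1/\varepsilon\}$ then contributes $1/2$ rather than $0$, but both limits equal $1$ so your conclusion is unaffected. If you want to sidestep the tie-break entirely, replace $X_{k+1}=1$ by a continuous value supported on $[1,1+\varepsilon^2]$.
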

\begin{proof}
Let $X_1:=1$
and for $i\in \{ 2,\ldots,k+2  \} $
\[
X_i=\begin{cases}
    0 & \text{w.p. }1-\varepsilon,\\
    1/\varepsilon^{k+2} & \text{w.p. }\varepsilon,
\end{cases}
\]
with $\varepsilon \leq 1/2.$
We start by computing and estimating the $(k+1)$-max. 
\begin{eqnarray*}
\m{E}(X_{(k+1)}) &=& 1 \cdot (1-\varepsilon^{k+1})+\varepsilon^{-k-2} \cdot \varepsilon^{k+1},
\end{eqnarray*}
hence 
$\varepsilon^{-1} \leq \m{E}(X_{(k+1)}) \leq \varepsilon^{-1}+1$.

Let us now analyze $MSA_{RAND}$.
First, we show that for $i \geq 2$, $\m{E}(MSA_i)$ is $O(1)$ as $\varepsilon$ tends to 0. 

Let $i \in \{2, \dots, k+1\}$, and $A$ be the event ``the $i$-th largest sample is 0''. The probability of $A$ is at least $(1-\varepsilon)^{k+1}$,
since the latter corresponds to the probability that all samples from $F_2,\dots,F_{k+2}$ are 0. Then, $\m{P}(A)$ tends to 1 as $\varepsilon$ goes to 0. 

Assume that $A$ holds, meaning that the threshold for $MSA_i$ is 0. In this case, either $X_1$ is available and then $MSA_i$ picks it; or $X_1$ is not available and then the gambler is presented only with 0, getting a value 0.
We deduce that
\begin{eqnarray*}
\m{E}(MSA_i|A)  \leq 1 \leq \varepsilon \m{E}(X_{(k+1)}).
\end{eqnarray*}
When $A$ is not realized, we use the rough upper bound $\m{E}(MSA_i|A^c) \leq \m{E}(X_{(k+1)})$. Therefore, we have 
\begin{eqnarray*}
\m{E}(MSA_i) &\leq&
\varepsilon \m{E}(X_{(k+1)}) \m{P}(A)+\m{E}(X_{(k+1)}) (1-\m{P}(A)).
\\
\end{eqnarray*}
Since $\m{P}(A)$ converges to 1 as $\varepsilon$ tends to 0, we deduce that $\m{E}(MSA_i)/\m{E}(X_{(k+1)})$ converges to 0 as $\varepsilon$ goes to 0, as we wanted to show. 

It remains to evaluate the performance of $MSA_1$. To this end, let us define B the event ``the maximum sample is 1''. Note that this event occurs with probability $(1-\varepsilon)^{k+1}$.
 
 Assume that $B$ holds, meaning that the threshold for $MSA_1$ is 1. Then, 
either $X_1$ is available, and thus $MSA_1$ picks $X_1=1$ with probability $1/2$, and picks either $\varepsilon^{-k-2}$ or 0 with probability $1/2$; or, the gambler is presented only with 0.

Hence, 
\[\m{E}(MSA_1|B) \leq 1+ \varepsilon^{k+1} [\frac{1}{2}+\frac{1}{2} \varepsilon^{-k-2}] \leq 2+\frac{1}{2} \varepsilon^{-1} \leq (2\varepsilon+1/2) \m{E}(X_{(k+1)}).
\]

If B is not realized, we use again the inequality $\m{E}(MSA_1|B^c) \leq \m{E}(X_{(k+1)})$, obtaining that

\begin{eqnarray*}
\m{E}(MSA_1) &\leq&
(2\varepsilon+1/2) \m{E}(X_{(k+1)}) \m{P}(B)+\m{E}(X_{(k+1)}) (1-\m{P}(B)).
\\
\end{eqnarray*}

Since $\m{P}(B)$ converges to 1 as $\varepsilon$ tends to 0, we conclude that 
\[
\limsup_{\varepsilon \rightarrow 0} \m{E}(MSA_{RAND})/\m{E}(X_{(k+1)}) \leq \frac{1}{2k+2},\] and the result is proved. 
\end{proof}
\section{I.I.D.\ Case for $k=1$}\label{sect:iid}

In this section, we focus on i.i.d.\ instances where $F_1=\cdots = F_n$ in the case of $\RPI$ with $k=1$. That is, in the sequence $X_1,\ldots,X_n$, the maximum value has been removed. The main result of this section is the following: \\

\begin{theorem}\label{teo:iid:lb}
    For any information model, there is an algorithm for $1$-RPI with a competitive ratio of at least $0.4901$.
\end{theorem}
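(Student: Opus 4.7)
Since the NI model is more restrictive in information than the FI model, it suffices to exhibit a single-threshold algorithm for NI attaining the ratio $0.4901$. The plan is to parametrize the threshold by an input quantile $q \in (0,1)$, express both $\E[\ALG]$ and $\E[X_{(2)}]$ as integrals of the quantile function $G(u) = F^{-1}(1-u)$ against explicit densities, reduce the worst case over $F$ via an indicator argument, and optimize numerically over $q$.

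\textbf{Algorithm and quantile representation.} Fix $q \in (0,1)$, set $\tau = F^{-1}(1-q)$ so that $\Pr(X \geq \tau) = q$, and have the gambler accept the first arriving value of at least $\tau$. Writing $X_i = G(U_i)$ with $U_i$ i.i.d.\ uniform on $[0,1]$, the benchmark becomes
\[
\E[X_{(2)}] \;=\; \int_0^1 G(u)\, h_O(u)\,du, \qquad h_O(u) = n(n-1)u(1-u)^{n-2},
\]
where $h_O$ is the density of $U_{(2)}$. For the algorithm, let $N = |\{i : U_i \leq q\}|$. By i.i.d.\ exchangeability, conditional on $N = m \geq 2$ the removed maximum of $X$ corresponds to the minimum of the $m$ overthreshold uniforms; a symmetry-of-positions argument then shows that the quantile of the picked value is distributed as a uniformly chosen order statistic among $U'_{(2)}, \ldots, U'_{(m)}$, where $U'_{(1)} < \cdots < U'_{(m)}$ are $m$ uniforms on $[0,q]$. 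Averaging the corresponding order-statistic densities against the binomial law of $N$ gives $\E[\ALG] = \int_0^q G(u)\, h_A(u)\,du$ with
\[
h_A(u) \;=\; \sum_{m=2}^{n} \binom{n}{m} q^m (1-q)^{n-m}\,\frac{m}{(m-1)q}\bigl[1-(1-u/q)^{m-1}\bigr].
\]

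\textbf{Indicator reduction.} Since $G$ is nonnegative and nonincreasing, it decomposes as $G(u) = \int_0^1 \mathbf{1}\{u \leq t\}\,d\nu(t) + G(1)$ for a nonnegative measure $\nu$ on $(0,1]$. Letting $H_A(t) = \int_0^t h_A$ and $H_O(t) = \int_0^t h_O$, Fubini together with $\int h_O = 1$ and $\int h_A = R := \Pr(N \geq 2)$ gives
\[
\frac{\E[\ALG]}{\E[X_{(2)}]} \;\geq\; \inf_{t \in (0,1]} \frac{H_A(t \wedge q)}{H_O(t)}.
\]
On $(0,q]$ the right-hand side is $H_A(t)/H_O(t)$; on $[q,1]$ it equals $R/H_O(t)$, decreasing in $t$ with worst value $R$ at $t=1$. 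Hence the extremal distributions are indicators $G = \mathbf{1}_{[0,t^\star]}$ and the analysis collapses to a one-parameter minimization.

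\textbf{Optimization over $q$ and main obstacle.} It remains to exhibit a value of $q$ for which the above infimum is at least $0.4901$, uniformly in $n \geq 2$. The explicit formulas for $H_A$, $H_O$, and $R$ allow this to be reduced to a purely numerical optimization: after identifying the worst case in $n$ (typically the Poisson-type limit $n \to \infty$ with $q$ suitably scaled, in which the binomial sums simplify to expressions depending on a single effective parameter), one optimizes the resulting one-variable function of $q$ to obtain the target $0.4901$. The \emph{main obstacle} is precisely this last step: controlling the ratio $H_A(t \wedge q)/H_O(t)$ simultaneously over all $t \in (0,1]$ and all $n \geq 2$, so that the final lower bound depends only on $q$. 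The non-closed-form constant $0.4901$ reflects the fact that the optimizing $q$ is determined numerically rather than algebraically, as is typical for single-threshold i.i.d.\ prophet-type inequalities.
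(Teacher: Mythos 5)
Your setup is correct and mirrors the paper's Lemma~\ref{lem:key_lemma_lower_bound} through Proposition~\ref{prop:characterization_iid}: a single-threshold algorithm parametrized by a quantile $q$, with both $\E(\ALG)$ and $\E(X_{(2)})$ written as integrals of $F^{-1}(1-u)$ against explicit kernels, followed by a reduction to indicator quantile functions and an infimum over the cutoff. Your derivation of $h_A$ by conditioning on $N=|\{i:U_i\leq q\}|$ and using exchangeability is in fact somewhat cleaner than the paper's route (which conditions on the maximum being in the last position and integrates $\Pr(E_{i,u}\mid E)$ term by term), and the resulting $H_A$ agrees with the paper's sum over $i$; so up to this point the proposal is both correct and equivalent.

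The genuine gap is the step you explicitly flag as the ``main obstacle'' and then leave open: showing that $\inf_{t\in(0,1]} H_A(t\wedge q)/H_O(t)$ is attained at $t\to 0$ and $t=1$, i.e.\ that $H_A(t)/H_O(t)$ is increasing on $(0,q]$ (the $t\geq q$ branch is indeed easy, since $R/H_O(t)$ is decreasing). This is precisely the paper's Proposition~\ref{prop:key_property_of_A}, and it is not a routine ``one-parameter minimization'': the proof requires the specific scaling $q=\alpha/(n-1)$ with $\alpha\leq 2$, and the monotonicity is established via a chain of sign analyses ($G_{n,i}'\geq 0$, $g_{n,i}'\geq 0$, Claim~\ref{claim:monotonicity_g} comparing $g_{n,i}'$ to $g_{n,i+1}'$, and a separate treatment of small $n$), involving Bernoulli-type polynomial bounds and explicit root-finding. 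Without this lemma, the collapse to $\min\{(1-e^{-\alpha})/\alpha,\ 1-e^{-\alpha}(1+\alpha)\}$ is unjustified and the numerical optimum $\alpha\approx 1.64718$ yielding $0.4901$ cannot be reached. In short, you have reconstructed the reduction correctly but not the analytical core that makes the reduction usable.
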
 

For the rest of the section, we assume than $X_i$ are continuous with cdf $F(\cdot)$. Furthermore, following~\citep{perez2024optimal}, we can also assume that $F$ is strictly increasing and infinitely differentiable. Since the gambler observes the sequence of $n-1$ values, we can assume that the maximum of the $n$ values occurs in the last position $n$. Hence, the gambler faces the problem under the event $E=\{ X_1,\ldots,X_{n-1} < X_n \}=\{ \max_{i < n} X_i < X_n \}$. Note that $\PP(E)=1/n$ by the continuity of $F$.

To prove Theorem~\ref{teo:iid:lb}, we provide a fixed-threshold strategy that computes a threshold based on quantiles $q\in [0,1]$. That is, given $q\in [0,1] $, the algorithm computes $u\geq 0$ such that $q= \PP(X\geq \tau)=1-F(\tau)$, and accepts the first value at least $u$ in the observed sequence. We denote such an algorithm $ALG_q$. The following lemma provides a lower bound for a particular choice of quantiles $q$. \\

\begin{lemma}\label{lem:key_lemma_lower_bound}
    Let $n\geq 3$. For NI 1-RPI, if $ALG_q$ is run with $q=\alpha/(n-1)$ for $\alpha \in [0,2]$, then,
    \[
    \frac{\E (ALG_q)}{\EE (X_{(2)})} \geq \min\left\{ \frac{1-e^{-\alpha}}{\alpha} , 1-e^{-\alpha}(1+\alpha)  \right\},
    \]
    for any continuous cdf $F$.
\end{lemma}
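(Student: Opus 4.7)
The plan is to express both $\E(ALG_q)$ and $\E(X_{(2)})$ as integrals of the quantile function $F^{-1}(1-v)\ge 0$ against explicit $(n,q)$-kernels, and then to establish the two bounds in the minimum by separate arguments on these integral representations. By exchangeability of the i.i.d.\ sequence, $\E(ALG_q)=\E(ALG_q\mid E)$ and $\E(X_{(2)})=\E(\max_{i<n}X_i\mid E)$ for $E=\{X_n=\max_i X_i\}$ (which has probability $1/n$), so one can work on $E$ throughout. Setting $\tau=F^{-1}(1-q)$ and substituting $v=1-F(x)$, the density of the second-largest of $n$ i.i.d.\ samples yields
\[
\E(X_{(2)}) = n(n-1)\int_0^1 F^{-1}(1-v)\,v\,(1-v)^{n-2}\,dv.
\]
For the algorithm, I would decompose $\E(ALG_q\cdot\1(E))$ over the stopping index $i\in\{1,\dots,n-1\}$: on $\{X_1,\dots,X_{i-1}<\tau,\ X_i=x\ge\tau\}\cap E$, the remaining conditions factorize by independence as $(1-q)^{i-1}$ (the first $i-1$ values below $\tau$) times $\int_x^\infty f(y)F(y)^{n-i-1}dy = [1-F(x)^{n-i}]/(n-i)$ ($X_n$ being the max of $X_i,\dots,X_n$ and $\ge x$). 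Dividing by $\Pr(E)=1/n$ and applying $v=1-F(x)$ again gives
\[
\E(ALG_q) = n\sum_{i=1}^{n-1}(1-q)^{i-1}\int_0^q F^{-1}(1-v)\,\frac{1-(1-v)^{n-i}}{n-i}\,dv.
\]

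For the factor $(1-e^{-\alpha})/\alpha$: swap the sum and integral in $\E(ALG_q)$, use $\sum_{i=0}^{n-2}(1-q)^i=[1-(1-q)^{n-1}]/q$ together with $(1-\alpha/(n-1))^{n-1}\le e^{-\alpha}$, and verify the pointwise inequality that the algorithm kernel dominates $[(1-e^{-\alpha})/\alpha]$ times the prophet kernel $n(n-1)v(1-v)^{n-2}$ on $v\in[0,q]$, the region where $F^{-1}(1-v)\ge\tau$. For the factor $1-e^{-\alpha}(1+\alpha)$: note that under $E$, the algorithm accepts a value $\ge\tau$ if and only if at least two of the original $X_i$'s exceed $\tau$; a direct calculation gives the probability of this event as $1-(1-q)^{n-1}(1+\alpha)\ge 1-e^{-\alpha}(1+\alpha)$ (factoring and using $(1-q)^{n-1}\le e^{-\alpha}$). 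Combined with the trivial bound $\E(ALG_q)\ge\tau\cdot\Pr(X_{(2)}\ge\tau)$ and the inequality $F^{-1}(1-v)\le\tau$ on $v\in[q,1]$—which bounds the portion of $\E(X_{(2)})$ supported on $\{X_{(2)}<\tau\}$ by $\tau(1-q)^{n-1}(1+\alpha)$—a weighted combination of these estimates delivers the second bound.

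The main obstacle is verifying the first kernel-pointwise inequality uniformly in $n\ge 3$ and $v\in[0,q]$: both kernels vanish at $v=0$ and their precise rate-matching must be handled by telescoping $1-(1-v)^{n-i}$ and carefully controlling the double sum. A monotonicity argument in $n$ (worst case $n\to\infty$, where the Bernoulli count of values above $\tau$ converges to a Poisson$(\alpha)$) should reduce this to a single inequality in $\alpha$, verifiable by elementary calculus. Once both constants are in hand, their minimum is a uniform lower bound on the ratio, completing the lemma.
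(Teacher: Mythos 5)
The integral representations you derive for $\E(ALG_q)$ and $\E(X_{(2)})$ are correct and coincide with the paper's intermediate formulas (Proposition~\ref{prop:characterization_iid} before the substitution $F^{-1}(1-w)=\int_w^1 r(v)\,dv$). But the paper then integrates by parts: writing $F^{-1}(1-u)=\int_u^1 r(v)\,dv$ with $r\geq 0$ turns both quantities into $\int_0^1 r(v)\cdot(\text{kernel})\,dv$ over the \emph{same} domain $[0,1]$, after which $\E(ALG_q)/\E(X_{(2)})\geq\inf_v A_{n,q}(v)$, and the real work (Proposition~\ref{prop:key_property_of_A} together with Claim~\ref{claim:monotonicity_g}) is showing $A_{n,q}$ is increasing on $[0,q]$ and decreasing on $[q,1]$, so the infimum is attained at $v\to 0$ and $v=1$. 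You skip this integration-by-parts step, and as a result your two integrals are over mismatched domains: the $ALG$ integral runs over $[0,q]$ while the prophet's runs over $[0,1]$. Because of that, a pointwise kernel inequality on $v\in[0,q]$ alone does \emph{not} bound the ratio, since it leaves the prophet's mass on $(q,1]$ uncontrolled; and conversely the acceptance-probability estimate $\Pr(\text{accept})=1-(1-q)^{n-1}(1+\alpha)\geq 1-e^{-\alpha}(1+\alpha)$, while correct, does not by itself give a ratio bound. Your phrase ``a weighted combination of these estimates delivers the second bound'' is not worked out, and the natural reading of it fails: apportioning $\E(X_{(2)})$ into $\E(X_{(2)}\1_{X_{(2)}\geq\tau})+\E(X_{(2)}\1_{X_{(2)}<\tau})$ and bounding each part against $\E(ALG_q)$ gives $\E(ALG_q)/\E(X_{(2)})\geq c_1c_2/(c_2+c_1(1-c_2))\approx 0.32$ at $\alpha\approx 1.647$, which is strictly worse than the claimed $0.49$.

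The approach can be salvaged, but it requires a specific decomposition you do not state: write $\E(ALG_q)=\tau\Pr(\text{accept})+\int_0^q(F^{-1}(1-v)-\tau)h_A(v)\,dv$ and $\E(X_{(2)})\leq\tau+\int_0^q(F^{-1}(1-v)-\tau)h_P(v)\,dv$ (the latter because $F^{-1}(1-v)\leq\tau$ on $[q,1]$ and $\int_0^1 h_P=1$), then apply the acceptance bound to the $\tau$-term and the pointwise kernel inequality to the integral term, giving $\E(ALG_q)\geq c_2\tau+c_1\int_0^q(F^{-1}(1-v)-\tau)h_P\geq\min\{c_1,c_2\}\,\E(X_{(2)})$. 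This is genuinely different from (and arguably simpler than) the paper's unimodality argument, and it correctly produces the $\min$. However, this hinges on the pointwise inequality $h_A(v)\geq c_1\,h_P(v)$ for all $v\in(0,q]$, which you flag as ``the main obstacle'' but do not resolve; the proposed reduction via monotonicity in $n$ to a Poisson limit is not substantiated and is not obviously the binding case. A direct route exists (use $1-(1-v)^m\geq mv(1-v)^{m-1}$ to get $h_A(v)/h_P(v)\geq\frac{1}{n-1}\sum_{i=0}^{n-2}\left(\frac{1-q}{1-v}\right)^i$, which is increasing in $v$ and equals $\frac{1-(1-q)^{n-1}}{q(n-1)}\geq(1-e^{-\alpha})/\alpha$ at $v\to 0^+$), but it is not what you wrote. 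So the proposal points at a viable alternative proof strategy, but has two genuine gaps as written: the kernel inequality is unverified and the combination step is not made precise.
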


Using this lemma, and by equating $(1-e^{-\alpha})/\alpha=1-e^{-\alpha}(1+\alpha)$, we obtain that $\alpha\approx 1.64718$ and the competitive ratio of fixed-threshold solutions is $\geq 0.4901$.

In Subsection~\ref{sub:upper_bound_single_threshold}, we show that no fixed-threshold solution can obtain a competitive ratio better than $0.5463$.

\subsection{Proof of Lemma~\ref{lem:key_lemma_lower_bound}}

In this subsection, we provide the lower bound on the competitive ratio of $ALG_q$ for $q=\alpha/(n-1)$. For notational convenience, we will avoid writing the subscript in $ALG$. The algorithm $ALG$ computes the threshold $u$ in advance and accept the first observed value that surpassed $u$. Then, the reward of $ALG$ as a function of $u$ is
\begin{align*}
   \E (ALG) = \sum_{i=0}^{n-2} \PP(X_1,\ldots,X_i < u\mid E) \EE\left[ X_{i+1} \mathbf{1}_{ \{X_{i+1}\geq u\}} \mid X_1,\ldots,X_i <u ,E  \right]
\end{align*}
By analyzing the different involved probabilities, we can find the following characterization of $\E(ALG)$ as a function of the quantile $q$: \\

\begin{proposition}\label{prop:characterization_iid}
    If $q\in [0,1]$, then
    \[
    \E(ALG)= \int_0^1 r(v) \sum_{i=0}^{n-2} \frac{n}{n-i-1}(1-q)^i \left(  \min\{ q, v \}  - \left(  \frac{ 1-  (1-\min\{q,v\})^{n-i}  }{n-i} \right)  \right) \, \mathrm{d}v,
    \]
    here $r(v)\geq 0$ is such that $F^{-1}(1-u)=\int_u^1 r(v) \, \mathrm{d}v$ which exists due to the assumptions over the cdf $F$.
\end{proposition}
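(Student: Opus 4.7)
The plan is to expand each term in the sum explicitly as an integral, perform the quantile change of variables, substitute the representation $F^{-1}(1-w)=\int_w^1 r(y)\,\mathrm{d}y$, and simplify via Fubini together with a short integration by parts. Because the $X_j$ are i.i.d.\ and continuous, $\PP(E)=1/n$, so
\[
\EE[X_{i+1}\mathbf{1}_{A_i}\mid E]=n\cdot\EE\bigl[X_{i+1}\mathbf{1}_{A_i}\mathbf{1}_E\bigr],
\]
where $A_i=\{X_1,\ldots,X_i<u\}\cap\{X_{i+1}\geq u\}$. Conditioning on $(X_{i+1},X_n)=(x,t)$ and integrating out the remaining $n-i-2$ variables (each of which must lie below $t$) yields
\[
\EE\bigl[X_{i+1}\mathbf{1}_{A_i}\mathbf{1}_E\bigr]=\int_0^\infty\!\int_u^t x\,F(\min\{u,t\})^{i}\, F(t)^{n-i-2}\, f(x)\, f(t)\,\mathrm{d}x\,\mathrm{d}t.
\]

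Next I would apply the substitution $v=1-F(t)$ and $w=1-F(x)$. The conditions $x\geq u$ and $x<t$ become $w\leq q$ and $w>v$, which force $v<q$; on this region $F(\min\{u,t\})=F(u)=1-q$, and the integral collapses to
\[
\int_0^q (1-q)^{i}(1-v)^{n-i-2}\int_v^q F^{-1}(1-w)\,\mathrm{d}w\,\mathrm{d}v.
\]
Substituting $F^{-1}(1-w)=\int_w^1 r(y)\,\mathrm{d}y$ into the inner integral and swapping the order of the $w$- and $y$-integrations gives
\[
\int_v^q F^{-1}(1-w)\,\mathrm{d}w=\int_v^1 r(y)\bigl(\min\{q,y\}-v\bigr)\,\mathrm{d}y,
\]
since for $y\in[v,q]$ the variable $w$ ranges over $[v,y]$ and for $y\in[q,1]$ it ranges over $[v,q]$.

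A second Fubini swap, this time between $v$ and $y$, shows that for each fixed $y$ the variable $v$ ranges over $[0,\min\{q,y\}]$. Setting $m=\min\{q,y\}$, the core computation then reduces to
\[
\int_0^{m}(1-v)^{n-i-2}(m-v)\,\mathrm{d}v=\frac{1}{n-i-1}\left(m-\frac{1-(1-m)^{n-i}}{n-i}\right),
\]
which follows from a single integration by parts (differentiating $m-v$ and antidifferentiating $(1-v)^{n-i-2}$; the boundary term at $v=m$ vanishes because $m-v=0$ there). Multiplying by $n$, pushing the finite sum over $i$ inside the $y$-integral, and substituting $m=\min\{q,y\}$ produces exactly the formula claimed.

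The bookkeeping around the two Fubini interchanges is where I expect the main care to be required: verifying that the domain rearrangement for $(v,w,y)$ really produces the $\min\{q,y\}$ term, and checking the endpoint behaviour in the integration by parts. Once these details are handled, the remaining manipulation is purely algebraic and yields the stated identity.
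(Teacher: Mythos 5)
Your proof is correct and follows essentially the same route as the paper: condition on the removed value $X_n$ and the accepted variable, pass to quantile variables with $q=1-F(u)$, substitute the representation $F^{-1}(1-w)=\int_w^1 r(y)\,\mathrm{d}y$, and apply Fubini to pull $r$ outside. The only cosmetic difference is that the paper first reduces the double $(x,t)$-integral to a single $w$-integral (producing the factor $1-(1-w)^{n-i-1}$) before invoking Fubini once, whereas you keep both quantile variables, perform two Fubini swaps, and evaluate the resulting inner integral $\int_0^m(1-v)^{n-i-2}(m-v)\,\mathrm{d}v$ by parts; the arithmetic checks out either way.
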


The proof of this proposition is technical appears at the of the section. Likewise, we can find an expression for $\E(X_{(2)})$ in terms of $r(v)$ from the Proposition:
\[
\EE(X_{(2)})= \int_0^1 r(v) \PP(\mathrm{Binom}(n,v)\geq 2)\, \mathrm{d}v.
\]
Then,
\[
\frac{\E (ALG)}{\EE(X_{(2)})} \geq \inf_{v\in [0,1]} \left\{  \frac{\sum_{i=0}^{n-2} \frac{n}{n-i-1}(1-q)^i \left(  \min\{ q, v \}  - \left(  \frac{ 1-  (1-\min\{q,v\})^{n-i}  }{n-i} \right)  \right)   }{ \PP(\mathrm{Binom}(n,v)\geq 2  )}  \right\}
\]
This last bound is instance-independent and only depends on $n$ and $q$. Let $A_{n,q}(v)$ be the function in the infimum. We study the the infimum of $A_{n,q}(v)$ for the regime $v>q$ and $v<q$ separately. The following proposition characterizes the behavior of $A_{n,q}(v)$ in both regimes for $q=\alpha/(n-1)$ and $\alpha\leq 2$. We defer the proof to the end of the section. \\

\begin{proposition}\label{prop:key_property_of_A}
    For $q=\alpha/(n-1)$ and $\alpha\leq 2$, we have
    \begin{enumerate}
        \item If $v>q$, then, $A_{n,q}(v)$ is decreasing in $v$;

        \item If $v\leq q$, then, $A_{n,q}(v)$ is increasing in $v$.
    \end{enumerate}
\end{proposition}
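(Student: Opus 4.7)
The plan is to write $A_{n,q}(v)=N(v)/D(v)$, where $D(v)=\PP(\mathrm{Binom}(n,v)\geq 2)$ is the denominator and $N(v)$ is the numerator of the expression displayed in the proposition, and then handle the two regimes separately. For the case $v>q$, we have $\min\{q,v\}=q$, so $N$ is a positive constant: the $i=0$ summand equals $n/(n-1)\cdot\int_{0}^{q}(1-(1-s)^{n-1})\,\mathrm{d}s>0$ (using the elementary identity $q-(1-(1-q)^{n-i})/(n-i)=\int_{0}^{q}(1-(1-s)^{n-i-1})\,\mathrm{d}s$), and every other summand is non-negative by the same identity. Since $D'(v)=n(n-1)v(1-v)^{n-2}>0$ on $(0,1)$, the denominator $D$ is strictly increasing, so $A_{n,q}(v)=N/D(v)$ is strictly decreasing on $(q,1]$, which gives part~(1).

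For part~(2), the idea is to express both $N(v)$ and $D(v)$ as cumulative integrals starting at $0$. Using the identity above together with the change of index $j=n-i-1$,
\[
N(v) \;=\; n\int_{0}^{v} f(t)\,\mathrm{d}t, \qquad f(t) \;=\; \sum_{j=1}^{n-1}\frac{(1-q)^{n-j-1}}{j}\bigl(1-(1-t)^{j}\bigr).
\]
A short computation (integrating by parts, or substituting $u=1-t$) shows $D(v)=n(n-1)\int_{0}^{v} g(t)\,\mathrm{d}t$ with $g(t)=t(1-t)^{n-2}$. Hence
\[
A_{n,q}(v) \;=\; \frac{1}{n-1}\cdot\frac{\int_{0}^{v} f(t)\,\mathrm{d}t}{\int_{0}^{v} g(t)\,\mathrm{d}t},
\]
and by the standard monotone-ratio lemma (if $f/g$ is non-decreasing on $[0,q]$, then so is the ratio of cumulative integrals as a function of $v$), it is enough to show $f(t)/g(t)$ is non-decreasing on $[0,q]$.

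To verify this last step, factor $1-(1-t)^{j}=t\sum_{k=0}^{j-1}(1-t)^{k}$ and cancel the $t$ with $g(t)$, obtaining
\[
\frac{f(t)}{g(t)} \;=\; \sum_{j=1}^{n-1}\frac{(1-q)^{n-j-1}}{j}\sum_{k=0}^{j-1}(1-t)^{-(n-2-k)}.
\]
Every exponent satisfies $n-2-k\geq n-1-j\geq 0$, so each factor $(1-t)^{-(n-2-k)}$ is non-decreasing on $[0,1)$, and the coefficients $(1-q)^{n-j-1}/j$ are non-negative because $q=\alpha/(n-1)\in[0,1]$ under the hypotheses $\alpha\leq 2$ and $n\geq 3$. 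Thus $f/g$ is a non-negative combination of non-decreasing functions and hence non-decreasing, yielding part~(2). The main conceptual obstacle is precisely this reformulation: recognizing that the elementary antiderivative identity rewrites both $N$ and $D$ as cumulative integrals, and that the $t$ factor in $g$ cancels exactly with the $t$ extracted from $1-(1-t)^j$, reducing a seemingly delicate monotonicity question to the trivial fact that $(1-t)^{-m}$ is non-decreasing in $t$ for $m\geq 0$.
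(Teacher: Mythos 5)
Your proof is correct and takes a genuinely cleaner route than the paper's. For part (2), the paper works directly with the closed form of $A_{n,q}(v)$ for $v \le q$, rewrites it as $\sum_{i=2}^{n}\frac{n}{i(i-1)}(1-q)^i G_{n,n-i}(v)$ with $G_{n,i}(v) = ((1-v)^i-(1-iv))/\PP(\mathrm{Binom}(n,v)\ge 2)$, and then shows each $G_{n,i}$ is increasing by examining the numerator $g_{n,i}$ of $G_{n,i}'$: it proves $g_{n,i}(0)=0$ and $g_{n,i}'\ge 0$, the latter via a further monotonicity claim in $i$ (Claim~\ref{claim:monotonicity_g}) that requires auxiliary polynomials $\phi_{n,i}$, $\theta_{n,i}$ and separate case analysis for $n\in\{3,4\}$, $n=5$, $n\ge 6$, and for $v$ above or below $1/(n-1)$. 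Your argument replaces all of this with two structural observations: (i) both numerator and denominator of $A_{n,q}(v)$ vanish at $v=0$ and arise as $\int_0^v f$ and, up to a constant, $\int_0^v g$, so the L'H\^opital-type monotone-ratio rule reduces the question to monotonicity of $f/g$; and (ii) the factor of $t$ extracted from $1-(1-t)^j = t\sum_{k=0}^{j-1}(1-t)^k$ cancels against the $t$ in $g(t)=t(1-t)^{n-2}$, exposing $f/g$ as a non-negative combination of $(1-t)^{-m}$ with $m\ge 0$, each non-decreasing. This avoids all derivative estimates and $n$-dependent case splits, and in fact establishes the result for any $q\in[0,1]$, since the hypothesis $\alpha\le 2$ enters your argument only through the trivial bound $q\le 1$. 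Part (1) is handled the same way in both proofs.
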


With this proposition, we obtain
\begin{align*}
    \frac{\E(ALG)}{\EE(X_{(2)})} & \geq \min\left\{  \inf_{v\in [0,q]}\left\{ A_{n,q}(v) \right\}, \inf_{v\in [q,1]} \left\{  A_{n,q}(v) \right\}  \right\} \\
    & = \min \left\{ \lim_{v\to 0} A_{n,q}(v) , A_{n,q}(1)  \right\}\\
    & = \min\left\{   (n-1)\frac{1-(1-q)^{n-1}}{q} ,  1 - (1-q)^{n-1}(1+(n+1)q)   \right\} \\
    & \geq \min \left\{   \frac{1-e^{-\alpha}}{\alpha} ,  1 -e^{-\alpha} (1+\alpha)  \right\}
\end{align*}
where in the first equality we use Proposition~\ref{prop:key_property_of_A}, the second equality follows by a simple calculation, and in the last inequality we use the standard inequality $1-x\leq e^{-x}$. This finishes the proof of Lemma~\ref{lem:key_lemma_lower_bound} and by setting $q=\alpha/(n-1)$. This finishes the proof of the lemma.

We now provide the missing proofs  of Proposition~\ref{prop:characterization_iid} and~\ref{prop:key_property_of_A}.

\begin{proof}[Proof of Proposition~\ref{prop:characterization_iid}] The probability of reaching $i+1$ is the same as the probability of failing to observe a value at least larger than $u$ among the first $i$ observations, which is given by $\PP(X_1,\ldots,X_i < u\mid E)$, while the reward at $i+1$ is the expected value when $X_{i+1}\geq u$. For the sake of notation, we define $E_{i,u}=\{  X_1,\ldots,X_i <u \}$ for $i=0,\ldots,n-2$. We need to compute $\PP(E_{i,u} \mid E)$ and $\PP(X_{i+1}<u'\mid  E_{i,u}, E )$ for $u'\geq u$.

Clearly $\PP(X_1<u,\ldots,X_i < u \mid E)=1$ for $i=0$ so let's assume that $i>0$. Then,
\begin{align*}
    \PP(E_{i,u}\mid E) & = n \int_0^\infty \PP(X_1,\ldots,X_i <u , X_1,\ldots,X_{n-1}<x) \, \mathrm{d}F(x)\\
    & = n \int_0^u F(x)^{n-1}\, \mathrm{d}F(x) + n \int_{u}^{\infty} F(u)^i F(x)^{n-i-1}\, \mathrm{d}F(x)  \\
    & = F(u)^n + n F(u)^i \left( \frac{1- F(u)^{n-i}}{n-i} \right) 
\end{align*}
Note that if we assume that $0^0=1$, then the formula above also work in the case $i=0$.

Now, for $u'\geq u$,
\begin{align*}
    \PP(X_{i+1}<u'\mid E_{i,u},E) & = n\frac{\PP(X_{i+1}< u',E_{i,u},E)}{\PP(E_{i,u}\mid E)}
\end{align*}
we already computed the denominator; hence, we focus on computing the numerator.
\begin{align*}
    n\PP(X_{i+1}<u' , E_{i,u},E ) & = n \int_0^{\infty} \PP(X_{i+1}<u', X_1,\ldots,X_i<u,X_1,\ldots,X_{n-1}<x) \, \mathrm{d}F(x)\\
    & = n\int_0^u F(x)^{n-1}\, \mathrm{d}F(x) + n\int_{u}^{u'} F(u)^{i} F(x)^{n-i-1}\, \mathrm{d}F(x)\\
    & \quad + n F(u)^i F(u') \int_{u'}^\infty F(x)^{n-i-2}\,\mathrm{d}F(x)\\
    & = F(u)^n + n F(u)^i \left( \frac{F(u')^{n-i} - F(u)^{n-i}}{n-i}  \right) \\
    &\quad + nF(u)^i F(u)' \left( \frac{1- F(u')^{n-i-1}}{n-i-1} \right) \\
    & = F(u)^n - \frac{n}{n-i}F(u)^n + \frac{n}{n-i}F(u)^i F(u')^{n-i}\\
    & \quad +\frac{n}{n-i-1}F(u)^i F(u') - \frac{n}{n-i-1}F(u)^i F(u')^{n-i}
\end{align*}
Then,
\begin{align*}
    \frac{\mathrm{d}}{\mathrm{d}x} \PP(X_{i+1}<x\mid E_{i,u},E) & = F(u)^i \left(\frac{n}{n-i-1}\right)\frac{ 1- F(x)^{n-i-1}}{\PP(E_{i,u}\mid E)}  \frac{\mathrm{d}F}{\mathrm{d}x}(x)
\end{align*}
and
\begin{align*}
    \mathbf{E}\left[ X_{i+1}\mathbf{1}_{\{X_{i+1}\geq u  \} }\mid E_{i,u}, E \right] & = \frac{1}{\PP(E_{i,u}\mid E)} F(u)^i \frac{n}{n-i-1}\int_{u}^\infty x\left( 1-F(x)^{n-i-1}  \right)\, \mathrm{d}F(x).
\end{align*}
Then,
\begin{align*}
\E (ALG) &= \sum_{i=0}^{n-2} \frac{n}{n-i-1} F(u)^i \int_u^\infty x \left( 1-F(x)^{n-i-1} \right) \, \mathrm{d}F(x)   \\
& = \sum_{i=0}^{n-2} \frac{n}{n-i-1} (1-q)^i \int_0^q F^{-1}(1-w)(1- (1-w)^{n-i-1})\, \mathrm{d}w \tag{Change of variable $1-q=F(u)$} \\
& = \sum_{i=0}^{n-2} \frac{n}{n-i-1}(1-q)^i \int_0^q \int_{w}^1 r(v) \, \mathrm{d}v (1 - (1-w)^{n-i-1} )\, \mathrm{d}w\tag{Using that $F^{-1}(1-w)$ is strictly decreasing and differentiable}\\
& = \int_0^1 r(v) \sum_{i=0}^{n-2} \frac{n}{n-i-1}(1-q)^i \left(  \min\{ q, v \}  - \left(  \frac{ 1-  (1-\min\{q,v\})^{n-i}  }{n-i} \right)  \right) \, \mathrm{d}v
\end{align*}
\end{proof}

\medskip

\begin{proof}[Proof of Proposition~\ref{prop:key_property_of_A}]

For $v>q$, we have
\[
A_{n,q}(v) = \frac{\sum_{i=0}^{n-2} \frac{n}{n-i-1}(1-q)^i \left( 
 q - \left( \frac{1-(1-q)^{n-i}}{n-i}  \right) \right)   }{\PP(\mathrm{Binom}(n,v)\geq 2)} = \frac{1-(1-q)^{n-1}(1+(n-1)q)}{\PP(\mathrm{Binom}(n,v)\geq 2)}
\]
This last function is decreasing in $v$ attaining its minimum at $v=1$. 

For $v\leq q$, we have
\begin{align*}
A_{n,q}(v) &= \frac{\sum_{i=0}^{n-2} \frac{n}{n-i-1}(1-q)^i \left( 
 v - \left( \frac{1-(1-v)^{n-i}}{n-i}  \right) \right)   }{\PP(\mathrm{Binom}(n,v)\geq 2)} \\
 & = \sum_{i=0}^{n-2} \frac{n}{(n-i-1)(n-i)}(1-q)^i \left(\frac{
 (1-v)^{n-i} - (1 - (n-i)v)}{\PP(\mathrm{Binom}(n,v)\geq 2)} \right)\\
 & = \sum_{i=2}^{n} \frac{n}{i(i-1)}(1-q)^i G_{n,n-i}(v),
\end{align*}
where $G_{n,i}(v) = \frac{(1-v)^i -  (1-iv)}{\PP(\mathrm{Binom}(n,v)\geq 2)}$ for $i\in \{ 2,\ldots,n\}$. To conclude that $A_{n,q}(v)$ is increasing in $v$, it is enough to show that $G_{n,i}(v)$ is increasing in $v$ for all $i\in \{2,\ldots,n\}$. Then,
\[
G_{n,i}'(v) = \frac{ g_{n,i}(v) }{\PP(\mathrm{Binom}(n,v)\geq 2)^2},
\]
where $g_{n,i}(v)= (-i(1-v)^{i-1} +i)\PP(\mathrm{Binom}(n,v)\geq 2) -  ((1-v)^i - (1-iv))n(n-1)v(1-v)^{n-2} $. To conclude the proof, it is enough to show that $g_{n,i}(v)\geq 0$. We note that $g_{n,i}(0)=0$, so we only need to prove that $g_{n,i}'(v)\geq 0$. Now, 
\begin{align*}
    g_{n,i}'(v) & = i(i-1)(1-v)^{i-2}\PP(\mathrm{Binom}(n,v)\geq 2) - ((1-v)^i - (1-iv))n(n-1)(1-v)^{n-2}\\
    & \quad + ( (1-v)^i - (1-iv)  ) n (n-1)(n-2)v(1-v)^{n-3} \\
    & = i(i-1)(1-v)^{i-2}\PP(\mathrm{Binom}(n,v)\geq 2) \\
    &\quad -n(n-1)(1-(n-1)v)(1-v)^{n-3}( (1-v)^i - (1-iv) )
\end{align*}
Using the second equality it is easy to verify that $g_{n,i}'(v)\geq 0$ for $v\geq 1/(n-1)$. Hence, from now, we assume that $v< 1/(n-1)$. Furthermore, by inspection, we can verify that $g_{n,i}'(v)\geq 0$ for $n\in \{3,4\}$ and $i\in \{2,\ldots,n \}$; hence, from now on, we assume that $n\geq 5$. The following claim allows us to focus only on lower bounding $g_{n,2}'(v)$.

\begin{claim}\label{claim:monotonicity_g}
    For $n\geq 5$, $v\leq 1/(n-1)$ and for all $i\in \{ 2,\ldots,n-1\}$, we have $g_{n,i}'(v)\leq g_{n,i+1}'(v)$.
\end{claim}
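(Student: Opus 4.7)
The plan is to obtain a clean closed form for $D_i(v) := g_{n,i+1}'(v) - g_{n,i}'(v)$ and then show directly that $D_i(v) \geq 0$ on $[0, 1/(n-1)]$.

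First I will telescope the two expressions using the formula $g_{n,i}'(v) = i(i-1)(1-v)^{i-2}p(v) - [(1-v)^i - (1-iv)]q'(v)$ already derived in the excerpt, where $p(v) := \PP(\mathrm{Binom}(n,v)\geq 2) = 1 - (1-v)^n - nv(1-v)^{n-1}$ and $q'(v) = n(n-1)(1-v)^{n-3}[1-(n-1)v]$. Two elementary cancellations occur: the polynomial coefficients give $(i+1)i(1-v)^{i-1} - i(i-1)(1-v)^{i-2} = i(1-v)^{i-2}[2-(i+1)v]$, and the remainders satisfy $[(1-v)^{i+1}-(1-(i+1)v)] - [(1-v)^i-(1-iv)] = -v[1-(1-v)^i]$. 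Substituting back yields
\[
D_i(v) = i(1-v)^{i-2}[2-(i+1)v]\,p(v) - v[1-(1-v)^i]\,q'(v).
\]
Both summands are nonneg on $[0, 1/(n-1)]$; the task is to show the first dominates the second.

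Next I will verify the endpoints and the local behavior near zero. At $v = 0$ both summands vanish, so $D_i(0) = 0$; at $v = 1/(n-1)$ the factor $1-(n-1)v$ kills the second summand while the first remains strictly positive. To control the behavior near $v = 0$, I will substitute the expansions $p(v) = \binom{n}{2}v^2 - \tfrac{n(n-1)(n-2)}{3}v^3 + O(v^4)$ and $1-(1-v)^i = iv - \binom{i}{2}v^2 + O(v^3)$ and track coefficients. The $v^2$ terms cancel exactly between the two summands, and a direct computation gives the $v^3$ coefficient of $D_i$ equal to $\tfrac{in(n-1)(4n-3i-5)}{3}$. Since $i \leq n-1$ forces $4n - 3i - 5 \geq n - 2$, this coefficient is strictly positive for $n \geq 5$ (consistent with the authors having dispatched $n \in \{3,4\}$ by inspection), so $D_i \geq 0$ on a right-neighborhood of $0$.

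For the interior $(0, 1/(n-1))$ I would substitute the closed form for $p(v)$, divide the inequality by the common factor $(1-v)^{i-2}$, and use the identity $1-(1-v)^i = v\sum_{j=0}^{i-1}(1-v)^j$ to pull out a common factor of $v^2$. After this reduction $D_i(v) \geq 0$ becomes a polynomial inequality in $v$ on the compact interval $[0, 1/(n-1)]$, which I would rescale via $v = \alpha/(n-1)$ with $\alpha \in [0,1]$ and prove by comparing coefficients termwise. I expect the main obstacle to be the regime where $i$ is close to $n-1$: there the crude Bernoulli-type bounds $p(v) \geq \binom{n}{2}v^2(1-v)^{n-2}$ and $1-(1-v)^i \leq iv$ are too loose and one must keep the next order corrections in both quantities. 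The cancellation required is precisely the one isolated by the Taylor analysis above, so the $n\geq 5$ hypothesis enters at exactly the same point in both arguments, which suggests the overall strategy will close.
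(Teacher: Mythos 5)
Your telescoping identity for $D_i(v) = g_{n,i+1}'(v) - g_{n,i}'(v)$ and your Taylor computation are both correct (and match the cancellation the paper performs before it bounds the difference by $(1-v)^{i-3}\phi_{n,i}(v)$). The $v^3$ coefficient $\tfrac{in(n-1)(4n-3i-5)}{3}$ is right. But the proposal has a genuine gap: the interior argument is a plan, not a proof. ``Substitute, rescale $v=\alpha/(n-1)$, and prove by comparing coefficients termwise'' is not a completion, and in fact that method cannot work as stated — a polynomial nonnegative on $[0,1]$ need not have nonnegative coefficients in any obvious basis — so some additional idea is required to close the gap, and you do not supply it. The paper's additional idea is a double reduction: first it divides out $(1-v)^{i-3}$ (using $(1-v)^{n-3}\le(1-v)^{i-3}$, which loses nothing since the offending term is nonnegative on $[0,1/(n-1)]$), obtaining a quantity $\phi_{n,i}(v)$; then it proves $\phi_{n,i}(v)\ge\phi_{n,i+1}(v)$, so it suffices to bound only $\phi_{n,2}$, which eliminates the parameter $i$ entirely and makes the remaining estimate (with its $n=5$ versus $n\ge6$ split) tractable. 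Your route keeps $i$ alive all the way to the end, which is precisely why you identify ``$i$ close to $n-1$'' as the obstacle but cannot dispatch it.

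There is also a misdiagnosis worth flagging. You write that the $n\geq 5$ hypothesis enters ``at exactly the same point in both arguments,'' pointing to your $v^3$ coefficient; but $4n-3i-5\geq n-2>0$ already for all $n\geq 3$, so the $n\geq 5$ hypothesis is \emph{not} used in your local expansion at all. In the paper it enters much later, in the sign of $\theta_{n,n-2}(v)=(1-(n-1)v)\tfrac{n-5}{3}v$. So the consistency you are citing as evidence that ``the overall strategy will close'' is not actually present in what you computed, and the interior inequality on $(0,1/(n-1))$ remains unproved.
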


This proof requires lower bounding several polynomials and various case analysis; hence, we defer it to the end. Now, note that
\begin{align*}
    g_{n,2}'(v) & = 2\PP(\mathrm{Binom}(n,v)\geq 2) - n(n-1)(1-(n-1)v)(1-v)^{n-3}v^2 \\
    & \geq n(n-1)v^2 (1-v)^{n-2} - n(n-1)(1-(n-1)v)v^2 (1-v)^{n-3}\\
    & = n(n-1)v^2(1-v)^{n-3}( 1-(1-v)(1-(n-1)v) )\\
    & = n(n-1)v^2 (1-v)^{n-3}(v+(n-1)v(1-v))
\end{align*}
where in the first inequality we use the lower bound $\PP(\mathrm{Binom}(n,v)\geq 2)\geq \PP(\mathrm{Binom}(n,v)=2)$. From here, we obtain that $g_{n,2}'(v)\geq 0$ with equality at $v=0$. Using Claim~\ref{claim:monotonicity_g}, we conclude that $G_{n,i}'(v)\geq 0$ for all $i\in \{2,\ldots,n\}$.

\begin{proof}[Proof of Claim~\ref{claim:monotonicity_g}]
    Indeed,
\begin{align}
    g_{n,i}'(v)- g_{n,i+1}'(v) &= -\PP(\mathrm{Binom}(n,v)\geq  2) \left(  2 -(i+1)v   \right)i(1-v)^{i-2} \\
    & \quad + v(1-(1-v)^i) (1-(n-1)v) n (n-1) (1-v)^{n-3}\nonumber \\
    & \leq (1-v)^{i-3}\Big( -i\left(  2 -(i+1)v   \right)(1-v)\PP(\mathrm{Binom}(n,v)\geq  2) \big.\nonumber \\
    &\quad \big. + v(1-(1-v)^i)(1-(n-1)v)n(n-1) \Big) \label{ineq:difference_g}
\end{align}
where in the inequality we use that $(1-v)^{n-3}\leq (1-v)^{i-3}$. Now, let $\phi_{n,i}(v)$ be the term in the big parenthesis in~\eqref{ineq:difference_g}, so $g_{n,i}'(v)-g_{n,i+1}'(v)\leq (1-v)^{i-3}\phi_{n,i}(v)$. We now focus on proving that $\phi_{n,i}(v)\leq 0$. For this, we will reduce the problem to bounding only $\phi_{n,2}(v)\leq 0$ by proving that $\phi_{n,2}(v) \geq \phi_{n,i}(v)$ for all $i=2,\ldots,n-1$. To prove this last inequality, we analyze the difference $\phi_{n,i}(v)-\phi_{n,i+1}(v)$ for $i\in \{2,\ldots,n-2 \}$:
\begin{align*}
    &\quad \phi_{n,i}(v)-\phi_{n,i+1}(v) \\
    & = 2(1-v)(1-(i+1)v)\PP(\mathrm{Binom}(n,v)\geq 2)-v^2(1-v)^i(1-(n-1)v)n(n-1) \\
    & \geq v^2 (1-v)^i(n-1)n\left( (1-(i+1)v)(1-v)^{n-i-2}\left( 1+\frac{n-5}{3} v \right) -(1-v(n-1)) \right) \\
    & = v^2 (1-v)^i (n-1)n \cdot\theta_{n,i}(v).
\end{align*}
Note that the function $(1-(i+1)v)(1-v)^{n-i-2}$ is decreasing in $i$; hence, for $n\geq 5$, we have $\theta_{n,i}(v)\geq \theta_{n,n-2}(v)$. From here, we obtain
\[
\phi_{n,i}(v)- \phi_{n,i+1}(v) \geq v^2(1-v)^i (n-1)n \cdot\theta_{n,n-2}(v).
\]
On the other hand, we have $\theta_{n,n-2}(v)=(1-(n-1)v)\left(  
\frac{n-5}{3}\right)v\geq 0$. From here, we obtain that $\phi_{n,i}(v)\geq \phi_{n,i+1}(v)$ for all $i\in \{2,\ldots,n-2 \}$ and so $\phi_{n,2}(v)\geq \phi_{n,i}(v)$ for all $i\in \{2,\ldots,n-2\}$. Now,
\begin{align*}
    \phi_{n,2}(v) & = -2 (2-3v)(1-v) \PP(\mathrm{Binom}(n,v)\geq 2) + v^2(2-v)(1-(n-1)v)n(n-1)  \\
    & \leq -2(2-3v)(1-v)\left(  \binom{n}{2}v^2(1-v)^{n-2} + \binom{n}{3}v^3(1-v)^{n-3}    \right) \\
    & \quad + v^2 (2-v)(1-(n-1)v)n(n-1)\\
    & = n(n-1)v^2 \left( -(2-3v) \left((1-v)^{n-1}+\frac{n-2}{3}v(1-v)^{n-2}\right)+(2-v)(1-(n-1)v)   \right)\\
    & = n(n-1)v^2\left( (2-v)(1-(n-1)v) - (2-3v) (1-v)^{n-2} \left(1+\frac{n-5}{3}v\right)  \right)
\end{align*}
We analyze this last bound for the case $n=5$ and case $n\geq 6$ separately. For $n=5$, we have
\[
\phi_{5,2}(v) \leq 20v^2\left( (2-v)(1-4v) -(2-3v)(1-v)^{3}   \right).
\]
The polynomial $(2 - v)(1 - 4v) - (2 - 3v)(1 - v)^3$ has roots $v\in \{ 0, (11-i\sqrt{11})/6, (11+i\sqrt{11})/6 \}$ with $0$ having multiplicity $2$. Since, the polynomial tends to $-\infty$ when $v\to\infty$ and its only $0$ when $v=0$, we deduce that $\phi_{5,2}(v)\leq 0$.

Now, assume that $n\geq 6$, then
\begin{align*}
    \phi_{n,2}(v) & \leq n(n-1)v^2 \left( (2-v)(1-v)^{n-1} - (2-3v) (1-v)^{n-2} \left(1+\frac{n-5}{3}v\right)  \right)\\
    & = n(n-1) v^2 \left( (1-v)^{n-2} \cdot v \cdot\left( -\frac{2(n-5)}{3} + (n-4)v \right)     \right)
\end{align*}
where in the first inequality we use Bernoulli's inequality on $1-(n-1)v \leq (1-v)^{n-1}$ and in the equality we simply reorder the big parenthesis from the previous line. Now, the polynomial $v(-2(n-5)/3 +(n-4)v)$ has $2$ roots at $v\in \{0, 2(n-5)/(3(n-4))$. Hence, for $v\leq 1/(n-1)$, we must have that $\phi_{n,2}(v)\leq 0$.
 
Going back to the function $g_{n,i}'$, all our calculations give us
\begin{align*}
    g_{n,i}'(v)- g_{n,i+1}'(v) &\leq (1-v)^{i-3} \phi_{n,2}(v) \leq 0,
\end{align*}
which finishes the proof of Claim~\ref{claim:monotonicity_g}.
\end{proof}
\end{proof}

\subsection{An Upper Bound for Single-Threshold 
Solutions}\label{sub:upper_bound_single_threshold}

We present an instance that shows that no strategy in the class of single-threshold (including randomization) can obtain a competitive ratio larger than $0.5463$. We use a counterexample motivated by~\cite{perez2025iid}. For $n\geq 1$, we consider the following function from $(0,1]$ to $\mathbf{R}_+$
\[
f(u) = \frac{a \cdot c_n}{u} \mathbf{1}_{(0,1/n^{10})}(u) + b\cdot \mathbf{1}_{[1/n^{10},\beta/n]}(u)
\]
where $\mathbf{1}_X$ is the indicator function that is $1$ for $u\in X$ and $0$ for $u\notin X$, $a,b>0$ and $\beta>1/n$ are positive constants to be optimized and $c_n= \left( n \cdot \left(1- \left( 1 - 1/n^{10} \right)^{n-1} \right) \right)^{-1}$. We are going to assume that $a+b\leq 1$. For $n$ large enough, we have that $f$ is nonincreasing. 

Now, we can construct a random variable from $f$ as follows. First, we add a small perturbation to $f$ so $f$ is smooth and strictly decreasing. This can be done by taking a convolution with a smooth function. Let's call $f_\varepsilon$ the resulting function, with small error $\varepsilon>0$; hence, when $\varepsilon\to 0$, we have $f_{\varepsilon}(u)\to f(u)$, except for a set of measure $0$. Note that $f_{\varepsilon}$ is surjective in $\R_+$. Now, for $x\geq 0$, let $F_\varepsilon(x)= 1-f_\varepsilon^{-}(x)$. Note that $F$ is increasing, $F_\varepsilon(0)=0$ and $F_\varepsilon(+\infty)=1$; hence, $F_\varepsilon$ is a valid CDF. We define the random variable $X_\varepsilon$ to be the random variable following $F_\varepsilon$. By construction, $F_\varepsilon^{-1}(1-u)=f_\varepsilon(u)$. For $\varepsilon\to 0$, we have $F_\varepsilon^{-1}(1-u)\to f(u)$, except for a set of measure $0$ in $[0,1]$. To avoid notational clutter, from now on, we simply work with $f(u)$ instead of $f_\varepsilon$. By abusing notation, we will write $F^{-1}(1-u)=f(u)$, but it has to be understood that this equality occurs except for the points $1/n^{10}$ and $\beta/n$.

We now consider a sequence of $n$ independent random variables following $F$ (the limit of $F_\varepsilon$ when $\varepsilon\to 0$). We assume that $n$ is large. The result now follows from the following two Lemmas. \\

\begin{lemma}\label{lem:limit_1_threshold_opt}
    We have $\EE (X_{(2)}) \to a+b(1-e^{-\beta}(1+\beta))$ when $n\to \infty$. \\
\end{lemma}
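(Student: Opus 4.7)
The plan is to express $\EE(X_{(2)})$ as an integral against the density of the second-smallest of $n$ i.i.d.\ uniform random variables, split the integral at the two breakpoints of $f$, and identify the two contributions as $a$ (exactly, by the choice of $c_n$) and $b(1-e^{-\beta}(1+\beta))$ (in the limit $n\to\infty$, by Poisson approximation).

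First I would use that $X_{(2)} \stackrel{d}{=} F^{-1}(1-U_{[2]})$, where $U_{[2]}$ is the second smallest of $n$ i.i.d.\ uniforms and has density $n(n-1)u(1-u)^{n-2}$; identifying $F^{-1}(1-\cdot)$ with $f$ (modulo the smoothing $\varepsilon\to 0$ discussed below), this yields the decomposition
\[
\EE(X_{(2)}) = \int_0^{1/n^{10}} \frac{ac_n}{u}\,n(n-1)u(1-u)^{n-2}\,\mathrm{d}u + \int_{1/n^{10}}^{\beta/n} b\,n(n-1)u(1-u)^{n-2}\,\mathrm{d}u,
\]
since $f\equiv 0$ on $(\beta/n,1]$. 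On the first region, the $1/u$ in $f$ cancels the $u$ in the density, so a direct antiderivative computation gives $ac_n\cdot n\bigl(1-(1-1/n^{10})^{n-1}\bigr)$, which equals $a$ by the very definition of $c_n$. Thus the first piece contributes exactly $a$ for every $n$, with no limit required.

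For the second piece, the integrand is $b$ times the density of $U_{[2]}$, so the integral equals $b\bigl(\PP(\mathrm{Binom}(n,\beta/n)\geq 2) - \PP(\mathrm{Binom}(n,1/n^{10})\geq 2)\bigr)$. The term at $1/n^{10}$ is $\Theta(n^{-18})$ and vanishes as $n\to\infty$, while the classical Poisson approximation gives $\PP(\mathrm{Binom}(n,\beta/n)\geq 2) \to 1 - e^{-\beta}(1+\beta)$. Summing the two pieces yields the claimed limit. The only subtle point I would need to address is the identification $F^{-1}(1-\cdot) = f$, which holds exactly only after the smoothing limit $\varepsilon\to 0$; since $f_\varepsilon(u)\,u(1-u)^{n-2}$ is uniformly bounded on $(0,\beta/n]$ (by $ac_n(1-u)^{n-2}$ near $0$ and by $b\cdot u(1-u)^{n-2}$ otherwise), dominated convergence lets me pass to $f$ for fixed $n$ before taking $n\to\infty$.
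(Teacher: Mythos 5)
Your proposal is correct and follows essentially the same route as the paper: both write $\EE(X_{(2)}) = n(n-1)\int_0^1 f(u)\,u(1-u)^{n-2}\,\mathrm{d}u$, split at $1/n^{10}$ and $\beta/n$, observe the first piece equals exactly $a$ by the definition of $c_n$, and take $n\to\infty$ in the second. The only difference is cosmetic—the paper computes the second integral via an explicit antiderivative and then takes term-by-term limits, whereas you recast it as $b\bigl(\PP(\mathrm{Binom}(n,\beta/n)\geq 2) - \PP(\mathrm{Binom}(n,1/n^{10})\geq 2)\bigr)$ and invoke the Poisson limit, which is the same computation in slightly more structural language.
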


\begin{lemma}\label{lem:general_bound_1_threshold_algs}
    There is $n_0\geq 0$ such that for any algorithm $\ALG$, if the input is of length $n\geq n_0$, the value collected by the algorithm is bounded as
    \[
    \E(\ALG) \leq p(a,b,\beta) + 5\frac{\beta(1+\beta)^2}{n-\beta},
    \]
    where $p(a,b,\beta)= \max_{\lambda \in [0,\beta] }\left\{ a ({1-e^{-\lambda}})/{\lambda} + b \left( 1 - e^{-\lambda}(1+\lambda) \right)  \right\}$.
\end{lemma}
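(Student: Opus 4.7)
The plan is to reduce to deterministic thresholds via convexity and then apply Proposition~\ref{prop:characterization_iid} separately in three quantile regimes. A randomized single-threshold strategy is a mixture of deterministic-threshold rewards, so by linearity it suffices to bound $\E(\ALG_q)$ uniformly over $q \in [0,1]$. In the limit $\varepsilon \to 0$, the density $r$ from Proposition~\ref{prop:characterization_iid} becomes a smooth part $r(v) = ac_n/v^2$ on $(0, 1/n^{10})$, an atom of mass $ac_n n^{10} - b$ at $v = 1/n^{10}$, and an atom of mass $b$ at $v = \beta/n$; this structure lets me evaluate the integral piece by piece.

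The main regime is $q \in (1/n^{10}, \beta/n]$, which I parameterize as $q = \lambda/n$ with $\lambda \in (0,\beta]$. Since only the atom at $\beta/n$ lies above $q$, the slice $v > q$ of the integral is $b\cdot(1-(1-q)^{n-1}(1+(n-1)q))$ by the $v>q$ identity extracted from the proof of Proposition~\ref{prop:key_property_of_A}; Taylor expanding $(1-\lambda/n)^{n-1}$ around $e^{-\lambda}$ gives $b[1-e^{-\lambda}(1+\lambda)]$ up to an error $O(\beta(1+\beta)^2/(n-\beta))$. For $v < q$, the inner sum equals $\sum_{j=2}^n \frac{n}{j(j-1)}(1-q)^{n-j}[(1-v)^j - (1-jv)]$, and on $(0, 1/n^{10})$ the sharp expansion $(1-v)^j - (1-jv) = \binom{j}{2}v^2(1+O(jv))$ together with $\sum_{j=2}^n (1-q)^{n-j} = (1-(1-q)^{n-1})/q$ makes the smooth-part integrand essentially $ac_n\cdot n(1-(1-q)^{n-1})/(2q)$, constant in $v$. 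Integrating over $(0,1/n^{10})$ and adding the atom at $1/n^{10}$, whose mass is chosen so that it contributes an identical amount to leading order, yields $a(1-e^{-\lambda})/\lambda$ up to an error of the same order.

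In the regime $q \in (0, 1/n^{10}]$ the threshold is in the tail, so every piece of $r$ above $q$ contributes via the $v>q$ sum $1-(1-q)^{n-1}(1+(n-1)q) = O(n^2q^2)$, and direct bookkeeping (including the $v<q$ contribution on $(0,q)$) shows $\E(\ALG_q) \leq a + O(1/n)$, which is admissible since $\lim_{\lambda \to 0^+}[a(1-e^{-\lambda})/\lambda + b(1-e^{-\lambda}(1+\lambda))] = a \leq p(a,b,\beta)$. In the regime $q > \beta/n$ the threshold is $\tau = 0$ and $\E(\ALG_q) = \E(X_1 \mid E)$; using the closed form $\PP(X_1 \le t \mid E) = F(t)^n + nF(t)(1-F(t)^{n-1})/(n-1)$ together with the step structure of $F$ shows this is $O(1/n)$, well below $p(a,b,\beta)$. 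Taking the supremum over $\lambda \in (0,\beta]$ then produces $p(a,b,\beta)$, and summing the three error budgets gives the stated slack $5\beta(1+\beta)^2/(n-\beta)$. The main obstacle will be the delicate cancellation behind the tail contribution in the main regime: the atom at $1/n^{10}$ carries mass of order $n^{19}$ that must align exactly with the divergent weight $ac_n/v^2$ to leave the finite answer $a(1-e^{-\lambda})/\lambda$, which requires uniform-in-$\lambda$ control on the higher-order terms of both $(1-v)^j - (1-jv)$ and of the expansion $c_n = n^9/(n-1) + O(1/n^{10})$.
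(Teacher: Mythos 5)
Your overall strategy matches the paper's: parameterize single-threshold rules by quantile $q$, decompose into the three regimes $q\leq 1/n^{10}$, $q\in[1/n^{10},\beta/n]$, and $q\geq\beta/n$, and bound each regime separately via the quantile representation of $\E(\ALG_q)$. Your treatment of the first two regimes is consistent with the paper's (the paper works directly with $f(w)=F^{-1}(1-w)$ rather than passing through $r(v)$ and its atoms, which avoids the delicate $n^{19}$-mass cancellation you flag as an obstacle; this is a cosmetic simplification, not a different argument).

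However, your handling of the regime $q>\beta/n$ has a genuine gap. You claim that for every $q>\beta/n$ the threshold is $\tau=0$, hence $\E(\ALG_q)=\E(X_1\mid E)=O(1/n)$. This is not what the quantile formula (i.e., the $\varepsilon\to 0$ limit of the smoothed instance $F_\varepsilon$, which is what Proposition~\ref{prop:characterization_iid} computes) gives. For $q>\beta/n$ the upper limit of the inner integral saturates at $\beta/n$ because $f$ vanishes beyond it, but the factors $(1-q)^{n-k-1}$ still depend on $q$: the result is a quantity that equals $\E(\ALG_{\beta/n})\approx p(a,b,\beta)$ when $q\to\beta/n^{+}$ and only decays to $O(1/n)$ as $q\to 1$. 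So the uniform $O(1/n)$ bound you assert over the whole regime is false; your ``$\tau=0$'' identification conflates the degenerate deterministic threshold of the limiting distribution with the smoothed algorithms that are actually being bounded, and for $q$ just above $\beta/n$ those smoothed algorithms still collect $\Theta(1)$ value. The paper's (correct and shorter) argument is precisely that the expression is decreasing in $q$ on $[\beta/n,1]$, so the supremum over this regime is attained at $q=\beta/n$ and is already controlled by the analysis of the middle regime. Replace your $O(1/n)$ claim with this monotonicity observation and the rest of the sketch goes through.

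A secondary remark: several estimates in your outline (the ``direct bookkeeping'' for $q\leq 1/n^{10}$, and the uniform-in-$\lambda$ error control in the middle regime) are asserted rather than carried out; the paper's proof includes explicit claims (e.g., $c_n/n^8\leq 1+3/(n-1)$ and the two bounds quantifying the $\beta(1+\beta)^2/(n-\beta)$ error) that you would still need to reproduce to close the argument with the stated constant $5$ in the slack term.
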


We first provide the tight upper bound and then we prove the lemmas. Using these two lemma, for any algorithm and $n\geq n_0$, we have
\begin{align*}
    \frac{E(\ALG)}{\EE (X_{(2)})} & \leq \frac{p(a,b,\beta) + 5{\beta(1+\beta)^2}/{(n-\beta)}}{\EE (X_{(2)}) }
\end{align*}
Using numerical optimization to minimize $p(a,b,\beta)$, we found $a\approx 0.5463 $, $b\approx 0.4537$ and $\beta\approx 109.131$, we obtain $p(a,b,\beta)\approx 0.5463$. Hence, for $n$ large, we obtain that $\E (\ALG)/\EE (X_{(2)}) \leq 0.5463+ o(n)$. This shows that with one threshold, we cannot recover the approximation of $1-1/e\approx 0.6321$ in the standard prophet inequality.

In the remainder of the subsection, we present the proof of Lemma~\ref{lem:limit_1_threshold_opt} and~\ref{lem:general_bound_1_threshold_algs}.

\begin{proof}[Proof of Lemma~\ref{lem:limit_1_threshold_opt}]
    We have
    \begin{align*}
        \EE (X_{(2)}) & = n(n-1)\int_0^1 F^{-1}(1-u)q (1-q)^{n-2}\, \mathrm{d}q\\
        & = n(n-1)\int_0^{1/n^{10}} a\cdot c_n (1-q)^{n-2}\, \mathrm{d}q + n(n-1)\int_{1/n^{10}}^{\beta/n}  b q(1-q)^{n-2}\, \mathrm{d}q \\
        & = a c_n  n \left( 1 - \left( 1 - \frac{1}{n^{10}} \right)  \right)\\
        & \quad + b \left( \frac{1}{n^9}\left( 1 - \frac{1}{n^{10}} \right)^{n-1} - \beta\left( 1- \frac{\beta}{n} \right)^{n-1} + \left( 1- \frac{1}{n^{10}} \right)^n - \left( 1-\frac{\beta}{n} \right)^{n} \right)
    \end{align*}
    The conclusion now follows by taking limit in the last equality.
\end{proof}

\begin{proof}[Proof of Lemma~\ref{lem:general_bound_1_threshold_algs}]
    We can parametrize every single-threshold algorithm via the quantile chosen by it. If $\ALG_q$ denotes the value obtained by a single-threshold algorithm that always uses quantile $q$, we have $\ALG \leq \max_{q\in [0,1]}\ALG_q$. We analyze this last maximum for $q\leq 1/n^{10}$, $q\in [1/n^{10},\beta/n]$ and $q\geq \beta/n$.

    For $q\leq 1/n^{10}$, we have
    \begin{align*}
        \E(\ALG_q) & = \sum_{k=1}^{n-1}\frac{n}{k}(1-q)^{n-k-1}\int_0^{q}\frac{ a c_n }{w} \left( 1-(1-w)^k \right)\, \mathrm{d}w \\
        & \leq a \cdot c_n \sum_{k=1}^{n-1} \frac{n}{k} \sum_{\ell=0}^{k-1} \int_0^{1/n^{10}} (1-w)^\ell \,\mathrm{d}w\\
        & = a c_n \sum_{k=1}^{n-1}\sum_{\ell=0}^{k-1} \frac{1-(1-1/n^{10})^{\ell+1}}{\ell+1}\\
        & \leq a c_n \sum_{k=1}^{n-1}\frac{n}{k} \frac{k}{n^{10}}\\
        & = a \cdot \frac{1/n^{9}}{1- (1-1/n^{10})^{n-1}} \leq a \left( 1 +  \frac{3}{n-1}\right)
    \end{align*}
    where in the first inequality we upper bounded the integral for $q=1/n^{10}$ and we also upper bounded $(1-q)^{n-k-1}\leq 1$; in the second equality we performed the integration; in the second inequality we used Bernoulli's inequality: $(1-1/n^{10})^{\ell+1}\geq 1-(\ell+1)/n^{10}$, and in the last inequality we used the following claim.
    \begin{claim}
        We have $c_n/n^8\leq \left( 1 + 3/(n-1)  \right)$.
    \end{claim}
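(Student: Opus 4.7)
The plan is to reduce the claim to a lower bound on $1-(1-1/n^{10})^{n-1}$. Starting from the definition,
\[
\frac{c_n}{n^8} \;=\; \frac{1}{n^9 \left(1-(1-1/n^{10})^{n-1}\right)},
\]
so showing $c_n/n^{8} \leq 1 + 3/(n-1)$ is equivalent to producing a clean lower bound of the form
\[
n^{9}\left(1-(1-1/n^{10})^{n-1}\right) \;\geq\; \frac{n-1}{n+2}\quad\text{(or anything of order } 1-O(1/n)\text{)}.
\]

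My first step is to invoke the standard inequality $(1-x)^k \leq e^{-kx}$ with $x=1/n^{10}$ and $k=n-1$, which gives
\[
1-(1-1/n^{10})^{n-1} \;\geq\; 1-e^{-(n-1)/n^{10}}.
\]
Next I would use the elementary Taylor bound $1-e^{-y} \geq y - y^{2}/2$ valid for $y \geq 0$, with $y=(n-1)/n^{10}$. Since $y^{2}/2 \leq y/(2n^{10})$, this yields
\[
1-(1-1/n^{10})^{n-1} \;\geq\; \frac{n-1}{n^{10}}\left(1-\frac{1}{2n^{10}}\right).
\]
Multiplying by $n^{9}$ and inverting,
\[
\frac{c_n}{n^{8}} \;\leq\; \frac{n}{n-1}\cdot\frac{1}{1-1/(2n^{10})}.
\]

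The last step is a routine algebraic cleanup. Using $1/(1-t)\leq 1+2t$ for $t\leq 1/2$ with $t=1/(2n^{10})$ (valid for all $n\geq 1$), and then the identity $n/(n-1)=1+1/(n-1)$, one obtains
\[
\frac{c_n}{n^{8}} \;\leq\; \left(1+\frac{1}{n-1}\right)\!\left(1+\frac{1}{n^{10}}\right) \;=\; 1+\frac{1}{n-1}+\frac{1}{n^{10}}+\frac{1}{(n-1)n^{10}}.
\]
Since $1/n^{10}\leq 1/(n-1)$ for $n\geq 2$, the right-hand side is bounded by $1+3/(n-1)$, which is the claim. The only genuinely quantitative step is the exponential-plus-Taylor bound in the second paragraph; everything else is bookkeeping, and I do not anticipate any real obstacle since the error terms are of order $1/n^{10}$ against a target slack of order $1/n$.
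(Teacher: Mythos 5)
Your approach mirrors the paper's proof: upper-bound $(1-1/n^{10})^{n-1}$ by $e^{-(n-1)/n^{10}}$, then use the second-order Taylor bound on $e^{-y}$, and finish with algebraic cleanup to absorb the error into $3/(n-1)$. One small slip: with $y=(n-1)/n^{10}$ you write $y^2/2 \leq y/(2n^{10})$, but in fact $y^2/2 = y\cdot\frac{n-1}{2n^{10}} \leq y/(2n^{9})$ (your inequality as written needs $n\leq 2$). Carrying the corrected $1/n^9$ through instead of $1/n^{10}$ changes nothing material, since $1/n^{9}\leq 1/(n-1)$ for $n\geq 2$ still gives the bound $1+3/(n-1)$, and this is exactly the form of the estimate in the paper (which keeps $1-\frac{n-1}{2n^{10}}$ explicitly and absorbs it via a factor $1+\frac{1}{2n}$).
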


    \begin{proof}
        First, note that 
        \begin{align}
        \left( 1 - \frac{1}{n^{10}} \right)^{n-1}\leq e^{-(n-1)/n^{10}} \leq 1 - \frac{n-1}{n^{10}} + \frac{(n-1)^2}{2n^{20}} \label{ineq:bound_for_cn}    
        \end{align}
        Then,
        \begin{align*}
            \frac{c_n}{n^8} & = \frac{1/n^9}{1-(1-1/n^{10})^{n-1}} \\
            & \leq \frac{n}{(n-1)\left( 1 - (n-1)/(2n^{10}) \right)}\\
            & \leq \left(1 + \frac{1}{n-1} \right)\left(1 + \frac{1}{2n} \right)\\
            & \leq 1 + \frac{3}{n-1}
        \end{align*}
        where in the first inequality we used inequality~\eqref{ineq:bound_for_cn}, the second inequality follows by $1\leq (1-(n-1)/2n^{10})(1+1/2n)$ and the last inequality follows simply by expanding the multiplication and bounding $ 1/2n, 1/(2n(n-1))\leq 1/(n-1)$.
    \end{proof}

    Therefore, for any $q\leq 1/n^{10}$, we have that the value of the algorithm is upper bounded by $a\cdot (1 + 3/(n-1))$.

    For $q=\lambda/n\in [1/n^{10},\beta]$, we have
    \begin{align*}
        \E(\ALG_q) & = \sum_{k=1}^{n-1} \frac{n}{k}(1-q)^{n-k-1}\int_0^{1/n^{10}} \frac{a c_n}{w}\left( 1-(1-w)^k \right) \, \mathrm{d}w\\
        & \quad + \sum_{k=1}^{n-1}\frac{n}{k}(1-q)^{n-k-1} \int_{1/n^{10}}^{q}b \left( 1 - (1-w)^k \right)\, \mathrm{d}w\\
        & \leq a c_n \sum_{k=1}^{n-1}\frac{n}{k} (1-q)^{n-k-1} \frac{k}{n^{10}} \\
        & \quad + b \sum_{k=1}^{n-1} \frac{n}{k}(1-q)^{n-k-1} \left(\frac{(1-q)^{k+1}- ( (1-1/n^{10})^{k+1} -  (q-1/n^{10})(k+1) ) }{k+1}\right)    \end{align*}
    where in the first inequality we upper bounded the first term in the first line in a manner similar to the case $q\leq 1/n^{10}$ and we integrated the second term. The following two claims allow us to upper bound this last inequality by controlling the error.   
    \begin{claim}\label{claim:limit_of_cn}
        We have $c_n\left(  1 - (1-q)^{n-1} \right)/(qn^9) \leq (1-e^{-\lambda})/\lambda + \beta(1+\beta)/(n-\beta)$
    \end{claim}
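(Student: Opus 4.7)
The plan is to reduce the target inequality to a sharp comparison between $(1-\lambda/n)^{n-1}$ and $e^{-\lambda}$, where $\lambda=qn\in[1/n^{9},\beta]$. Setting $q=\lambda/n$ and applying the preceding claim on $c_n/n^8$,
\[
\frac{c_{n}\bigl(1-(1-q)^{n-1}\bigr)}{qn^{9}}=\frac{c_{n}}{n^{8}}\cdot\frac{1-(1-\lambda/n)^{n-1}}{\lambda}\leq\left(1+\frac{3}{n-1}\right)\frac{1-(1-\lambda/n)^{n-1}}{\lambda}.
\]

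The core step is to write $1-(1-\lambda/n)^{n-1}=(1-e^{-\lambda})+(e^{-\lambda}-(1-\lambda/n)^{n-1})$ and bound the error uniformly in $\lambda$. Applying the standard inequality $\ln(1-x)\geq -x/(1-x)$ with $x=\lambda/n$ yields
\[
(n-1)\ln(1-\lambda/n)+\lambda\geq\frac{\lambda(1-\lambda)}{n-\lambda},
\]
so that $(1-\lambda/n)^{n-1}\geq e^{-\lambda}\bigl(1+\lambda(1-\lambda)/(n-\lambda)\bigr)$ via $e^{y}\geq 1+y$. Rearranging and using $e^{-\lambda}\leq 1$ together with $\lambda(\lambda-1)\leq\lambda^{2}$ gives the uniform bound $e^{-\lambda}-(1-\lambda/n)^{n-1}\leq\lambda^{2}/(n-\lambda)$ (trivially true when $\lambda\leq 1$, since the left-hand side is then nonpositive). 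Dividing by $\lambda$ and using $\lambda\leq\beta$,
\[
\frac{e^{-\lambda}-(1-\lambda/n)^{n-1}}{\lambda}\leq\frac{\lambda}{n-\lambda}\leq\frac{\beta}{n-\beta}.
\]

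Combining the two displays and using $(1-e^{-\lambda})/\lambda\leq 1$, the left-hand side of the claim is at most
\[
\left(1+\frac{3}{n-1}\right)\left(\frac{1-e^{-\lambda}}{\lambda}+\frac{\beta}{n-\beta}\right)\leq\frac{1-e^{-\lambda}}{\lambda}+\frac{\beta}{n-\beta}+\frac{3n}{(n-1)(n-\beta)}.
\]
To finish, I need to absorb the last two terms into $\beta(1+\beta)/(n-\beta)$, which reduces to $\beta+3n/(n-1)\leq\beta(1+\beta)$, i.e.\ $\beta^{2}\geq 3n/(n-1)$. Since $3n/(n-1)\leq 6$ for $n\geq 2$, any $\beta\geq\sqrt{6}$ works, and in particular the $\beta\approx 109$ used in the final optimization.

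The main obstacle is obtaining an error bound for $(1-\lambda/n)^{n-1}$ versus $e^{-\lambda}$ that is uniformly small over $\lambda\in[0,\beta]$: a naive first-order Taylor expansion in $1/n$ produces an error of order $\lambda^{2}/n$, which is not small enough to match the target shape $\beta/(n-\beta)$ and would fail to close under the supremum over $\lambda\leq\beta$. Keeping the denominator $n-\lambda$ rather than $n$ throughout, via the sharper inequality $\ln(1-x)\geq -x/(1-x)$, is what makes the final estimate close.
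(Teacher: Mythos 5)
Your proof is correct, and at its core it uses the same comparison as the paper: the inequality $\ln(1-x)\geq -x/(1-x)$ with $x=\lambda/n$ is exactly the paper's $1/(1-\lambda/n)\leq e^{\lambda/(n-\lambda)}$ (via $1+y\leq e^y$), so the bound $e^{-\lambda}-(1-\lambda/n)^{n-1}\leq e^{-\lambda}\lambda(\lambda-1)/(n-\lambda)$ is the paper's inequality, and the rest is elementary. Where you differ is in completeness: the paper only derives $1-(1-\lambda/n)^{n-1}\leq 1-e^{-\lambda}+\beta(1+\beta)/(n-\beta)$ and then leaves tacit the division by $\lambda$ and the multiplication by $c_n/n^{8}$. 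You make both explicit. In particular you correctly divide the error by $\lambda$ \emph{before} maximizing over $\lambda$, obtaining $\lambda/(n-\lambda)\leq\beta/(n-\beta)$, which is what keeps the bound uniform as $\lambda\to 0$; and you invoke the preceding claim $c_n/n^8\leq 1+3/(n-1)$ and show that the extra $O(1/n)$ term can be absorbed into $\beta(1+\beta)/(n-\beta)$ whenever $\beta^2\geq 3n/(n-1)$, i.e.\ $\beta\geq\sqrt{6}$. That requirement is compatible with the final numerical choice $\beta\approx 109$, but it exposes an inconsistency in the paper: the paper's proof opens with ``we use that $\lambda\leq\beta\leq 2$,'' which contradicts $\beta\approx 109$ and appears to be carried over in error from the parameter $\alpha\leq 2$ of the lower-bound section. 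Your write-up avoids that pitfall and is the more rigorous of the two.
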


    \begin{proof}
    In the proof, we use that $\lambda \leq \beta \leq 2$. First, we note that
    \begin{align*}
        1-\left( 1 - \frac{\lambda}{n} \right)^{n-1}& = 1-e^{-\lambda} + e^{-\lambda} - \left(1 - \frac{\lambda}{n} \right)^{n-1}\\
        & \leq 1 - e^{-\lambda} +e^{-\lambda} - e^{-\lambda\left( \frac{n-1}{n-\lambda} \right)}\\
        & = 1-e^{-\lambda}+ e^{-\lambda}\left( 1 - e^{\lambda\left( \frac{1-\lambda}{n-\lambda}\right)}  \right) \\
        & \leq 1-e^{-\lambda} + \frac{|\lambda(1-\lambda)|}{n-\lambda}\leq 1- e^{-\lambda} + \frac{\beta(1+\beta)}{n-\beta}
    \end{align*}
    where in the first inequality we used that $1/(1-\lambda/n)\leq e^{\lambda/(n-\lambda)}$ using the standard inequality $1+x\leq e^x$, in the second inequality we used that $e^{-\lambda}\leq 1$ and $1-|x|\leq e^{-x}$, and in the last inequality, we simply used that $\lambda \leq \beta$.
    \end{proof}

    \begin{claim}
        We have
        \begin{align*}
            &\sum_{k=1}^{n-1} \frac{n}{k}(1-q)^{n-k-1} \left(\frac{(1-q)^{k+1}- ( (1-1/n^{10})^{k+1} -  (q-1/n^{10})(k+1) ) }{k+1}\right) \\
            & \leq 1-e^{-\lambda}(1+\lambda) + 3\frac{\beta(1+\beta)^2}{n}.
        \end{align*}
    \end{claim}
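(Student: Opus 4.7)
The plan is to recognise the complicated parenthesised expression as a definite integral, then swap sum and integral and collapse the resulting geometric series. First, a direct computation gives the identity
\begin{equation*}
(1-q)^{k+1} - (1-1/n^{10})^{k+1} + (q-1/n^{10})(k+1) = (k+1)\int_{1/n^{10}}^q \bigl(1-(1-w)^k\bigr)\,dw,
\end{equation*}
so the left-hand side $T$ of the claim equals $\sum_{k=1}^{n-1}\frac{n}{k}(1-q)^{n-k-1}\int_{1/n^{10}}^q (1-(1-w)^k)\,dw$. Since the integrand is nonnegative, extending the lower limit from $1/n^{10}$ down to $0$ only increases $T$; using $1-(1-w)^k \leq kw$ (valid because $kw \leq n/n^{10} \leq 1$) bounds the incurred excess by $\tfrac{1}{2 n^{18}}$, which is negligible.

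Next comes the main step. Using $\tfrac{1-(1-w)^k}{k} = \int_0^w (1-u)^{k-1}\,du$ and interchanging sum and integral yields
\begin{equation*}
T \;\leq\; n \int_0^q \int_0^w \left[\sum_{k=1}^{n-1}(1-q)^{n-k-1}(1-u)^{k-1}\right] du\,dw + \tfrac{1}{2n^{18}}.
\end{equation*}
The bracketed geometric sum collapses, for $u \neq q$, to $\tfrac{(1-u)^{n-1}-(1-q)^{n-1}}{q-u}$. Applying Fubini on the triangle $0 \leq u \leq w \leq q$ turns the inner $dw$-integration into a factor of $q-u$ which cancels the denominator, giving
\begin{equation*}
T \;\leq\; n\int_0^q \bigl((1-u)^{n-1}-(1-q)^{n-1}\bigr)\, du + \tfrac{1}{2n^{18}} \;=\; 1 - (1-q)^{n-1}\bigl(1+(n-1)q\bigr) + \tfrac{1}{2n^{18}}.
\end{equation*}

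Finally, substituting $q = \lambda/n$ and writing $1+(n-1)q = (1+\lambda)-\lambda/n$, I would invoke the lower bound $(1-\lambda/n)^{n-1} \geq e^{-\lambda} - \beta(1+\beta)/(n-\beta)$ that is established inside the proof of Claim~\ref{claim:limit_of_cn}. Distributing everything gives
\begin{equation*}
T \;\leq\; 1 - e^{-\lambda}(1+\lambda) + \frac{(1+\lambda)\beta(1+\beta)}{n-\beta} + \frac{\lambda}{n} + \frac{1}{2n^{18}}.
\end{equation*}
For $n \geq 2\beta$, one has $n-\beta \geq n/2$ and $1+\lambda \leq 1+\beta$; combined with $\beta \leq \beta(1+\beta)^2$, all three error terms are absorbed into $3\beta(1+\beta)^2/n$, yielding the claimed inequality. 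The main obstacle is the geometric-series/Fubini manoeuvre in the second step; after that is done, everything reduces to the standard exponential estimates already appearing earlier in the proof.
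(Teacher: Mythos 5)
Your proof is correct and reaches the same intermediate closed form $1 - (1-q)^{n-1}\bigl(1+(n-1)q\bigr)$ as the paper, but by a genuinely different computational route. The paper first peels off the exact algebraic identity
\[
\sum_{k=1}^{n-1} \tfrac{n}{k(k+1)}(1-q)^{n-k-1}\bigl( (1-q)^{k+1} - (1-q(k+1)) \bigr) = 1 - (1-q)^{n-1}(1-q+qn)
\]
by a direct rearrangement of the finite sum, and then treats the $1/n^{10}$-dependent remainder as an additive error controlled by Bernoulli's inequality, giving an extra $\tfrac{\ln n}{n^9}$. You instead recognize the whole numerator as $(k+1)\int_{1/n^{10}}^{q}(1-(1-w)^k)\,\mathrm{d}w$, extend the range to $[0,q]$ (which only increases the quantity, so the excess bookkeeping you do is actually superfluous — you may simply write $T\le T'$), convert $\tfrac{1-(1-w)^k}{k}$ to an inner integral, swap, collapse the geometric series $\sum_k (1-q)^{n-k-1}(1-u)^{k-1}=\tfrac{(1-u)^{n-1}-(1-q)^{n-1}}{q-u}$, and let Fubini cancel the $q-u$. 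This is a cleaner way to obtain the closed form and completely avoids the paper's Bernoulli correction step. The final step — substituting $q=\lambda/n$, using $(1-\lambda/n)^{n-1}\ge e^{-\lambda}-\beta(1+\beta)/(n-\beta)$ from the preceding claim, and absorbing the resulting $\tfrac{(1+\lambda)\beta(1+\beta)}{n-\beta}+\tfrac{\lambda}{n}$ into $3\beta(1+\beta)^2/n$ for $n\ge 2\beta$ — mirrors the paper's estimates. Two small remarks: Bernoulli's inequality $1-(1-w)^k\le kw$ holds for all $w\in[0,1]$, so the condition $kw\le 1$ you cite is unnecessary; and strictly speaking the small $O(n^{-18})$ slack term needs $n$ large enough to be absorbed into $3\beta(1+\beta)^2/n$, but the lemma this claim sits inside already allows an $n_0$ threshold, so this is fine (the paper's $\tfrac{\ln n}{n^9}$ term has the same issue).
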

    \begin{proof}
        First, we have
        \begin{align*}
            \sum_{k=1}^{n-1} \frac{n}{k(k+1)}(1-q)^{n-k-1}\left( (1-q)^{k+1} - (1-q(k+1)) \right) & = \sum_{k=1}^{n-1} n q^2 (1-q)^{n-k-1} \\
            & \quad -(1-q)^n + (1-q n) \\
            & = 1 - (1-q)^{n-1}(1-q+qn)
        \end{align*}
        where in the first equality we simply rearranged the sum and in the next line we computed the sum. Then,
        \begin{align*}
            &\, \sum_{k=1}^{n-1} \frac{n}{k}(1-q)^{n-k-1} \left(\frac{(1-q)^{k+1}- ( (1-1/n^{10})^{k+1} -  (q-1/n^{10})(k+1) ) }{k+1}\right)\\
            & \leq \sum_{k=1}^{n-1} \frac{n}{k(k+1)}(1-q)^{n-k-1}\left( (1-q)^{k+1} - (1-q(k+1)) \right) \\
            & \quad +\sum_{k=1}^{n-1} \frac{n}{k(k+1)}(1-q)^{n-k-1}\left(  1 - (1-1/n^{10})^{k+1} \right) \\
            & \leq 1 - (1-q)^{n-1}(1-q+qn) + \sum_{k=1}^{n-1} \frac{1}{k} \frac{1}{n^{9}}\\
            & \leq 1 - (1-q)^{n-1}(1-q+qn) + \frac{\ln (n)}{n^9}
        \end{align*}
        where in the second inequality, we added and subtracted $1$ to the parenthesis to form the term studied at the beginning of the proof, in the second inequality, we used Bernoulli's inequality and in the last inequality, we used the standard Harmonic sum bound. Finally, 
        \begin{align*}
            -(1-q)^{n-1}(1-q+qn) & = -(1+\lambda)\left(1 - \frac{\lambda}{n} \right)^{n-1} + \frac{\lambda}{n} \left(1 - \frac{\lambda}{n} \right)^{n-1}\\
            & \leq -(1+\lambda)e^{-\lambda\left( \frac{n-1}{n-\lambda}\right)} + \frac{\beta}{n} \\
            & = \left( e^{-\lambda} - e^{-\lambda\left( \frac{n-1}{n-\lambda} \right)} \right)(1+\lambda) - e^{-\lambda}(1+\lambda) + \frac{\beta}{n} \\
            & \leq -e^{-\lambda}(1+\lambda)+ \frac{\beta(1+\beta)^2}{n-\beta} + \frac{\beta}{n} 
        \end{align*}
        where in the second and last inequalities, we used the same bounds used in the analysis of Claim~\ref{claim:limit_of_cn}. From here, the inequality of the claim follows.
    \end{proof}
    Therefore,
    \begin{align*}
        \E(\ALG_q) & \leq a\cdot \left( \frac{1-e^{-\lambda}}{\lambda} \right) + b (1-e^{-\lambda}(1+\lambda)) + 4 \frac{\beta(1+\beta)^2}{n-\beta}
    \end{align*}
    For $q\geq \beta/n$, we have
    \begin{align*}
        \E(\ALG_q) & = \sum_{k=1}^{n-1} \frac{n}{k}(1-q)^{n-k-1}\int_0^{1/n^{10}} \frac{a c_n}{w}\left( 1-(1-w)^k \right) \, \mathrm{d}w\\
        & \quad + \sum_{k=1}^{n-1}\frac{n}{k}(1-q)^{n-k-1} \int_{1/n^{10}}^{\beta/n}b \left( 1 - (1-w)^k \right)\, \mathrm{d}w.
    \end{align*}
    This last term is decreasing in $q$; hence it attains its maximum at $q=\beta/n$.

    The conclusion of the lemma follows by putting together the three bounds that we found. Additionally, the bound for $\lambda\in [1/n^{9},\beta]$ supersedes the bound for $q\leq 1/n^{10}$ and $q\geq \beta/n$. 
\end{proof}

\section{Conclusion and Final Remarks}

In this work, we introduced the residual prophet inequality problem (\RPI), a new variant of the classical prophet inequality model where the top $k$ variables are removed before observation. Our formulation highlights the impact of correlation in sequential selection problems and demonstrates that classical single-threshold approaches are insufficient in this setting. We provided a randomized algorithm with a competitive ratio of $1/(k+2)$ for the FI model and showed the tightness of this bound. For the NI model, we give a randomized threshold algorithm with a competitive ratio of $1/(2k+2)$. Additionally, we analyzed the i.i.d.\ case of 1-RPI and proposed an algorithm with a competitive ratio of at least 0.4901. Furthermore, we proved that no single-threshold strategy can achieve a competitive ratio greater than 0.5464. 

Since this is the first time $\RPI$ is introduced, our work opens up several promising directions for future research. One such direction is to investigate whether the $1/(k+2)$ competitive ratio can be achieved using a threshold-based strategy. Another natural avenue is to determine the tight competitive ratio for single-threshold strategies in the i.i.d. case of 1-RPI, and to explore how these results might extend to $\RPI$ for $k \geq 2$. One of the limitations of our current analysis is that it relies heavily on being able to compute probabilities under the condition of the maximum value being removed; these probabilities become intractable to handle for larger values of $k$. Naturally, determining the optimal policy $k$-RPI or even analyzing multi-thresholds strategies are exciting future questions, even for $k=1$.

A natural extension is to explore $\RPI$ under natural combinatorial constraints such as cardinality or matroid constraints, where the gambler can select multiple values while satisfying feasibility conditions, as it has been studied for the classical prophet inequality by~\cite{kleinberg2012matroid}. 

The $\RPI$ problem is very pessimistic as the $k$ largest random variables are removed from the observed sequence. A more relaxed model would consider probabilities of failure. For example, a possibility could be where the $i$-th largest variable is removed with probability $p_i$. This is related to the model by~\cite{perez2024robust,smith1975secretary} and \cite{tamaki1991secretary} where $p_i=p$ for all $i$. 

 Finally, an interesting extension is to study if better competitive ratios for $\RPI$ can be obtained when the removed variables are not necessary the largest, and the gambler has some offline information regarding the variables and/or the values removed.

\backmatter





 

 
 
\subsection*{Funding}
This work was partially supported by ANID Chile through grants FB210005 (CMM), 11240705 (FONDECYT), and AFB230002 (ISCI); 
by ANR France grants {ANR-21-CE40-0020} (CONVERGENCE), and {ANR-17-EURE-0010} (Investissements d'Avenir program).
 


\bibliography{sn-bibliography.bib}


\begin{thebibliography}{41}
\ifx \bisbn   \undefined \def \bisbn  #1{ISBN #1}\fi
\ifx \binits  \undefined \def \binits#1{#1}\fi
\ifx \bauthor  \undefined \def \bauthor#1{#1}\fi
\ifx \batitle  \undefined \def \batitle#1{#1}\fi
\ifx \bjtitle  \undefined \def \bjtitle#1{#1}\fi
\ifx \bvolume  \undefined \def \bvolume#1{\textbf{#1}}\fi
\ifx \byear  \undefined \def \byear#1{#1}\fi
\ifx \bissue  \undefined \def \bissue#1{#1}\fi
\ifx \bfpage  \undefined \def \bfpage#1{#1}\fi
\ifx \blpage  \undefined \def \blpage #1{#1}\fi
\ifx \burl  \undefined \def \burl#1{\textsf{#1}}\fi
\ifx \doiurl  \undefined \def \doiurl#1{\url{https://doi.org/#1}}\fi
\ifx \betal  \undefined \def \betal{\textit{et al.}}\fi
\ifx \binstitute  \undefined \def \binstitute#1{#1}\fi
\ifx \binstitutionaled  \undefined \def \binstitutionaled#1{#1}\fi
\ifx \bctitle  \undefined \def \bctitle#1{#1}\fi
\ifx \beditor  \undefined \def \beditor#1{#1}\fi
\ifx \bpublisher  \undefined \def \bpublisher#1{#1}\fi
\ifx \bbtitle  \undefined \def \bbtitle#1{#1}\fi
\ifx \bedition  \undefined \def \bedition#1{#1}\fi
\ifx \bseriesno  \undefined \def \bseriesno#1{#1}\fi
\ifx \blocation  \undefined \def \blocation#1{#1}\fi
\ifx \bsertitle  \undefined \def \bsertitle#1{#1}\fi
\ifx \bsnm \undefined \def \bsnm#1{#1}\fi
\ifx \bsuffix \undefined \def \bsuffix#1{#1}\fi
\ifx \bparticle \undefined \def \bparticle#1{#1}\fi
\ifx \barticle \undefined \def \barticle#1{#1}\fi
\bibcommenthead
\ifx \bconfdate \undefined \def \bconfdate #1{#1}\fi
\ifx \botherref \undefined \def \botherref #1{#1}\fi
\ifx \url \undefined \def \url#1{\textsf{#1}}\fi
\ifx \bchapter \undefined \def \bchapter#1{#1}\fi
\ifx \bbook \undefined \def \bbook#1{#1}\fi
\ifx \bcomment \undefined \def \bcomment#1{#1}\fi
\ifx \oauthor \undefined \def \oauthor#1{#1}\fi
\ifx \citeauthoryear \undefined \def \citeauthoryear#1{#1}\fi
\ifx \endbibitem  \undefined \def \endbibitem {}\fi
\ifx \bconflocation  \undefined \def \bconflocation#1{#1}\fi
\ifx \arxivurl  \undefined \def \arxivurl#1{\textsf{#1}}\fi
\csname PreBibitemsHook\endcsname

\bibitem[\protect\citeauthoryear{Arsenis and
  Kleinberg}{2022}]{arsenis2022individual}
\begin{bchapter}
\bauthor{\bsnm{Arsenis}, \binits{M.}},
\bauthor{\bsnm{Kleinberg}, \binits{R.}}:
\bctitle{Individual fairness in prophet inequalities}.
In: \bbtitle{Proceedings of the 23rd ACM Conference on Economics and
  Computation},
pp. \bfpage{245}--\blpage{245}
(\byear{2022})
\end{bchapter}
\endbibitem

\bibitem[\protect\citeauthoryear{Alaei}{2014}]{Alaei2011}
\begin{barticle}
\bauthor{\bsnm{Alaei}, \binits{S.}}:
\batitle{Bayesian combinatorial auctions: Expanding single buyer mechanisms to
  many buyers}.
\bjtitle{SIAM Journal on Computing}
\bvolume{43}(\bissue{2}),
\bfpage{930}--\blpage{972}
(\byear{2014})
\end{barticle}
\endbibitem

\bibitem[\protect\citeauthoryear{Abels et~al.}{2023}]{abels2023prophet}
\begin{bchapter}
\bauthor{\bsnm{Abels}, \binits{A.}},
\bauthor{\bsnm{Pitschmann}, \binits{E.}},
\bauthor{\bsnm{Schmand}, \binits{D.}}:
\bctitle{Prophet inequalities over time}.
In: \bbtitle{Proceedings of the 24th ACM Conference on Economics and
  Computation},
pp. \bfpage{1}--\blpage{20}
(\byear{2023})
\end{bchapter}
\endbibitem

\bibitem[\protect\citeauthoryear{Correa et~al.}{2022}]{correa2022two}
\begin{barticle}
\bauthor{\bsnm{Correa}, \binits{J.}},
\bauthor{\bsnm{Cristi}, \binits{A.}},
\bauthor{\bsnm{Epstein}, \binits{B.}},
\bauthor{\bsnm{Soto}, \binits{J.}}:
\batitle{The two-sided game of googol}.
\bjtitle{Journal of Machine Learning Research}
\bvolume{23}(\bissue{113}),
\bfpage{1}--\blpage{37}
(\byear{2022})
\end{barticle}
\endbibitem

\bibitem[\protect\citeauthoryear{Correa et~al.}{2021}]{correa2021posted}
\begin{barticle}
\bauthor{\bsnm{Correa}, \binits{J.}},
\bauthor{\bsnm{Foncea}, \binits{P.}},
\bauthor{\bsnm{Hoeksma}, \binits{R.}},
\bauthor{\bsnm{Oosterwijk}, \binits{T.}},
\bauthor{\bsnm{Vredeveld}, \binits{T.}}:
\batitle{Posted price mechanisms and optimal threshold strategies for random
  arrivals}.
\bjtitle{Mathematics of operations research}
\bvolume{46}(\bissue{4}),
\bfpage{1452}--\blpage{1478}
(\byear{2021})
\end{barticle}
\endbibitem

\bibitem[\protect\citeauthoryear{Correa et~al.}{2019}]{correa2019pricing}
\begin{barticle}
\bauthor{\bsnm{Correa}, \binits{J.}},
\bauthor{\bsnm{Foncea}, \binits{P.}},
\bauthor{\bsnm{Pizarro}, \binits{D.}},
\bauthor{\bsnm{Verdugo}, \binits{V.}}:
\batitle{From pricing to prophets, and back!}
\bjtitle{Operations Research Letters}
\bvolume{47}(\bissue{1}),
\bfpage{25}--\blpage{29}
(\byear{2019})
\end{barticle}
\endbibitem

\bibitem[\protect\citeauthoryear{Chawla et~al.}{2010a}]{chawla2010multi}
\begin{bchapter}
\bauthor{\bsnm{Chawla}, \binits{S.}},
\bauthor{\bsnm{Hartline}, \binits{J.D.}},
\bauthor{\bsnm{Malec}, \binits{D.L.}},
\bauthor{\bsnm{Sivan}, \binits{B.}}:
\bctitle{Multi-parameter mechanism design and sequential posted pricing}.
In: \bbtitle{Proceedings of the Forty-second ACM Symposium on Theory of
  Computing},
pp. \bfpage{311}--\blpage{320}
(\byear{2010})
\end{bchapter}
\endbibitem

\bibitem[\protect\citeauthoryear{Chawla et~al.}{2010b}]{Chawla2010}
\begin{bchapter}
\bauthor{\bsnm{Chawla}, \binits{S.}},
\bauthor{\bsnm{Hartline}, \binits{J.D.}},
\bauthor{\bsnm{Malec}, \binits{D.L.}},
\bauthor{\bsnm{Sivan}, \binits{B.}}:
\bctitle{Multi-parameter mechanism design and sequential posted pricing}.
In: \bbtitle{STOC 2010},
pp. \bfpage{311}--\blpage{320}
(\byear{2010})
\end{bchapter}
\endbibitem

\bibitem[\protect\citeauthoryear{Disser et~al.}{2020}]{disser2020hiring}
\begin{barticle}
\bauthor{\bsnm{Disser}, \binits{Y.}},
\bauthor{\bsnm{Fearnley}, \binits{J.}},
\bauthor{\bsnm{Gairing}, \binits{M.}},
\bauthor{\bsnm{G{\"o}bel}, \binits{O.}},
\bauthor{\bsnm{Klimm}, \binits{M.}},
\bauthor{\bsnm{Schmand}, \binits{D.}},
\bauthor{\bsnm{Skopalik}, \binits{A.}},
\bauthor{\bsnm{T{\"o}nnis}, \binits{A.}}:
\batitle{Hiring secretaries over time: The benefit of concurrent employment}.
\bjtitle{Mathematics of Operations Research}
\bvolume{45}(\bissue{1}),
\bfpage{323}--\blpage{352}
(\byear{2020})
\end{barticle}
\endbibitem

\bibitem[\protect\citeauthoryear{D\"{u}tting et~al.}{2020}]{Duetting2020}
\begin{barticle}
\bauthor{\bsnm{D\"{u}tting}, \binits{P.}},
\bauthor{\bsnm{Feldman}, \binits{M.}},
\bauthor{\bsnm{Kesselheim}, \binits{T.}},
\bauthor{\bsnm{Lucier}, \binits{B.}}:
\batitle{Prophet inequalities made easy: Stochastic optimization by pricing
  nonstochastic inputs}.
\bjtitle{SIAM Journal on Computing}
\bvolume{49}(\bissue{3}),
\bfpage{540}--\blpage{582}
(\byear{2020})
\end{barticle}
\endbibitem

\bibitem[\protect\citeauthoryear{Dynkin}{1963}]{D63}
\begin{barticle}
\bauthor{\bsnm{Dynkin}, \binits{E.B.}}:
\batitle{The optimum choice of the instant for stopping a markov process}.
\bjtitle{Soviet Math. Dokl.}
\bvolume{4},
\bfpage{627}--\blpage{629}
(\byear{1963})
\end{barticle}
\endbibitem

\bibitem[\protect\citeauthoryear{Ezra et~al.}{2021}]{ezra2021prophet}
\begin{bchapter}
\bauthor{\bsnm{Ezra}, \binits{T.}},
\bauthor{\bsnm{Feldman}, \binits{M.}},
\bauthor{\bsnm{Kupfer}, \binits{R.}}:
\bctitle{Prophet inequality with competing agents}.
In: \bbtitle{Algorithmic Game Theory: 14th International Symposium, SAGT 2021,
  Aarhus, Denmark, September 21--24, 2021, Proceedings 14},
pp. \bfpage{112}--\blpage{123}
(\byear{2021}).
\bcomment{Springer}
\end{bchapter}
\endbibitem

\bibitem[\protect\citeauthoryear{Ferguson}{1989}]{F89}
\begin{barticle}
\bauthor{\bsnm{Ferguson}, \binits{T.S.}}:
\batitle{Who solved the secretary problem?}
\bjtitle{Statist. Sci.}
\bvolume{4},
\bfpage{282}--\blpage{296}
(\byear{1989})
\end{barticle}
\endbibitem

\bibitem[\protect\citeauthoryear{Gilbert and
  Mosteller}{1966}]{gilbert1966recognizing}
\begin{barticle}
\bauthor{\bsnm{Gilbert}, \binits{J.P.}},
\bauthor{\bsnm{Mosteller}, \binits{F.}}:
\batitle{Recognizing the maximum of a sequence}.
\bjtitle{Journal of the American Statistical Association}
\bvolume{61}(\bissue{313}),
\bfpage{35}--\blpage{73}
(\byear{1966})
\end{barticle}
\endbibitem

\bibitem[\protect\citeauthoryear{Gensbittel et~al.}{2024}]{GPR24}
\begin{barticle}
\bauthor{\bsnm{Gensbittel}, \binits{F.}},
\bauthor{\bsnm{Pizarro}, \binits{D.}},
\bauthor{\bsnm{Renault}, \binits{J.}}:
\batitle{Competition and recall in selection problems}.
\bjtitle{Dynamic Games and Applications}
\bvolume{14}(\bissue{4}),
\bfpage{806}--\blpage{845}
(\byear{2024})
\end{barticle}
\endbibitem

\bibitem[\protect\citeauthoryear{Gallego and
  Segev}{2022}]{gallego2022constructive}
\begin{botherref}
\oauthor{\bsnm{Gallego}, \binits{G.}},
\oauthor{\bsnm{Segev}, \binits{D.}}:
A constructive prophet inequality approach to the adaptive probemax problem.
arXiv preprint arXiv:2210.07556
(2022)
\end{botherref}
\endbibitem

\bibitem[\protect\citeauthoryear{Goyal and Udwani}{2023}]{goyal2023online}
\begin{barticle}
\bauthor{\bsnm{Goyal}, \binits{V.}},
\bauthor{\bsnm{Udwani}, \binits{R.}}:
\batitle{Online matching with stochastic rewards: Optimal competitive ratio via
  path-based formulation}.
\bjtitle{Operations Research}
\bvolume{71}(\bissue{2}),
\bfpage{563}--\blpage{580}
(\byear{2023})
\end{barticle}
\endbibitem

\bibitem[\protect\citeauthoryear{Hill and Kertz}{1982}]{hill1982comparisons}
\begin{botherref}
\oauthor{\bsnm{Hill}, \binits{T.P.}},
\oauthor{\bsnm{Kertz}, \binits{R.P.}}:
Comparisons of stop rule and supremum expectations of iid random variables.
The Annals of Probability,
336--345
(1982)
\end{botherref}
\endbibitem

\bibitem[\protect\citeauthoryear{Hajiaghayi
  et~al.}{2007a}]{hajiaghayi2007automated}
\begin{bchapter}
\bauthor{\bsnm{Hajiaghayi}, \binits{M.T.}},
\bauthor{\bsnm{Kleinberg}, \binits{R.}},
\bauthor{\bsnm{Sandholm}, \binits{T.}}:
\bctitle{Automated online mechanism design and prophet inequalities}.
In: \bbtitle{AAAI},
vol. \bseriesno{7},
pp. \bfpage{58}--\blpage{65}
(\byear{2007})
\end{bchapter}
\endbibitem

\bibitem[\protect\citeauthoryear{Hajiaghayi et~al.}{2007b}]{Hajiaghayi2007}
\begin{bchapter}
\bauthor{\bsnm{Hajiaghayi}, \binits{M.T.}},
\bauthor{\bsnm{Kleinberg}, \binits{R.}},
\bauthor{\bsnm{Sandholm}, \binits{T.}}:
\bctitle{Automated online mechanism design and prophet inequalities}.
In: \bbtitle{AAAI 2007},
pp. \bfpage{58}--\blpage{65}
(\byear{2007})
\end{bchapter}
\endbibitem

\bibitem[\protect\citeauthoryear{Huang and Zhang}{2020}]{huang2020online}
\begin{bchapter}
\bauthor{\bsnm{Huang}, \binits{Z.}},
\bauthor{\bsnm{Zhang}, \binits{Q.}}:
\bctitle{Online primal dual meets online matching with stochastic rewards:
  configuration lp to the rescue}.
In: \bbtitle{Proceedings of the 52nd Annual ACM SIGACT Symposium on Theory of
  Computing},
pp. \bfpage{1153}--\blpage{1164}
(\byear{2020})
\end{bchapter}
\endbibitem

\bibitem[\protect\citeauthoryear{Immorlica et~al.}{2006}]{IKM06}
\begin{bchapter}
\bauthor{\bsnm{Immorlica}, \binits{N.}},
\bauthor{\bsnm{Kleinberg}, \binits{R.}},
\bauthor{\bsnm{Mahdian}, \binits{M.}}:
\bctitle{Secretary problems with competing employers}.
In: \bbtitle{International Workshop on Internet and Network Economics},
pp. \bfpage{389}--\blpage{400}
(\byear{2006}).
\bcomment{Springer}
\end{bchapter}
\endbibitem

\bibitem[\protect\citeauthoryear{Karlin and Lei}{2015}]{KL15}
\begin{bchapter}
\bauthor{\bsnm{Karlin}, \binits{A.}},
\bauthor{\bsnm{Lei}, \binits{E.}}:
\bctitle{On a competitive secretary problem}.
In: \bbtitle{Proceedings of the AAAI Conference on Artificial Intelligence},
vol. \bseriesno{29}
(\byear{2015})
\end{bchapter}
\endbibitem

\bibitem[\protect\citeauthoryear{Krengel and
  Sucheston}{1977}]{krengel1977semiamarts}
\begin{barticle}
\bauthor{\bsnm{Krengel}, \binits{U.}},
\bauthor{\bsnm{Sucheston}, \binits{L.}}:
\batitle{Semiamarts and finite values}.
\bjtitle{Bulletin of the American Mathematical Society}
\bvolume{83}(\bissue{4}),
\bfpage{745}--\blpage{747}
(\byear{1977})
\end{barticle}
\endbibitem

\bibitem[\protect\citeauthoryear{Kleinberg and
  Weinberg}{2012a}]{kleinberg2012matroid}
\begin{bchapter}
\bauthor{\bsnm{Kleinberg}, \binits{R.}},
\bauthor{\bsnm{Weinberg}, \binits{S.M.}}:
\bctitle{Matroid prophet inequalities}.
In: \bbtitle{Proceedings of the Forty-fourth Annual ACM Symposium on Theory of
  Computing},
pp. \bfpage{123}--\blpage{136}
(\byear{2012})
\end{bchapter}
\endbibitem

\bibitem[\protect\citeauthoryear{Kleinberg and Weinberg}{2012b}]{Kleinberg2012}
\begin{bchapter}
\bauthor{\bsnm{Kleinberg}, \binits{R.}},
\bauthor{\bsnm{Weinberg}, \binits{S.M.}}:
\bctitle{Matroid prophet inequalities}.
In: \bbtitle{STOC 2012},
pp. \bfpage{123}--\blpage{136}
(\byear{2012})
\end{bchapter}
\endbibitem

\bibitem[\protect\citeauthoryear{Lindley}{1961}]{L61}
\begin{barticle}
\bauthor{\bsnm{Lindley}, \binits{D.V.}}:
\batitle{Dynamic programming and decision theory}.
\bjtitle{J. Roy. Statist. Soc. Ser. C Appl. Statist.}
\bvolume{10},
\bfpage{39}--\blpage{52}
(\byear{1961})
\end{barticle}
\endbibitem

\bibitem[\protect\citeauthoryear{Nuti and
  Vondr{\'a}k}{2023}]{nuti2023secretary}
\begin{bchapter}
\bauthor{\bsnm{Nuti}, \binits{P.}},
\bauthor{\bsnm{Vondr{\'a}k}, \binits{J.}}:
\bctitle{Secretary problems: The power of a single sample}.
In: \bbtitle{Proceedings of the 2023 Annual ACM-SIAM Symposium on Discrete
  Algorithms (SODA)},
pp. \bfpage{2015}--\blpage{2029}
(\byear{2023}).
\bcomment{SIAM}
\end{bchapter}
\endbibitem

\bibitem[\protect\citeauthoryear{Perez-Salazar et~al.}{2024}]{perez2024robust}
\begin{botherref}
\oauthor{\bsnm{Perez-Salazar}, \binits{S.}},
\oauthor{\bsnm{Singh}, \binits{M.}},
\oauthor{\bsnm{Toriello}, \binits{A.}}:
Robust online selection with uncertain offer acceptance.
Mathematics of Operations Research
(2024)
\end{botherref}
\endbibitem

\bibitem[\protect\citeauthoryear{Perez-Salazar et~al.}{2025}]{perez2025iid}
\begin{botherref}
\oauthor{\bsnm{Perez-Salazar}, \binits{S.}},
\oauthor{\bsnm{Singh}, \binits{M.}},
\oauthor{\bsnm{Toriello}, \binits{A.}}:
The iid prophet inequality with limited flexibility.
Mathematics of Operations Research
(2025)
\end{botherref}
\endbibitem

\bibitem[\protect\citeauthoryear{Perez-Salazar and
  Verdugo}{2024}]{perez2024optimal}
\begin{botherref}
\oauthor{\bsnm{Perez-Salazar}, \binits{S.}},
\oauthor{\bsnm{Verdugo}, \binits{V.}}:
Optimal guarantees for online selection over time.
arXiv preprint arXiv:2408.11224
(2024)
\end{botherref}
\endbibitem

\bibitem[\protect\citeauthoryear{Ramsey}{2024}]{R24}
\begin{botherref}
\oauthor{\bsnm{Ramsey}, \binits{D.}}:
A stackelberg game based on the secretary problem: Optimal response is history
  dependent.
arXiv preprint arXiv:2409.04153
(2024)
\end{botherref}
\endbibitem

\bibitem[\protect\citeauthoryear{Rose}{1982}]{rose1982problem}
\begin{barticle}
\bauthor{\bsnm{Rose}, \binits{J.S.}}:
\batitle{A problem of optimal choice and assignment}.
\bjtitle{Operations Research}
\bvolume{30}(\bissue{1}),
\bfpage{172}--\blpage{181}
(\byear{1982})
\end{barticle}
\endbibitem

\bibitem[\protect\citeauthoryear{Rinott and Samuel-Cahn}{1987}]{RC87}
\begin{barticle}
\bauthor{\bsnm{Rinott}, \binits{Y.}},
\bauthor{\bsnm{Samuel-Cahn}, \binits{E.}}:
\batitle{Comparisons of optimal stopping values and prophet inequalities for
  negatively dependent random variables}.
\bjtitle{The Annals of Statistics}
\bvolume{15}(\bissue{4}),
\bfpage{1482}--\blpage{1490}
(\byear{1987})
\end{barticle}
\endbibitem

\bibitem[\protect\citeauthoryear{Rinott and Samuel-Cahn}{1991}]{RC91}
\begin{barticle}
\bauthor{\bsnm{Rinott}, \binits{Y.}},
\bauthor{\bsnm{Samuel-Cahn}, \binits{E.}}:
\batitle{Orderings of optimal stopping values and prophet inequalities for
  certain multivariate distributions}.
\bjtitle{Journal of multivariate analysis}
\bvolume{37}(\bissue{1}),
\bfpage{104}--\blpage{114}
(\byear{1991})
\end{barticle}
\endbibitem

\bibitem[\protect\citeauthoryear{Rinott and Samuel-Cahn}{1992}]{RC92}
\begin{botherref}
\oauthor{\bsnm{Rinott}, \binits{Y.}},
\oauthor{\bsnm{Samuel-Cahn}, \binits{E.}}:
Optimal stopping values and prophet inequalities for some dependent random
  variables.
Lecture Notes-Monograph Series,
343--358
(1992)
\end{botherref}
\endbibitem

\bibitem[\protect\citeauthoryear{Rubinstein et~al.}{2020}]{RWW20}
\begin{bchapter}
\bauthor{\bsnm{Rubinstein}, \binits{A.}},
\bauthor{\bsnm{Wang}, \binits{J.Z.}},
\bauthor{\bsnm{Weinberg}, \binits{S.M.}}:
\bctitle{Optimal single-choice prophet inequalities from samples}.
In: \bbtitle{11th Innovations in Theoretical Computer Science Conference}
(\byear{2020})
\end{bchapter}
\endbibitem

\bibitem[\protect\citeauthoryear{Samuel-Cahn}{1984}]{samuel1984comparison}
\begin{botherref}
\oauthor{\bsnm{Samuel-Cahn}, \binits{E.}}:
Comparison of threshold stop rules and maximum for independent nonnegative
  random variables.
the Annals of Probability,
1213--1216
(1984)
\end{botherref}
\endbibitem

\bibitem[\protect\citeauthoryear{Smith}{1975}]{smith1975secretary}
\begin{barticle}
\bauthor{\bsnm{Smith}, \binits{M.}}:
\batitle{A secretary problem with uncertain employment}.
\bjtitle{Journal of applied probability}
\bvolume{12}(\bissue{3}),
\bfpage{620}--\blpage{624}
(\byear{1975})
\end{barticle}
\endbibitem

\bibitem[\protect\citeauthoryear{Tamaki}{1991}]{tamaki1991secretary}
\begin{barticle}
\bauthor{\bsnm{Tamaki}, \binits{M.}}:
\batitle{A secretary problem with uncertain employment and best choice of
  available candidates}.
\bjtitle{Operations Research}
\bvolume{39}(\bissue{2}),
\bfpage{274}--\blpage{284}
(\byear{1991})
\end{barticle}
\endbibitem

\bibitem[\protect\citeauthoryear{Vanderbei}{2012}]{vanderbei2012postdoc}
\begin{botherref}
\oauthor{\bsnm{Vanderbei}, \binits{R.J.}}:
The postdoc variant of the secretary problem.
Technical report,
Tech. Report
(2012)
\end{botherref}
\endbibitem

\end{thebibliography}

\end{document}